\pgfplotsset{compat=1.18}
\newcommand{\ket}[1]{ | #1 \rangle }
\newcommand{\bra}[1]{ \langle #1 | }
\newcommand{\tr}{\operatorname{tr}}
\newcommand{\F}{ \mathbb F}
\theoremstyle{definition}
\newtheorem{theorem}{Theorem}[section]
\newtheorem{definition}[theorem]{Definition}
\newtheorem{remark}[theorem]{Remark}
\newtheorem{example}[theorem]{Example}
\newtheorem{lemma}[theorem]{Lemma}
\newtheorem{corollary}[theorem]{Corollary}
\newtheorem{proposition}[theorem]{Proposition}
\newcommand\reallywidehat[1]{%
\savestack{\tmpbox}{\stretchto{%
  \scaleto{%
    \scalerel*[\widthof{\ensuremath{#1}}]{\kern-.6pt\bigwedge\kern-.6pt}%
    {\rule[-\textheight/2]{1ex}{\textheight}}
  }{\textheight}%
}{0.5ex}}%
\stackon[1pt]{#1}{\tmpbox}%
}
\begin{document}


\title{CSS-T Codes over Binary Extension Fields and Their Physical Foundations}
\author{Jasper J. Postema, Fabrizio Conca and Alberto Ravagnani}
\affiliation{Eindhoven University of Technology, the Netherlands}

\maketitle

\section*{Abstract}
We investigate the class of CSS-$T$ codes, a family of quantum error-correcting codes that allows for a transversal $T$-gate. We extend the definition of a pair of linear codes $(C_1,C_2)$, $C_i\subseteq\mathbb{F}_q^n$, forming a $q$-ary CSS-$T$ code over binary extension fields, and demonstrate the existence of asymptotically good sequences of LDPC CSS-$T$ codes over any such field.

\section*{Introduction}

Quantum computers hold the potential to execute certain algorithms more efficiently than classical computers, such as Shor's algorithm for prime factorisation or Grover's search algorithm \cite{primefac1,primefac2,grover}. However, these devices are highly susceptible to noise and errors. Various error mitigation protocols have been proposed \cite{zne, vqocsse, virtualdistillation, digitalzne, review}, but quantum error correction is still regarded as necessary to attain high-fidelity quantum computing in the future \cite{gottesmanphd, errorfigure, errorfigure2}. 

In 1995, Calderbank and Shor \cite{CS}, and independently Steane \cite{S}, proved that quantum error-correcting codes do exist. Their construction, known as the Calderbank-Shor-Steane (CSS) codes, derives a quantum code from two classical linear codes $C_2\subseteq C_1\subseteq\mathbb{F}_q^n$. Since then, many quantum codes have been studied, often derived from classical linear codes. Some quantum constructions have even been introduced as self-contained ansätze, without explicitly relying on underlying classical codes \cite{bicycleansatz,generalisedbicycleansatz, bravyi}.

Quantum computers implement logic gates through unitary operations, which can be classified according to the Clifford hierarchy. Most CSS codes allow transversal gates only from the first two levels of this hierarchy, though a gate from the third level is required to achieve a universal gate set, i.e., a set of unitaries from which any arbitrary unitary gate can be derived \cite{universalgateset}. The Eastin-Knill Theorem, however, prevents any error-correcting stabiliser code from transversally implementing a universal gate set \cite{eastinknill}, i.e., in a fault-tolerant manner that avoids propagating errors across qubits. Since most gates are non-Clifford, this motivates the search for codes with a transversal non-Clifford gate, such as a $T$-gate.

Recently, the notion of a CSS-$T$ code
was proposed in~\cite{csstoriginal,css-t}. A CSS-$T$ code is a binary CSS code that allows the $T$-gate to be executed transversally, enabling fault-tolerant implementation of logical non-Clifford gates and reducing overhead in quantum computation. It has been an open question whether an asymptotically good family of CSS-$T$ codes exists, as in the case of CSS codes~\cite{proofasymp}. This question was answered in \cite{goodcsst}, where the authors showed that
a binary CSS code can be transformed into a CSS-$T$ code of double the length, and used this fact
to prove the existence of asymptotically good sequences of (LDPC) binary CSS-$T$ codes \cite{goodcsst}.

The definition of a CSS-$T$ code proposed in~\cite{csstoriginal, css-t} and studied in~\cite{goodcsst} relies heavily on the base field being $\F_2$, and it is not clear how to extend it to larger finite fields. An attempt was made in~\cite{alberto}, where a definition of a $q$-ary CSS-$T$ code was proposed and investigated from a mathematical viewpoint, though without offering a physical motivation.

In this paper, we propose a physically grounded definition of
CSS-$T$ codes over any binary field extension $\mathbb{F}_{2^s}$, which differs from the one in~\cite{alberto}. Our definition is inspired by the presence of the field trace in the definition of a $q$-ary $T$-gate. We then study the fundamental properties of CSS-$T$ codes over binary field extensions, and show both differences and analogies with the binary case. We also extend the approach of~\cite{goodcsst}, and show that
there exist
asymptotically good sequences of LDPC CSS-$T$ codes over any field extension of the form $\mathbb{F}_{2^s}$.

The remainder of this paper is structured as follows: In Section~\ref{sec:qc}, we provide an introduction to quantum computing. Classical and quantum error-correcting codes are introduced in Section~\ref{sec:css}, along with CSS-$T$ codes over $\mathbb{F}_2$ and an algebraic characterisation. We use this in Section~\ref{sec:reedmuller} to revisit constructions of
CSS-$T$ codes using Reed-Muller codes efficiently. In Section~\ref{sec:qarycsst}, we propose a new definition of CSS-$T$ codes
over binary field extensions
$\mathbb{F}_{2^s}$. Some constructions are shown in Section~\ref{sec:applications}, including an example of CSS-$T$ codes derived from cyclic codes. In Section~\ref{sec:bounds}, we study the asymptotic behaviour of long CSS-$T$ codes and prove that asymptotically good CSS-$T$~codes exist over any field of characteristic $2$.

\section{Quantum computing}\label{sec:qc}

This section contains a brief self-contained introduction to quantum computing and establishes the notation for the rest of the paper. Let $q= p^s$ be a prime power and $\mathbb{F}_q$ the finite field with $q$ elements. Let $\mathbb{C}$ denote the complex field, and $\dagger$ Hermitian conjugation. A ket-vector $\ket{\cdot}$ denotes a column vector, while a bra-vector $\bra{\cdot}=\ket{\cdot}^\dagger$ denotes a row vector. Let $\{\ket{u_0},\ldots, \ket{u_{q-1}}\}$ denote an orthonormal basis 
    \begin{equation*}
        \ket{u_0} = \begin{pmatrix}1\\0\\\vdots\\0\end{pmatrix}, \quad \ket{u_1} = \begin{pmatrix}0\\1\\\vdots\\0\end{pmatrix}, \quad\cdots,\quad \ket{u_{q-1}} = \begin{pmatrix}0\\0\\\vdots\\1\end{pmatrix}
    \end{equation*}
    for the Hilbert space $\mathfrak{H}_q\cong\mathbb{C}^{\otimes q}$, expressed as
    \begin{equation*}
        \mathfrak{H}_q = \bigg\{\sum_{i=0}^{q-1} \alpha_i \ket{u_i} \mid \alpha_0, \ldots, \alpha_{q-1} \in \mathbb{C} \bigg\}.
    \end{equation*}
    This Hilbert space is equipped with the standard inner product
    \begin{equation*}
        \langle\alpha\, | \,\beta\rangle = \sum_{i=0}^{q-1} \overline{\alpha}_i \beta_i \quad \text{ for all } \alpha, \beta \in  \mathfrak{H}_q, 
    \end{equation*}
    which also defines the norm of any vector:
    \[v=\sum_{i=0}^{q-1} \alpha_i \ket{u_i}, \quad \| \ket{v} \| = \sqrt{\langle v \,|\, v\rangle} = \sqrt{\sum_{i=0}^{q-1}{|\alpha_i|^2} }.\]
    
\begin{definition}[Qudits]
    A \textit{qudit} is an element $\ket{Q}\in\mathfrak{H}_q$ with norm $\|\ket{Q}\| = 1$. If $q=2$, a qudit is called a \textit{qubit}. For any qudit $\ket{Q}=\sum_{i=0}^{q-1} Q_i \ket{u_i}$, the  $Q_i$'s are called the \textit{probability amplitudes}.
\end{definition}

From the point of view of quantum information theory, any quantum system described by a qudit $\ket{Q}$ has a probability $|Q_i|^2$ to collapse into the $\ket{u_i}$-state upon measurement in the appropriate basis (of which the so-called $Z$-basis is the standard), according to the \textit{Born rule}.

For qubits (i.e., when $q=2$), any single-qubit unitary gate can be decomposed into a complete $2\times 2$-basis called the \textit{Pauli basis}, consisting of the following matrices
\begin{equation*}
    I=\begin{pmatrix}
        1 & 0\\
        0 & 1
    \end{pmatrix}, \quad
    X = \begin{pmatrix}
        0 & 1\\
        1 & 0
    \end{pmatrix},\quad
    Y = \begin{pmatrix}
        0 & -i\\
        i & 0
    \end{pmatrix},\quad
    Z = \begin{pmatrix}
        1 & 0\\
        0 & -1
    \end{pmatrix},
\end{equation*}
which satisfy the commutation relations 
\[[X,Y]=2iZ, \quad [Y,Z] = 2iX, \quad [X,Z]=-2iY.\]

To generalize the Pauli group to qudits, we first need a definition that will prove crucial throughout the remainder of this paper as well.

\begin{definition}
The \emph{absolute trace map} of the field extension $\mathbb F_q / \mathbb F_p$, $q=p^s$, is 
    \begin{equation*}
    \operatorname{tr}:\mathbb F_q\to\mathbb F_p, \quad \operatorname{tr}(x) = \sum_{i=0}^{s-1} x^{p^i}.
\end{equation*}
\end{definition}
Pauli operators admit the following generalisation for any qudit defined over a finite field. For $\lambda\in\mathbb{F}_q$, let
\begin{equation}\label{eq:ztrace}
    X^{(\lambda)} = \sum_{x\in\mathbb{F}_q} \ket{x+\lambda}\bra{x},\quad Z^{(\lambda)}=\sum_{x\in\mathbb{F}_q} \zeta^{\text{tr}(\lambda x)}\ket{x}\bra{x},
\end{equation}
where $\zeta = \text{exp}\left(\frac{2\pi i}{p}\right)$ is a $p$-th root of unity. The presence of the trace map in the definition of the $q$-ary $Z$ operators will be crucial in the following sections of this paper. 
It can easily be shown that for any $\mu,\nu\in\mathbb{F}_q$,
\begin{equation*}
    X^{(\mu)}X^{(\nu)}=X^{(\mu+\nu)}\quad\text{and}\quad Z^{(\mu)}Z^{(\nu)}=Z^{(\mu+\nu)}.
\end{equation*}
The group generated by 
\[
\{1,\zeta,\ldots,\zeta^{p-1} \}\cdot\{ X^{(\lambda)}, Z^{(\lambda)} \mid \lambda \in \mathbb F_q \}
\]
is called the \emph{$q$-ary Pauli group}. The elements of this group can be written in Weyl-Heisenberg representation as follows.

\begin{definition}[Weyl-Heisenberg representation]
    For $a,b\in\mathbb{F}_q^n$, we let    \begin{equation*}
        E(a,b)=\sqrt{{\zeta}^{(a,b)}}X^aZ ^b,
    \end{equation*}
    where $X^a=X^{(a_1)}\otimes\cdots\otimes X^{(a_n)}$, and  for $Z^b=Z^{(b_1)}\otimes\cdots\otimes Z^{(b_n)}$.
\end{definition}

The elements of the $q$-ary Pauli group  operate independently only on single qudits, but the generation of \textit{entanglement} among qudits is a crucial component of any useful quantum algorithm. A two-qudit system $\ket{\psi}$ is called \textit{separable} if it can be written in the form
\begin{equation*}
    \ket{\psi}=\ket{\varphi_1}\otimes\ket{\varphi_2},
\end{equation*}
otherwise it is called \textit{entangled}. Examples of entangled qubits are the four Bell states:
\begin{equation*}
    \ket{\Phi^\pm} = \frac{\ket{00}\pm\ket{11}}{\sqrt{2}} \quad\text{and}\quad \ket{\Psi^\pm} = \frac{\ket{01}\pm\ket{10}}{\sqrt{2}}.
\end{equation*}

Gates that can entangle multiple qudits can be defined iteratively from the Pauli group.

\begin{definition}[Clifford hierarchy]
    The \emph{Clifford hierarchy} is the
    nested sequence of
    subsets of unitary operators defined recursively as follows:
    \begin{itemize}
        \item the first level $\mathcal K^{(1)}$ is taken to be the $q$-ary Pauli group,
        \item for any $n\geq 1$ we define the $(n+1)$-th level to be the normaliser of the $n$-th level, i.e.
    \begin{equation}\label{eq:clifford}
        \mathcal{K}^{(n+1)} = \{U \text{ unitary} \mid UPU^\dagger\in\mathcal{K}^{(n)} \text{ for all }P\in\mathcal{K}^{(n)}\}.
    \end{equation}
    \end{itemize}
    The first level $\mathcal{K}^{(1)}$ is the Pauli group, while the second level $\mathcal{K}^{(2)}$ is called the \textit{Clifford group}. Note that, for $n\geq 3$, $\mathcal{K}^{(n)}$ is no longer a group, although in $\mathcal{K}^{(3)}$ the subset of diagonal operators does form a group still.
\end{definition}

Any quantum circuit that only employs unitary gates from the Pauli and Clifford groups can be efficiently simulated classically in polynomial time, according to the Gottesman-Knill Theorem~\cite{gottesmanknill}. A gate set with only unitary gates from those groups cannot constitute a \textit{universal gate set}, i.e. a set of gates such that any unitary can be decomposed into elements from that basis set. In particular, for $q=2$ the Solovay-Kitaev Theorem says that if any gate set is dense in $SU(2)$, it can approximate any unitary gate with a low-depth quantum circuit~\cite{solovay}. Thus, we need to supply the Pauli and Clifford groups with a unitary from the third level of the Clifford hierarchy to build a universal gate set.
\begin{example}[Universal gate set~\cite{universalgateset}]
   The set $\{\text{CNOT}, H, T\}$ is a universal gate set, whose gates are given by 
   \begin{equation*}
       \text{CNOT} = \begin{pmatrix}
           1 & 0 & 0 & 0\\
           0 & 1 & 0 & 0\\
           0 & 0 & 0 & 1\\
           0 & 0 & 1 & 0
       \end{pmatrix},\quad H =\frac{1}{\sqrt{2}}(X+Z)=\frac{1}{\sqrt{2}}\begin{pmatrix}1 & 1\\1&-1\end{pmatrix},\quad T = \begin{pmatrix}1 & 0\\0 & e^{i\pi/4}\end{pmatrix}.
   \end{equation*}
   The $T$-gate is a diagonal operator from $\mathcal{K}^{(3)}$, and therefore supplements the Clifford group $\{\text{CNOT},H\}$ to be universal.
\end{example}

\begin{example}[Universal gate set~\cite{toffolihadamard}]
    Another example of a universal gate set is $\{H, \text{CCZ}\}$. For qubits, the latter's matrix representation is
    \begin{equation*}
        \text{CCZ} = \begin{pmatrix}1 & 0 & 0 & 0 & 0 & 0 & 0 &0\\
        0 & 1 & 0 & 0 & 0 & 0 & 0 &0\\
        0 & 0 & 1 & 0 & 0 & 0 & 0 &0\\
        0 & 0 & 0 & 1 & 0 & 0 & 0 &0\\
        0 & 0 & 0 & 0 & 1 & 0 & 0 &0\\
        0 & 0 & 0 & 0 & 0 & 1 & 0 &0\\
        0 & 0 & 0 & 0 & 0 & 0 & 1 & 0\\
        0 & 0 & 0 & 0 & 0 & 0 & 0 & -1\end{pmatrix}.
    \end{equation*}
\end{example}

In the $q$-ary case, we can generalize any quantum gate located above the Pauli set in the Clifford hierarchy; see Definition~\ref{eq:clifford}. Examples include the $q$-ary $T$-gate and $q$-ary $\text{CCZ}$-gate defined as follows for any $\lambda\in\mathbb{F}_q$:
\begin{equation}\label{eq:ttrace}
        T^{(\lambda)} = \sum_{x\in\mathbb{F}_q} e^{\frac{i\pi}{4}\text{tr}(\lambda x)}\ket{x}\bra{x}\quad\text{and}\quad \text{CCZ}^{(\lambda)} = \sum_{x,y,z\in\mathbb{F}_q} \zeta^{\text{tr}(\lambda xyz)}\ket{x}\ket{y}\ket{z}\bra{x}\bra{y}\bra{z},
\end{equation}
with $\zeta$ a $p$-th root of unity. We once again highlight the appearance of the field trace, which will play a predominant role in this paper. If the superscript is omitted, it is understood that we take $\lambda=1$, i.e. $T=T^{{(1)}}$.

\section{The CSS and CSS-\texorpdfstring{$T$}{T} constructions} \label{sec:css}

CSS codes, named after Calderbank, Shor, and Steane \cite{CS,S}, are a family of quantum error-correcting codes constructed from a pair of classical linear codes. Calderbank and Shor, and independently Steane, introduced this class of codes in 1996. Their constructions are different, but equivalent. CSS codes have been mostly studied in the binary case, but we work with an arbitrary field size $q$ throughout the paper. 

We start with the general definition of quantum code. In analogy with the classical setting, one  defines a \textit{quantum encoding} as a mapping from a certain Hilbert space to a larger-dimensional one.

\begin{definition}[Quantum encoding]\label{def:encoding}
    A \emph{quantum encoding} is an injective and norm-preserving linear operator: $\Phi:\mathfrak{H}_q^{\otimes k}\to\mathfrak{H}_q^{\otimes n}$. A \emph{quantum error-correcting code} $\mathcal{Q} = \text{im}(\Phi)$ is the image of such a mapping, and it has \emph{length} $n$ and \emph{dimension} $k$. Its distance $d$ is equal to the minimum Hamming weight of any non-zero vector in $\mathcal{Q}$. Such a $\mathcal{Q}$ is referred to as a (\emph{quantum}) $[[n,k,d]]_q$-\textit{code}.
\end{definition}

An encoding $\Phi$ also determines what gates on the \textit{physical level} (i.e., unitary operations of the form $U^{\otimes n}$) correspond to a logical gate (i.e., unitary operations of the form $U_L^{\otimes k}$). An operator $\mathcal{O}$ is said to be \emph{transversal} if $\mathcal{O}^{\otimes n}$ preserves the code space. Transversal gates can be implemented on quantum hardware with very low overhead and are fault-tolerant in the sense that they cannot multiply errors among large patches of qudits. 

We now turn to CSS codes. To define them, we need some concepts from classical coding theory, which we briefly review.  

\begin{definition}
A linear code over $\mathbb{F}_q$ is an $\mathbb F_q$-linear subspace $C\subseteq\mathbb{F}_q^n$, where $n\in\mathbb{N}$ is called the code \textit{length}. The \textit{dimension} of 
$C$ is its dimension as an  $\mathbb F_q$-linear subspace of $\mathbb{F}_q^n$, often denoted by $k$. The \textit{Hamming distance} between vectors $a,b \in \mathbb F_q^n$ is defined as
\begin{equation*}
    d^{\text H}(a,b) = |\{i\in\{1,\ldots, n\}\mid a_i \neq b_i\}|.
\end{equation*}
The distance of a vector $a\in C$ from the zero vector \textbf{0} is its \textit{Hamming weight}, $\omega^\text{H}(a)=d^\text{H}(a,\textbf{0})$, which counts 
the number of nonzero entries in $a$. The minimum distance $d$ of a nonzero linear code $C$ is 
\begin{equation*}
    d(C) = \min\{\omega^\text{H}(c) \mid c\in C, \, c\neq \textbf{0}\}.
\end{equation*}
A linear code $C \subseteq \mathbb F_q^n$ having length $n$, dimension $k$ and minimum distance $d$ is called an $[n,k,d]_q$-code. The dual of a code $C$ is the vector space of  vectors that are orthogonal to $C$ with respect to the standard inner product $\langle\cdot , \cdot\rangle$, namely:
\begin{equation*}
    C^\perp = \{x\in\mathbb{F}_q^n \mid \langle c,x\rangle = 0 \text{ for all } c\in C\}.
\end{equation*}
A code $C$ is called \textit{self-orthogonal} if $C\subseteq C^\perp$, and \emph{self-dual} if $C=C^\perp$. 
\end{definition}

We are now ready to define CSS codes.

\begin{definition}[CSS code]
    Let $C_2\subseteq C_1 \subseteq \mathbb{F}_q^n$ be linear codes. Let $\smash{\zeta = \exp\left(\frac{2\pi i}{p}\right)}$ be a primitive complex $p$-th root of unity and let $\text{tr}: \mathbb{F}_q \to \mathbb{F}_p$ be the trace map. For any vector $w\in\mathbb{F}_q^n$, define the qudit state
    \begin{equation*}
        \ket{c_w} = \frac{1}{\sqrt{|C_1|}}\sum_{c\in C_1}\zeta^{\text{tr} \langle c,w \rangle}\ket{c}.
    \end{equation*}
    The \textit{Calderbank-Shor quantum code} associated with the pair $(C_1, C_2)$ is
    \begin{equation*}
        Q^\text{CS}(C_1, C_2) = \{\ket{c_w}\mid w\in C_2^\perp\},
    \end{equation*}
    and the \textit{Steane quantum code} is given by
    \begin{equation*}
        Q^\text{S}(C_1, C_2) = \{\ket{w+C_2} \mid w\in C_1\},
    \end{equation*}
    where we let $\smash{\ket{w+C_2}=\frac{1}{\sqrt{|C_2|}}\sum_{c\in C_2} \ket{w+c}}$. 
\end{definition}

It can be shown that $Q^\text{CS}(C_1, C_2) = Q^\text{S}(C_2^\perp, C_1^\perp)$, illustrating that the two constructions are equivalent. CSS codes fall under the umbrella of \textit{stabiliser codes} \cite{gottesmanphd}, with the specific constraint that stabilisers contain only $X$-type Pauli operators, or only $Z$-like Pauli operators. These codes are quantum codes in the sense of the definition of a quantum encoding according to Definition \ref{def:encoding}. Clearly, they linearly transform a string of $k$ qudits into a string of $n$ qudits, where $n$ is the code length of $C_1$ and $C_2$, and $k=\text{dim}(C_1)-\text{dim}(C_2)$; see~\cite{nielsenchuang}. Additionally, the set $\{\ket{c_w}\}$ forms an orthonormal basis \cite{nielsenchuang}, proving that CSS quantum codes are an explicit example of the quantum encoding defined in Definition~\ref{def:encoding}.

Examples of CSS codes that have received a lot of attention in the quantum computing community are \textit{topological codes}, such as the \textit{surface code}~\cite{gameofsurfacecodes}, \textit{toric code}~\cite{toric}, \textit{colour codes}~\cite{colorcodes}, and the recently proposed class of \textit{bivariate bicycle codes}, with a promising low-density parity check (LDPC) behaviour~\cite{bravyi}.

The set of transversal gates that a CSS code can implement is limited. In fact, CSS codes
can never implement a transversal gate set, as the following result shows. 

\begin{theorem}[Eastin-Knill Theorem~\cite{eastinknill}] \label{theorem:eastin-knill}
    No quantum error-correcting stabiliser code $C$ can satisfy the following two properties simultaneously:
    \begin{itemize}
        \item the code distance of $C$ is greater than 2;
        \item the code allows for a transversal universal gate set.
    \end{itemize}
\end{theorem}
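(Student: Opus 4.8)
The plan is to follow the Lie-theoretic obstruction due to Eastin and Knill, which uses the error-correction hypothesis only through the distance bound (the stabiliser structure is not actually needed for this argument). Let $P$ be the orthogonal projector onto the code space $\mathcal{Q}\cong\mathfrak{H}_q^{\otimes k}$, and let $\mathcal{T}$ be the set of transversal unitaries that preserve the code, i.e.\ operators $U=U_1\otimes\cdots\otimes U_n$ with $U_i\in U(q)$ and $UPU^\dagger=P$. Since the condition $[U,P]=0$ is closed and $U(q)^{\times n}$ is compact, $\mathcal{T}$ is a compact Lie group; sending each $U\in\mathcal{T}$ to the logical operation it induces on $\mathcal{Q}$, defined only up to a global phase, yields a continuous homomorphism $\rho$ to the projective unitary group $PU(q^k)$, so $\rho(\mathcal{T})$ is again a compact Lie group. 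A transversal universal gate set would consist of elements of $\mathcal{T}$ whose logical images generate a dense subgroup of $PU(q^k)$; as $\rho(\mathcal{T})$ is closed, this forces $\rho(\mathcal{T})=PU(q^k)$, which has positive dimension when $k\ge 1$. It therefore suffices to prove that $\rho(\mathcal{T})$ is finite.

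The core step is to rule out nontrivial one-parameter subgroups. Suppose $\rho(\mathcal{T})$ were positive-dimensional. By standard Lie theory applied to the surjection $\mathcal{T}\to\rho(\mathcal{T})$, there is an element $X$ of the Lie algebra of $\mathcal{T}$ with $d\rho(X)\ne 0$; because $\mathcal{T}\subseteq U(q)^{\times n}$, this $X$ is local, $X=\sum_{i=1}^n X_i$ with each $X_i\in\mathfrak{u}(q)$ acting only on the $i$-th qudit. Since $e^{tX}\in\mathcal{T}$ preserves the code for all $t$, differentiating $e^{tX}Pe^{-tX}=P$ at $t=0$ gives $[X,P]=0$, so the restriction of $X$ to $\mathcal{Q}$ equals $PXP=\sum_i PX_iP$. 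Here the distance hypothesis enters: a code with $d>2$ detects every error of weight at most two, in particular every weight-one operator, so the error-detection (Knill--Laflamme) conditions give $PX_iP=\lambda_i P$ for scalars $\lambda_i$. Summing, $X$ acts on $\mathcal{Q}$ as the single scalar $\sum_i\lambda_i$, hence $e^{tX}$ acts on $\mathcal{Q}$ as a global phase for every $t$ and $d\rho(X)=0$ --- a contradiction. Therefore $\rho(\mathcal{T})$ is zero-dimensional and compact, hence finite, and a finite group cannot be dense in (let alone equal to) the positive-dimensional group $PU(q^k)$, so no transversal universal gate set exists.

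I expect the main obstacle to be the Lie-theoretic bookkeeping rather than the coding input. One has to verify that $\mathcal{T}$ is genuinely a closed (hence compact) Lie group, that $\rho$ is well defined as a projective representation, and---most importantly---that the phrase ``transversal universal gate set'' is taken in a sense strong enough to conclude density of $\rho(\mathcal{T})$ in $PU(q^k)$ (in the spirit of Solovay--Kitaev). By contrast, the claim that a weight-one operator acts as a scalar on a code of distance $d>2$ is routine from the operational meaning of the distance. If a fully self-contained, stabiliser-specific argument is preferred, one can instead show directly that the transversal logical operators of a stabiliser code form a finite subgroup of a fixed level of the Clifford hierarchy; but the route above is shorter and already proves the theorem as stated.
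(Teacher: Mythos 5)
Your proof is correct, but the paper itself gives no proof of this theorem: it is stated as an imported result with a citation to Eastin and Knill. Your argument is a faithful reconstruction of that original Lie-theoretic proof — compactness of the group $\mathcal{T}$ of transversal code-preserving unitaries, reduction to showing $\rho(\mathcal{T})$ has no one-parameter subgroups, locality of Lie-algebra elements $X=\sum_i X_i$, and the Knill--Laflamme error-detection identity $PX_iP=\lambda_i P$ for weight-one operators to force each such $X$ to act as a scalar on the code space. Two minor remarks. First, the argument only uses detectability of weight-one errors, so $d\geq 2$ would already suffice; the paper's $d>2$ is a stronger-than-needed hypothesis. Second, you correctly observe that the stabiliser hypothesis plays no role — Eastin--Knill holds for any quantum code of the stated distance — so the restriction to stabiliser codes in the paper's statement is likewise harmless but inessential.
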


The CSS construction inherits the following transversal logical operators under the conditions specified in parentheses:
\begin{itemize}
    \item Pauli gates $I, X, Y, Z$ (always inherited for CSS codes);
    \item CNOT gate (always inherited for CSS codes);
    \item Hadamard gate $H$ (if and only if the CSS code is symmetric in its code generators, i.e., $C_2^\perp = C_1$).
\end{itemize}

A natural question is whether there exist CSS codes that transversally implement a $T$-gate. The family of binary CSS-$T$ codes, introduced in~\cite{csstoriginal}, possesses this feature. 



\begin{definition}[Binary CSS-$T$ code~\cite{csstoriginal}] \label{def:bincsst}
    A pair of linear codes $(C_1, C_2)$ with $C_2\subseteq C_1 \subseteq \mathbb{F}_2^n$ is called a \textit{CSS-$T$} pair if:
    \begin{itemize}
         \item $C_2$ is an even code, i.e., for all $x\in C_2$ we have $\sum_j x_j = 0$;
        \item for every $x\in C_2$ there exists a self-dual code $C_x \subseteq C_1^\perp$ of dimension $\omega^\text{H}(x)/2$ and supported on $x$, i.e., each $y \in C_x$ has $y_i=0$ whenever $x_i=0$.
    \end{itemize}
\end{definition}


Note that not all codes necessarily contain a self-dual code.
The following criterion tells us exactly when this happens over finite fields of characteristic 2.

\begin{lemma}[see~\cite{alberto}]\label{dualcontaining}
    Let $C\subseteq \mathbb F_q^n$ be a code over a binary extension field. Then $C$ contains a self-dual code if and only if its length $n$ is even and $C^\perp\subseteq C$, i.e.
    $C^\perp$ is self-orthogonal.
\end{lemma}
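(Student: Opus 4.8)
The plan is to prove the two directions separately, with the forward direction being the routine one and the reverse direction carrying the real content.

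\textbf{Forward direction.} Suppose $C$ contains a self-dual code $D \subseteq \mathbb{F}_q^n$, so $D = D^\perp$. Since $D$ is self-dual, $\dim D = n/2$, which forces $n$ to be even. Moreover, from $D \subseteq C$ we get $C^\perp \subseteq D^\perp = D \subseteq C$ by the order-reversing property of duals, so $C^\perp \subseteq C$, i.e.\ $C^\perp$ is self-orthogonal. This direction uses nothing special about characteristic $2$.

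\textbf{Reverse direction.} Now assume $n$ is even and $C^\perp \subseteq C$. The goal is to produce a self-dual code $D$ with $C^\perp \subseteq D \subseteq C$. Consider the quotient space $V = C / C^\perp$. The standard inner product on $\mathbb{F}_q^n$ restricts to a well-defined bilinear form on $V$: for $u, v \in C$, set $\langle u + C^\perp, v + C^\perp \rangle := \langle u, v\rangle$, which is well-defined precisely because $C^\perp$ is orthogonal to all of $C$. This induced form is nondegenerate on $V$ by construction (its radical is exactly $C^\perp/C^\perp = 0$), and self-dual codes $D$ with $C^\perp \subseteq D \subseteq C$ correspond bijectively to \emph{Lagrangian} (maximal totally isotropic) subspaces of $V$ with respect to this form. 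So the claim reduces to: \emph{a nondegenerate bilinear space over $\mathbb{F}_q$ of even dimension $\dim V = \dim C - \dim C^\perp$ admits a Lagrangian subspace}, provided we check the relevant parity/type condition. The key arithmetic input here is that over a field of characteristic $2$ the standard inner product is symmetric (and in fact the associated quadratic structure behaves well), and an even-dimensional nondegenerate symmetric bilinear space over $\mathbb{F}_{2^s}$ is always a hyperbolic (split) space — this is where characteristic $2$ is essential, since over $\mathbb{F}_q$ with $q$ odd one can have anisotropic planes. One must verify $\dim V$ is even: $\dim C - \dim C^\perp = \dim C - (n - \dim C) = 2\dim C - n$, which is even because $n$ is even. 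Given that $V$ is a split space of even dimension, it decomposes into hyperbolic planes and a Lagrangian is obtained by taking one isotropic generator from each plane.

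\textbf{Main obstacle.} The real work is the existence of a Lagrangian subspace, i.e.\ establishing that the induced form on $C/C^\perp$ is of the right (split) type. The cleanest route is probably not to invoke classification of bilinear forms as a black box, but to build the self-dual code directly by induction: if $C^\perp = C$ we are done ($D = C$); otherwise pick $v \in C \setminus C^\perp$, and using nondegeneracy and characteristic $2$ (so that $\langle v, v\rangle$ can be adjusted — in characteristic $2$ one can arrange $\langle v, v \rangle = 0$ by a suitable choice, or pair $v$ with a $w$ making a hyperbolic plane), split off a hyperbolic plane and pass to its orthogonal complement intersected with $C$, which still satisfies the hypotheses with $n$ decreased appropriately. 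The subtlety to handle carefully is the characteristic-$2$ phenomenon that $\langle v, v\rangle = \sum_i v_i^2 = (\sum_i v_i)^2$ need not vanish, so one cannot assume every nonzero vector is isotropic; the induction step must account for this, which is exactly why the "even length" hypothesis appears and why the statement can fail in odd characteristic or when $n$ is odd. I would therefore structure the proof of the reverse direction as an explicit induction on $n$, carefully performing the hyperbolic-plane splitting and verifying at each stage that the smaller code still has even length and is dual-containing.
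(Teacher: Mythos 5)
Your forward direction is correct. For the reverse direction, your strategy --- reduce to finding a Lagrangian of the nondegenerate symmetric bilinear space $V = C/C^\perp$, whose dimension $2\dim C - n$ is even --- is also the right one, but the step you flag as the ``main obstacle'' rests on a false claim: an even-dimensional nondegenerate symmetric bilinear space over $\mathbb{F}_{2^s}$ need \emph{not} be hyperbolic, i.e.\ need not decompose into hyperbolic planes. Already $\mathbb{F}_2^2$ with the standard form is a counterexample: $\langle v,v\rangle = (v_1+v_2)^2$, so $(1,1)$ is the unique nonzero isotropic vector, no basis of two independent isotropic vectors exists, and yet $\langle(1,1)\rangle$ is a Lagrangian. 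Your suggested ``split off a hyperbolic plane'' induction fails for the same reason: given an isotropic $v$ and some $w$ with $\langle v,w\rangle=1$, the substitution $w\mapsto w+\lambda v$ changes $\langle w,w\rangle$ by $2\lambda\langle v,w\rangle + \lambda^2\langle v,v\rangle = 0$, so you cannot in general make $w$ isotropic, and the plane $\langle v,w\rangle$ is then not hyperbolic.

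The correct fact --- and all you need --- is the weaker one that $V$ is \emph{metabolic}, i.e.\ admits a Lagrangian, and the induction should split off nondegenerate planes containing \emph{one} isotropic line rather than hyperbolic planes. The existence of isotropic vectors is cheap: over the perfect field $\mathbb{F}_{2^s}$ the map $v\mapsto\sqrt{\langle v,v\rangle}$ is $\mathbb{F}_{2^s}$-linear, so its kernel is either all of $V$ or a hyperplane, and in particular is nonzero once $\dim V\ge 2$. Pick a nonzero isotropic $v$ and a $w$ with $\langle v,w\rangle=1$; the plane $P=\langle v,w\rangle$ is nondegenerate regardless of the value of $\langle w,w\rangle$, $\langle v\rangle$ is a Lagrangian of $P$, and $V=P\oplus P^\perp$ orthogonally with $P^\perp$ again nondegenerate of even dimension, so one recurses. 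Translated back to codes: start from $D_0=C^\perp$, which is self-orthogonal because $C^\perp\subseteq C$, and while $D_i\subsetneq D_i^\perp$ adjoin an isotropic $v\in D_i^\perp\setminus D_i$; such $v$ exists since $\dim D_i^\perp-\dim D_i = n-2\dim D_i$ is even (this is exactly where the hypothesis that $n$ is even enters) and hence $\ge 2$, and $v$ lies in $C$ automatically because $C^\perp\subseteq D_i$ forces $D_i^\perp\subseteq C$. Note that the paper only cites this lemma from its reference without giving a proof, so there is no in-paper argument to compare against; the above is a repair of your own.
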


Through \textit{magic state distillation} \cite{msd}, we can non-transversally implement non-Clifford gates, and one should expect an overhead of the order of
\begin{equation*}
    \mathcal{O}\left(\log^\gamma\left(\frac{1}{\varepsilon}\right)\right),
\end{equation*}
where $\varepsilon$ is the accuracy of the distillation and $\gamma = \log(n/k)/\log(d)$ is the overhead constant~\cite{bravyihaahoverhead}, where $[[n,k,d]]$ again refer to the relevant code parameters. Instead of considering a single code to assess the resources needed for overhead, suppose there exists an \textit{asymptotically good} sequence  $\{C^\text{CSS}_i\}_{i\in\mathbb{N}}$ such that $\lim_{i\to\infty} n_i = \infty$ and is asymptotically good, i.e. it attains non-zero rate and relative minimum distance:
\begin{equation*}
    \limsup_{i\to\infty} \frac{k_i}{n_i}>0, \quad \limsup_{i\to\infty} \frac{d_i}{n_i}>0.
\end{equation*}
Then for sufficiently large $i$, the overhead can be made arbitrary small. More formally, for every $\varepsilon>0$ there exists an $i'$ such that $\gamma < \varepsilon$ whenever $i>i'$. Such a sequence achieves an asymptotically constant overhead. This observation strongly motivates
the search for code families that are
asymptotically good \textit{and} simultaneously transversally implement a $T$-gate.


\section{Revisiting CSS-\texorpdfstring{$T$}{T} codes from Reed-Muller codes}\label{sec:reedmuller}

In this short section we establish a characterisation of CSS-$T$ codes and show how it can be conveniently used  to derive one of the main results of~\cite{felice} with a short proofs. As we will see in Section~\ref{sec:qarycsst}, our characterisation has a natural extension to binary field extensions, while the original definition of CSS-$T$ codes does not.

\begin{definition}[Star product]
    The \emph{star product}, sometimes called the \emph{Schur product} or the \textit{elementwise product}, is the bilinear map  $\star: \mathbb F_q^n\times \mathbb F_q^n\to\mathbb{F}_q^n$ defined as
    \begin{equation*}
        a \star b \coloneqq (a_1 b_1,\ldots, a_n b_n)  
    \end{equation*}
    for all $a,b\in\mathbb{F}_q^n$. Given linear codes $A,B\subseteq \mathbb{F}_q^n$, their star product is defined as
    \[A\star B = \text{span}\{a\star b \mid a\in A, \, b\in B\}.\] 
    In the sequel, we abuse notation 
    and write $x\star C$ instead of $\langle x\rangle \star C$.
\end{definition}

\begin{remark}
\begin{enumerate}
    \item The $q$-ary repetition code $\mathcal{R}_q^n=\langle (1, 1, \ldots, 1)\rangle\subset\mathbb{F}_q^n$ is neutral with respect to the
    star product construction, in the sense that for all $C\subseteq\mathbb{F}_q^n$ we have $\mathcal{R}_q^n \star C = C$. Exponentiation with respect to the star product is iteratively defined as 
\begin{equation*} 
C^{\star t} = C\star C^{\star (t-1)},
\end{equation*}
with $C^{\star 0}= \mathcal{R}_q^n$ and $C^{\star 1}=C$. Note that for a binary code $C$, we always have $C\subseteq C^{\star 2}$, as $x=x\star x$ for any $x\in\mathbb{F}_2^n$.

\item The property of a binary code being even (i.e. every codeword has an even Hamming weight) can be characterised as follows: a binary code $C\subseteq\mathbb{F}_2^n$ is even if and only if $\mathcal{R}_2^n\subseteq C^\perp$.
\end{enumerate}
\end{remark}

The star product exhibits cyclical behaviour with respect to the standard inner product, in the following  sense.

\begin{lemma}[Cyclicity of the star product]
    For all $a,b,c\in\mathbb{F}_q$ we have
    \begin{equation*}
        \langle a\star b,c \rangle = \langle b\star c,a \rangle = \langle c\star a, b\rangle.
    \end{equation*}
    Moreover, for all linear codes $A,B,C\subseteq \mathbb{F}_q^n$ we have
    \begin{equation*}
        A\star B \subseteq C^\perp \Longleftrightarrow A \star C \subseteq B^\perp.
    \end{equation*}
\end{lemma}
\begin{proof}
The first part of the statement follows from the definitions.
Given linear codes $A,B,C\subseteq \mathbb{F}_q^n$ and $a\in A$, $b\in B$, $c\in C$, we have that $A\star B \subseteq C^\perp$ implies 
    \begin{equation*}
        0 = \langle a\star b,c \rangle = \langle c\star a, b \rangle
    \end{equation*}
    by the first part of the statement. Hence $A\star C \subseteq B^\perp.$ The other direction is analogous.
\end{proof}



We will need the following description of the CSS-$T$ property with respect to the star product.

\begin{theorem}[\text{\cite[Theorem 2.3]{eduardo}}]  \label{eduardo_csst}
   A binary CSS pair $(C_1, C_2)$ is CSS-$T$
   if and only if
    \[C_2 \subseteq C_1\cap (C_1^{\star 2})^\perp.\]
\end{theorem}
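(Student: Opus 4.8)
The plan is to prove the two inclusions in the equivalence separately, translating each bullet of Definition~\ref{def:bincsst} into a star-product statement. Recall that for a binary CSS pair we have $C_2 \subseteq C_1$ automatically, so the content of the claim is the equivalence between the two bullets of Definition~\ref{def:bincsst} and the single inclusion $C_2 \subseteq (C_1^{\star 2})^\perp$. I would first record the elementary observation that $C_2 \subseteq (C_1^{\star 2})^\perp$ is equivalent, by bilinearity and the definition of the star product of codes, to the condition that $\langle x \star c, c' \rangle = 0$ for all $x \in C_2$ and all $c, c' \in C_1$; equivalently, using the cyclicity lemma, to $x \star C_1 \subseteq C_1^\perp$ for every $x \in C_2$. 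This reformulation is the workhorse of both directions.

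For the forward direction, assume $(C_1,C_2)$ is CSS-$T$ and fix $x \in C_2$. The second bullet gives a self-dual code $C_x \subseteq C_1^\perp$ of dimension $\omega^{\text H}(x)/2$ supported on the set $S = \mathrm{supp}(x)$. The key point is that a self-dual code of dimension $|S|/2$ living inside $\mathbb{F}_2^S$ has the maximal possible dimension for a self-orthogonal code in that coordinate space; from $C_x = C_x^\perp$ (inside $\mathbb{F}_2^S$) I would argue that the all-ones vector $\mathbf{1}_S$ on $S$, i.e. $x$ itself (since $C_2$ is binary, $x = \mathbf{1}_S$), lies in $C_x^\perp = C_x \subseteq C_1^\perp$. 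More carefully: since $C_x$ is self-dual in $\mathbb{F}_2^S$ it is in particular even, so every $y \in C_x$ has even weight, hence $\langle \mathbf{1}_S, y\rangle = 0$ for all $y \in C_x$, so $\mathbf{1}_S \in C_x^\perp = C_x \subseteq C_1^\perp$. Thus $x \in C_1^\perp$. Now I want $x \star C_1 \subseteq C_1^\perp$. Given $c \in C_1$, write $x \star c$; its support lies in $S$, and for any $c' \in C_1$ we need $\langle x \star c, c'\rangle = 0$. Here I would use that $C_x$, being self-dual of dimension $|S|/2$ in $\mathbb{F}_2^S$, forces every vector of $\mathbb{F}_2^S$ that is orthogonal to $C_x$ to lie in $C_x$; and $x \star c$, restricted to $S$, when paired against the restriction to $S$ of $C_1^\perp \supseteq C_x$... this is where the argument needs the interplay between "$C_x \subseteq C_1^\perp$" and "$C_x$ self-dual on $S$" to pin down the $S$-restriction of $C_1$. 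The cleanest route: restriction to $S$ sends $C_1^\perp$ onto a code containing $C_x = C_x^\perp$, hence onto a code contained in its own dual is false in general — instead I would argue that the restriction of $C_1$ to $S$ is contained in $(C_1^\perp|_S)^\perp \subseteq C_x^\perp = C_x$, using that puncturing and shortening are dual operations and that $C_x \subseteq C_1^\perp$ is supported on $S$. Then $x \star c$ has its $S$-part in $C_x \subseteq C_1^\perp$ and its non-$S$ part zero, so $x \star c \in C_1^\perp$; summing over $c$ gives $x \star C_1 \subseteq C_1^\perp$, i.e. $C_2 \subseteq (C_1^{\star 2})^\perp$. Finally, evenness of $C_2$ (first bullet) is recovered from $x \in C_1^\perp$ plus... actually evenness follows already from $x \in C_x$-type reasoning, or one notes $x \star x = x$ so $x \in x \star C_1 \subseteq C_1^\perp$ would only help if $x \in C_1$; instead evenness of $C_2$ is part of the hypothesis and needs no proof in this direction.

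For the converse, assume $C_2 \subseteq C_1 \cap (C_1^{\star 2})^\perp$. Evenness of $C_2$: for $x \in C_2 \subseteq C_1$ we have $x = x \star x \in C_1 \star C_1 = C_1^{\star 2}$, and $x \in (C_1^{\star 2})^\perp$, so $\langle x, x\rangle = 0$, meaning $\omega^{\text H}(x)$ is even — this gives the first bullet. For the second bullet, fix $x \in C_2$ with support $S$, $|S| = \omega^{\text H}(x)$ even. From $x \star C_1 \subseteq C_1^\perp$ (the reformulation above) I would show that the restriction $D := C_1^\perp|_S$ contains the restriction $(x \star C_1)|_S$, and that $D$ is self-orthogonal of dimension at least $|S|/2$: self-orthogonality because $C_1|_S$ is contained in the shortening/puncturing dual, and a dimension count using $\dim C_1|_S + \dim(C_1^\perp \text{ shortened to } S) = |S|$ together with the inclusion forces $\dim D \ge |S|/2$; but a self-orthogonal code in $\mathbb{F}_2^S$ has dimension at most $|S|/2$, so $\dim D = |S|/2$ and $D$ is self-dual on $S$. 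Pulling $D$ back to a code $C_x \subseteq C_1^\perp$ supported on $S$ (via the natural embedding $\mathbb{F}_2^S \hookrightarrow \mathbb{F}_2^n$, which is legitimate precisely because $C_1^\perp$ restricted to $S$ and shortened to $S$ coincide here), we obtain the required self-dual code of dimension $\omega^{\text H}(x)/2$.

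The main obstacle, in both directions, is the careful bookkeeping of puncturing versus shortening on the support set $S$ and the fact that these two operations are exchanged under duality: the statement "$C_1^\perp$ contains a self-dual code supported on $S$" must be matched precisely with a dimension identity of the form $\dim(C_1|_S) + \dim(\text{shortening of } C_1^\perp \text{ to } S) = |S|$, and one must verify that the self-orthogonality coming from $x \star C_1 \subseteq C_1^\perp$ upgrades to self-duality only because the dimension is forced to be exactly $|S|/2$. I expect the rest (evenness, the cyclicity manipulations) to be routine given the lemmas already proved.
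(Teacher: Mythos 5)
The paper does not prove this result itself; it cites it from \cite{eduardo} and then uses it (in the converse direction of Theorem~\ref{theorem:binarycsst}). So there is no in-paper proof to compare against. Judged on its own terms, your forward direction is essentially sound, but your converse contains a genuine error.

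\textbf{Forward direction.} Your bookkeeping works once made precise. Writing $S = \operatorname{supp}(x)$, the hypothesis gives a code $C_x \subseteq C_1^\perp$ whose codewords vanish off $S$ and whose restriction to $\mathbb F_2^S$ is self-dual. That restriction is then contained in the \emph{shortening} $D_{\mathrm{sh}}$ of $C_1^\perp$ at $S$, and by puncturing/shortening duality $D_{\mathrm{sh}}^\perp = \pi_S(C_1)$ inside $\mathbb F_2^S$; hence $\pi_S(C_1) = D_{\mathrm{sh}}^\perp \subseteq C_x^\perp = C_x$. Since $x\star c$ is supported on $S$ with $S$-part $\pi_S(c) \in C_x$, we get $x\star c \in C_1^\perp$ for every $c\in C_1$, which by cyclicity is exactly $x\in (C_1^{\star 2})^\perp$. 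The detour through ``$\mathbf 1_S \in C_x$'' is true but not needed; what matters is $\pi_S(C_1)\subseteq C_x$.

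\textbf{Converse direction: the gap.} You assert that the shortening $D := C_1^\perp|_S$ is \emph{self-orthogonal} of dimension at least $|S|/2$ and hence self-dual. That is backwards. From $x\star C_1 \subseteq C_1^\perp$ you get $\pi_S(C_1) \subseteq D_{\mathrm{sh}}$, and since $D_{\mathrm{sh}}^\perp = \pi_S(C_1)$ this says $D_{\mathrm{sh}}^\perp \subseteq D_{\mathrm{sh}}$: it is $D_{\mathrm{sh}}$ that is \emph{dual-containing}, and $\pi_S(C_1)$ that is self-orthogonal. The dimension count then gives $\dim D_{\mathrm{sh}} \ge |S|/2$, with strict inequality possible (e.g.\ if $\pi_S(C_1)$ is small, $D_{\mathrm{sh}}$ can be nearly all of $\mathbb F_2^S$), so $D_{\mathrm{sh}}$ is not self-dual in general and cannot itself serve as $C_x$. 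Your dimension-trapping argument would only close if $D_{\mathrm{sh}}$ were self-orthogonal, which it is not. The correct final step is to invoke the paper's Lemma~\ref{dualcontaining}: $D_{\mathrm{sh}} \subseteq \mathbb F_2^S$ is dual-containing and $|S| = \omega^{\text H}(x)$ is even (by your evenness argument, which is fine), so $D_{\mathrm{sh}}$ \emph{contains} a self-dual code of dimension $|S|/2$. Lifting that subcode to $\mathbb F_2^n$ by zero-extension gives the required $C_x \subseteq C_1^\perp$ supported on $x$. With this substitution the proof goes through; as written it does not.
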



In this paper, we prove and apply a different characterization of the CSS-$T$ property via the star product,
which extends to larger fields of characteristic $2$ in a very natural way.

\begin{theorem}[Characterization of CSS-$T$ pairs over $\mathbb{F}_2$]\label{theorem:binarycsst}\label{CSST_cond}
    A binary CSS pair $(C_1, C_2)$ is CSS-$T$ if and only if \[C_1\star C_1 \subseteq C_2^\perp. \]
\end{theorem}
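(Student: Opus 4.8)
The plan is to deduce this characterization directly from Theorem~\ref{eduardo_csst}, which already states that $(C_1,C_2)$ is CSS-$T$ if and only if $C_2 \subseteq C_1 \cap (C_1^{\star 2})^\perp$. The claim $C_1 \star C_1 \subseteq C_2^\perp$ is visibly just the star-product reformulation of the condition $C_2 \subseteq (C_1^{\star 2})^\perp$, so the real content is showing that the extra constraint $C_2 \subseteq C_1$ becomes automatic once $C_1 \star C_1 \subseteq C_2^\perp$ holds — and conversely that, given the standing hypothesis that $(C_1,C_2)$ is a \emph{CSS pair} (so $C_2 \subseteq C_1$ is assumed from the start), the two formulations agree.

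Here are the steps I would carry out. First, recall that for a CSS pair we always have $C_2 \subseteq C_1$ by definition; so in both directions this inclusion is available for free and need not be re-proved. Second, observe the equivalence $C_2 \subseteq (C_1^{\star 2})^\perp \Longleftrightarrow C_1^{\star 2} \subseteq C_2^\perp$, which is immediate from the symmetry of the standard inner product and bilinearity (for linear codes $A, B$, one has $A \subseteq B^\perp \Leftrightarrow B \subseteq A^\perp$). Third, unwind $C_1^{\star 2} = C_1 \star C_1$ from the definition of star exponentiation, so that $C_1^{\star 2} \subseteq C_2^\perp$ is literally $C_1 \star C_1 \subseteq C_2^\perp$. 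Chaining these: $(C_1,C_2)$ CSS-$T$ $\Leftrightarrow$ $C_2 \subseteq C_1 \cap (C_1^{\star 2})^\perp$ $\Leftrightarrow$ $C_2 \subseteq (C_1^{\star 2})^\perp$ (the $C_2 \subseteq C_1$ part being part of the CSS-pair hypothesis) $\Leftrightarrow$ $C_1 \star C_1 \subseteq C_2^\perp$. This completes the argument.

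One subtlety worth flagging, and the only place any care is needed: the cyclicity lemma gives $C_1 \star C_1 \subseteq C_2^\perp \Leftrightarrow C_1 \star C_2 \subseteq C_1^\perp$, which is a third equivalent form; it may be cleaner to route the proof through this intermediate statement and then invoke Theorem~\ref{eduardo_csst} only at the very end. The main obstacle is essentially bookkeeping — making sure the hypothesis "$(C_1,C_2)$ is a CSS pair'' is invoked to supply $C_2 \subseteq C_1$, so that the condition in Theorem~\ref{eduardo_csst} collapses to the single inclusion $C_2 \subseteq (C_1^{\star 2})^\perp$. There is no hard analytic or combinatorial step; the result is a repackaging of \cite{eduardo} via duality and the definition of $C^{\star 2}$. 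Since the final characterization is the one the authors advertise as extending naturally to $\mathbb{F}_{2^s}$, I would state the proof concisely and point forward to Section~\ref{sec:qarycsst} for the extension.
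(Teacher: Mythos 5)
Your argument is correct, but it follows a slightly different and more uniform route than the paper's. The paper proves the reverse implication ($C_1 \star C_1 \subseteq C_2^\perp$ implies CSS-$T$) exactly as you do: it rewrites the condition as $C_2 \subseteq C_1 \cap (C_1^{\star 2})^\perp$ and cites Theorem~\ref{eduardo_csst}. For the forward implication, however, the paper does not go through Theorem~\ref{eduardo_csst}; it returns to Definition~\ref{def:bincsst}, picks arbitrary $a,z \in C_1$ and $x \in C_2$, uses the self-dual code $C_x \subseteq C_1^\perp$ guaranteed by that definition (so $C_1 \subseteq C_x^\perp = C_x$, hence $z \in C_x$), and concludes $\langle a \star x, z\rangle = \langle a, x\star z\rangle = \langle a, z\rangle = 0$, giving $C_1 \star C_2 \subseteq C_1^\perp$, i.e.\ $C_1 \star C_1 \subseteq C_2^\perp$ by cyclicity. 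Your proof instead runs both directions through Theorem~\ref{eduardo_csst}, using the elementary duality $A \subseteq B^\perp \Leftrightarrow B \subseteq A^\perp$ together with the standing CSS-pair hypothesis $C_2 \subseteq C_1$, which correctly makes the $C_1$-containment in Theorem~\ref{eduardo_csst} vacuous. Both routes are valid; yours is more economical and makes the equivalence with \cite[Theorem~2.3]{eduardo} transparent, while the paper's forward argument is self-contained relative to the original definition and displays the role of the self-dual subcode. One small wording issue: your opening phrase about ``the real content is showing that the extra constraint $C_2 \subseteq C_1$ becomes automatic once $C_1 \star C_1 \subseteq C_2^\perp$ holds'' misstates the logic — that inclusion is not implied by $C_1 \star C_1 \subseteq C_2^\perp$ but supplied by the hypothesis — though the body of your argument corrects this.
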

\begin{proof}
    Let $(C_1, C_2)$ be a binary CSS-$T$ pair. For every $x\in C_2$, there exists a self-dual code $C_x\subseteq C_1^\perp$ supported on $x$ and dimension $\omega^{\text H}(x)/2$. Pick arbitrary codewords $a\in C_1$, $x\in C_2$, and $z\in C_1$. Since $C_x^\perp=C_x \subseteq C_1^\perp$, we have $z\in C_x$. Therefore,
    \[ \langle a\star x,z \rangle = \langle a,x\star z \rangle  = \langle a,z\rangle  = 0.\]
    We conclude that $C_1\star C_2\subseteq C_1^\perp$. 
    For the other direction, suppose that $C_1\star C_1 \subseteq C_2^\perp$, and recall that by assumption $C_2\subseteq C_1$. Then $C_2 \subseteq C_1 \cap (C_1\star C_1)^\perp$, and $(C_1, C_2)$ is a CSS-$T$ pair by Theorem~\ref{eduardo_csst}.
\end{proof}


In the remainder of this section we 
prove one of the main results of 
\cite{felice} in a concise way using the characterization we provided in Theorem~\ref{CSST_cond}. In the statement of the next result, $\text{RM}(r,m)$ denotes the (binary) Reed-Muller code with parameters $(r,m)$; see~\cite{felice}.

\begin{theorem}[\text{\cite[Theorem 13]{felice}}]
    Let $C_1=\textnormal{RM}(\lfloor\frac{m-1}{2}\rfloor-t,m)$ and $C_2=\textnormal{RM}(r_2,m)$. Then, $\textnormal{CSS}(C_1,C_2)$ is a \textnormal{CSS-$T$} code if and only if
    \[\begin{cases}
        r_2\leq 2t+1 & \textnormal{if } m \textnormal{ is even},\\
        r_2\leq 2t& \textnormal{if } m \textnormal{ is odd}.
    \end{cases}\]
\end{theorem}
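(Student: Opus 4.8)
The plan is to use Theorem~\ref{CSST_cond}, which reduces the CSS-$T$ condition for the pair $(C_1,C_2)$ to the single containment $C_1\star C_1\subseteq C_2^\perp$. So the proof splits into three ingredients: (i) identify $C_1\star C_1$ as a Reed-Muller code, (ii) identify $C_2^\perp$ as a Reed-Muller code, and (iii) translate the containment into an inequality between the relevant $r$-parameters. For (i), I would invoke the standard fact that $\mathrm{RM}(a,m)\star\mathrm{RM}(b,m)=\mathrm{RM}(\min(a+b,m),m)$; applied with $a=b=\lfloor\frac{m-1}{2}\rfloor-t$ this gives $C_1\star C_1=\mathrm{RM}(\min(2\lfloor\frac{m-1}{2}\rfloor-2t,\,m),m)$. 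Since $t\ge 0$ and $2\lfloor\frac{m-1}{2}\rfloor\le m-1<m$, the minimum is always $2\lfloor\frac{m-1}{2}\rfloor-2t$, so $C_1\star C_1=\mathrm{RM}(2\lfloor\frac{m-1}{2}\rfloor-2t,m)$. For (ii), I would use $\mathrm{RM}(r,m)^\perp=\mathrm{RM}(m-1-r,m)$, so $C_2^\perp=\mathrm{RM}(m-1-r_2,m)$.

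With these identifications, the condition $C_1\star C_1\subseteq C_2^\perp$ becomes the nesting $\mathrm{RM}(2\lfloor\frac{m-1}{2}\rfloor-2t,m)\subseteq\mathrm{RM}(m-1-r_2,m)$, which (for Reed-Muller codes of the same $m$) holds if and only if $2\lfloor\frac{m-1}{2}\rfloor-2t\le m-1-r_2$, i.e. $r_2\le m-1-2\lfloor\frac{m-1}{2}\rfloor+2t$. The final step is then a parity case-check on $m$: if $m$ is even then $\lfloor\frac{m-1}{2}\rfloor=\frac{m-2}{2}$, so $m-1-2\lfloor\frac{m-1}{2}\rfloor=m-1-(m-2)=1$, giving $r_2\le 2t+1$; if $m$ is odd then $\lfloor\frac{m-1}{2}\rfloor=\frac{m-1}{2}$, so $m-1-2\lfloor\frac{m-1}{2}\rfloor=m-1-(m-1)=0$, giving $r_2\le 2t$. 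This matches the claimed dichotomy exactly.

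One bookkeeping point worth checking explicitly is that the hypothesis of Theorem~\ref{CSST_cond} requires $(C_1,C_2)$ to be a CSS pair, i.e. $C_2\subseteq C_1$, which here means $r_2\le \lfloor\frac{m-1}{2}\rfloor-t$; one should note that whenever the stated inequality on $r_2$ holds together with the implicit standing assumption that $(C_1,C_2)$ is a CSS pair, the characterization applies, and conversely a CSS-$T$ pair is in particular a CSS pair. I do not expect a genuine obstacle here: the only mild subtlety is making sure the $\min$ in the star-product formula is resolved correctly (it is, because $C_1$ has order strictly less than $\lfloor\frac{m-1}{2}\rfloor\le \frac{m}{2}$, so even without subtracting $t$ we stay below $m$), and that the RM nesting and duality facts are quoted with the correct index conventions from~\cite{felice}. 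Everything else is the elementary parity computation above.
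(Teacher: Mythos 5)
Your argument is correct and follows essentially the same route as the paper: apply the star-product characterisation (Theorem~\ref{CSST_cond}), identify $C_1\star C_1$ and $C_2^\perp$ as Reed--Muller codes via the standard star-product and duality formulas, and read off the inequality by parity of $m$. The only cosmetic difference is that you carry along the $\min(\cdot,m)$ in the star-product formula and verify it is not active, and you flag the standing $C_2\subseteq C_1$ hypothesis; the paper's proof simply omits both as immediate.
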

\begin{proof}
We start by recalling the inclusion properties of Reed-Muller codes  with respect to their first argument, i.e., 
\[\mbox{$\text{RM}(r_1,m)\subseteq\text{RM}(r_2,m)$ if and only if $r_1\leq r_2$},\]
and are closed under the action of the star product:
\begin{equation*}\mbox{$\textnormal{RM}(r_1,m)\star\textnormal{RM}(r_2,m)=\textnormal{RM}(r_1+r_2,m)$ for $m\geq 2$.}\end{equation*}
Using the latter fact 
and Theorem~\ref{theorem:binarycsst} we see that
$(C_1,C_2)$ is a CSS-$T$ pair if and only if
\begin{equation*}
\textnormal{RM}\left(2\bigg\lfloor\frac{m-1}{2}\bigg\rfloor-2t,m\right)\subseteq\textnormal{RM}(m-r_2-1,m).
    \end{equation*}
    Using the nested property of Reed-Muller codes, we conclude the following: if $m$ is even, then $\lfloor\frac{m-1}{2}\rfloor=\frac{m}{2}-1$, hence $r_2\leq 2t+1.$ If $m$ is odd, then $\lfloor\frac{m-1}{2}\rfloor=\frac{m}{2}-\frac{1}{2}$ and $r_2\leq 2t$.
\end{proof}

\section{\texorpdfstring{$q$}{q}-ary CSS-\texorpdfstring{$T$}{T} codes}\label{sec:qarycsst}

The original definition of a CSS-$T$ code
was given only over the binary field; recall Definition~\ref{def:bincsst}.
A generalization to arbitrary fields was proposed and studied in~\cite{alberto}.
In this paper, we propose a new definition of CSS-$T$ code over binary field extensions, which is different from the one proposed in~\cite{alberto} but which finds a precise physical foundation in the transversal $T$-gate. We first give the definition and postpone its physical interpretation to Remark~\ref{rem:phys}.

\begin{definition}[$q$-ary CSS-$T$ code]
Let $\text{CSS}(C_1,C_2)$ be a CSS code defined over $\mathbb{F}_q^n$. Then it is CSS-$T$ if and only if the $T^{(\lambda)}$-gate preserves the code space for all $\lambda\in\mathbb{F}_q$, i.e.
\[(T^{(\lambda)})^{\otimes n}\ket{x}\in \text{CSS}(C_1,C_2) \quad \text{ for all } \ket x \in \text{CSS}(C_1,C_2).\]
\end{definition}


As already mentioned in Section~\ref{sec:qc}, the field trace  plays an important role in our construction since it appears in the definition of $\{T^{(\lambda)}\}_{\lambda\in\mathbb{F}_q}$.
We will need the following concepts from classical coding theory.

\begin{definition}[Trace code and subfield subcode]
   Let $C\subseteq \mathbb{F}_q^n$ be an $\mathbb F_q$-linear code. The \textit{trace code}  of $C$ with respect to $\mathbb{F}_p$ is 
    \begin{equation*}
        \text{tr}(C)=\{(\text{tr}(c_1),\ldots,\text{tr}(c_n)) \mid (c_1,\ldots, c_n)\in C\},
    \end{equation*}
    where $\text{tr}:\mathbb{F}_q \to \mathbb{F}_p$ is the trace map.
    The \textit{subfield subcode} of $C$ with respect to $\mathbb{F}_p$ is 
    \begin{equation*}
        C|_{\mathbb{F}_p} = C\cap\mathbb{F}_p^n =  \{c \in C \mid c=(c_1,\ldots, c_n)\in C, \, c_i \in \mathbb{F}_p\}.
    \end{equation*}
\end{definition}

    By linearity of the trace map, trace codes are linear codes. Furthermore, we have $\text{tr}(A)\subseteq\text{tr}(B)$ whenever $A\subseteq B$. 
    Note that the converse is not true. Consider for instance the codes $A\subsetneq B\subseteq \mathbb{F}_4^2$ over $\mathbb{F}_4=\mathbb{F}_2[\alpha]$, where $\alpha^2+\alpha+1$, with generator matrices
    \begin{equation*}
        G_A=\begin{pmatrix}1 & \alpha\end{pmatrix},\quad G_B=\begin{pmatrix}1 & 0\\0 & \alpha\end{pmatrix}.
    \end{equation*}
    Then $\text{tr}(A)=\text{tr}(B) = \mathbb F_2^2$, but $B$ is not contained in $A$.

The trace and subfield subcodes of a linear code are closely related via the following theorem by Delsarte.

\begin{theorem}[Delsarte Theorem; see~\cite{delsarte}]
    Let $C\subseteq \mathbb{F}_q^n$ be a linear code.
    We have 
    \begin{equation*}
        (C|_{\mathbb{F}_p})^\perp = \textnormal{tr}(C^\perp).
    \end{equation*}
\end{theorem}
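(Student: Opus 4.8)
The plan is to prove the two inclusions $\textnormal{tr}(C^\perp)\subseteq (C|_{\mathbb{F}_p})^\perp$ and $(C|_{\mathbb{F}_p})^\perp\subseteq\textnormal{tr}(C^\perp)$ separately, the first being essentially formal and the second containing all the content. The single tool that drives both parts is the elementary identity
\begin{equation*}
\langle \textnormal{tr}(v),c\rangle \;=\; \textnormal{tr}(\langle v,c\rangle)\qquad\text{for all }v\in\mathbb{F}_q^n,\ c\in\mathbb{F}_p^n ,
\end{equation*}
where on the left $\langle\cdot,\cdot\rangle$ is the standard $\mathbb{F}_p$-bilinear form on $\mathbb{F}_p^n$ and on the right it is the $\mathbb{F}_q$-bilinear form on $\mathbb{F}_q^n$. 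This follows from $\mathbb{F}_p$-linearity of $\textnormal{tr}$: since each $c_i\in\mathbb{F}_p$ we have $\textnormal{tr}(v_i)c_i=\textnormal{tr}(c_iv_i)$, and summing over $i$ and using additivity of $\textnormal{tr}$ gives the claim.

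For the inclusion $\textnormal{tr}(C^\perp)\subseteq (C|_{\mathbb{F}_p})^\perp$, I would take $v\in C^\perp$ and $c\in C|_{\mathbb{F}_p}=C\cap\mathbb{F}_p^n$; then $\langle v,c\rangle=0$ because $c\in C$, so by the identity $\langle\textnormal{tr}(v),c\rangle=\textnormal{tr}(0)=0$. As $c$ ranges over $C|_{\mathbb{F}_p}$ this shows $\textnormal{tr}(v)\in(C|_{\mathbb{F}_p})^\perp$, and since $\textnormal{tr}(C^\perp)$ is the $\mathbb{F}_p$-span of such elements, the inclusion follows.

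For the reverse inclusion I would dualize: both sides are $\mathbb{F}_p$-linear subspaces of $\mathbb{F}_p^n$, and since $(\cdot)^\perp$ reverses inclusions with $(\cdot)^{\perp\perp}=\textnormal{id}$ over $\mathbb{F}_p^n$, the already-proven inclusion together with the claim $\textnormal{tr}(C^\perp)^\perp\subseteq C|_{\mathbb{F}_p}$ will yield the equality. To prove this claim, let $c\in\mathbb{F}_p^n$ with $\langle\textnormal{tr}(v),c\rangle=0$ for all $v\in C^\perp$; by the identity, $\textnormal{tr}(\langle v,c\rangle)=0$ for all $v\in C^\perp$. The decisive step is to upgrade this to $\langle v,c\rangle=0$ for all $v\in C^\perp$: if some $v_0\in C^\perp$ had $\langle v_0,c\rangle=a\neq 0$, then using that $C^\perp$ is an $\mathbb{F}_q$-linear subspace we get $\lambda v_0\in C^\perp$, hence $\textnormal{tr}(\lambda a)=\textnormal{tr}(\langle\lambda v_0,c\rangle)=0$ for every $\lambda\in\mathbb{F}_q$; but $\{\lambda a:\lambda\in\mathbb{F}_q\}=\mathbb{F}_q$ since $a\neq0$, so $\textnormal{tr}$ would vanish identically on $\mathbb{F}_q$, contradicting that the trace map is surjective. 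Therefore $c\in(C^\perp)^\perp=C$ (double dual over $\mathbb{F}_q$), and since $c\in\mathbb{F}_p^n$ we conclude $c\in C\cap\mathbb{F}_p^n=C|_{\mathbb{F}_p}$.

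The only non-formal ingredient is this last "upgrade'', which is precisely the non-degeneracy of the trace form $(x,y)\mapsto\textnormal{tr}(xy)$ on $\mathbb{F}_q$ used in tandem with the $\mathbb{F}_q$-linearity of $C^\perp$; I expect this to be the main (and essentially only) obstacle, the rest being bookkeeping with the duality operator. A viable alternative to the dualization step is a dimension count: once the first inclusion is known, equality reduces to checking $\dim_{\mathbb{F}_p}\textnormal{tr}(C^\perp)=n-\dim_{\mathbb{F}_p}(C|_{\mathbb{F}_p})$, but this forces one to control $\dim_{\mathbb{F}_p}\!\big(C^\perp\cap(\ker\textnormal{tr})^{\,n}\big)$, which is less transparent than the direct argument above, so I would prefer the dualization route.
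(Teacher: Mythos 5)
The paper cites Delsarte's theorem without giving a proof, so there is no in-paper argument to compare against; your proposal is therefore assessed on its own. It is correct and complete. The key identity $\langle\textnormal{tr}(v),c\rangle=\textnormal{tr}(\langle v,c\rangle)$ for $c\in\mathbb{F}_p^n$ is verified correctly, the first inclusion is a formal consequence of it, and the dualization trick correctly reduces the reverse inclusion to showing $\textnormal{tr}(C^\perp)^\perp\subseteq C|_{\mathbb{F}_p}$. The ``upgrade'' step is the genuine content and you handle it exactly right: the $\mathbb{F}_q$-linearity of $C^\perp$ lets you hit the scalar $a=\langle v_0,c\rangle$ with all of $\mathbb{F}_q$, and the surjectivity (equivalently, non-degeneracy) of the absolute trace then forces $a=0$. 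This is in fact the standard proof of Delsarte's theorem, so while the paper does not present one, your argument is the one a reader would expect to find in the cited reference.
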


If a code meets certain conditions with respect to the star product, its trace code coincides with its subfield subcode.


\begin{lemma}[Theorem 9 in \cite{galoisinvariance}]\label{lemma:consequencegaloisinvariance}
    Let $C\subseteq\mathbb{F}_{2^s}^n$ be a linear code. The following are equivalent:
    \begin{enumerate}[]
        \item  $C=C^{\star 2}$ (i.e. $C$ is Galois invariant),
        \item $\text{tr}(C) = C|_{\mathbb{F}_2}$,
        \item $\text{dim}_{\mathbb{F}_2}(\text{tr}(C)) = \text{dim}_{\mathbb{F}_q}(C)$.
    \end{enumerate}
\end{lemma}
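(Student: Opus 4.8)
The plan is to prove the chain of equivalences $(1)\Rightarrow(3)\Rightarrow(2)\Rightarrow(1)$, or more symmetrically to link all three to the dimension of the trace code, using Delsarte's theorem as the bridge between trace codes and subfield subcodes. First I would recall the two general facts available: the subfield subcode always satisfies $\dim_{\mathbb{F}_2}(C|_{\mathbb{F}_2})\le\dim_{\mathbb{F}_q}(C)$, and by Delsarte $(C|_{\mathbb{F}_2})^\perp=\operatorname{tr}(C^\perp)$, so that $\dim_{\mathbb{F}_2}(\operatorname{tr}(C^\perp))=n-\dim_{\mathbb{F}_2}(C|_{\mathbb{F}_2})$. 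Applying this to $C^\perp$ in place of $C$ gives $\dim_{\mathbb{F}_2}(\operatorname{tr}(C))=n-\dim_{\mathbb{F}_2}((C^\perp)|_{\mathbb{F}_2})=n-\dim_{\mathbb{F}_q}(C^\perp)+(\dim_{\mathbb{F}_q}(C^\perp)-\dim_{\mathbb{F}_2}((C^\perp)|_{\mathbb{F}_2}))=\dim_{\mathbb{F}_q}(C)+\delta$, where $\delta=\dim_{\mathbb{F}_q}(C^\perp)-\dim_{\mathbb{F}_2}((C^\perp)|_{\mathbb{F}_2})\ge 0$. Hence $\dim_{\mathbb{F}_2}(\operatorname{tr}(C))\ge\dim_{\mathbb{F}_q}(C)$ always, with equality in $(3)$ iff $(C^\perp)|_{\mathbb{F}_2}$ has full dimension $\dim_{\mathbb{F}_q}(C^\perp)$, which forces $C^\perp$ to be defined over $\mathbb{F}_2$ after a basis change — this is the Galois-theoretic heart of the matter.

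The cleanest route is to characterise Galois invariance through the Frobenius (Galois) action $\sigma:(c_1,\dots,c_n)\mapsto(c_1^2,\dots,c_n^2)$, which is the natural $\mathbb{F}_2$-semilinear automorphism of $\mathbb{F}_{2^s}^n$. I would observe that $C^{\star 2}=\sigma(C)$ as $\mathbb{F}_2$-spaces, so condition $(1)$, $C=C^{\star 2}$, is equivalent to $\sigma(C)=C$, i.e. $C$ being stable under the full Galois group $\operatorname{Gal}(\mathbb{F}_{2^s}/\mathbb{F}_2)=\langle\sigma\rangle$. The standard Galois-descent fact (a semilinear descent argument, or Hilbert 90 in its additive incarnation) is that a Galois-stable $\mathbb{F}_{2^s}$-subspace $C\subseteq\mathbb{F}_{2^s}^n$ is spanned over $\mathbb{F}_{2^s}$ by its fixed points $C^\sigma=C\cap\mathbb{F}_2^n=C|_{\mathbb{F}_2}$; consequently $\dim_{\mathbb{F}_2}(C|_{\mathbb{F}_2})=\dim_{\mathbb{F}_q}(C)$ precisely when $C$ is Galois invariant. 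For the trace map, $\operatorname{tr}(C)=\sum_{i=0}^{s-1}\sigma^i(C)$ viewed inside $\mathbb{F}_2^n$; when $C$ is Galois-stable this sum collapses to (the $\mathbb{F}_2$-points of) $C$ itself, giving $(1)\Rightarrow(2)$, and $(2)\Rightarrow(3)$ is immediate from $\operatorname{tr}(C)\subseteq\mathbb{F}_2^n$ having $\mathbb{F}_2$-dimension at most that of $C$ over $\mathbb{F}_q$ together with the identity above.

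Concretely I would organise the argument as: $(1)\Rightarrow(2)$: if $C=\sigma(C)$, pick an $\mathbb{F}_{2^s}$-basis of $C$ consisting of $\mathbb{F}_2$-rational vectors (Galois descent), so every $c\in C$ is an $\mathbb{F}_{2^s}$-combination of $\mathbb{F}_2$-vectors; applying $\operatorname{tr}$ coordinatewise and using $\mathbb{F}_2$-linearity of $\operatorname{tr}$ on scalars shows $\operatorname{tr}(c)\in C|_{\mathbb{F}_2}$, and the reverse inclusion $C|_{\mathbb{F}_2}\subseteq\operatorname{tr}(C)$ holds because $\operatorname{tr}$ restricted to $\mathbb{F}_2$-vectors in $C$ need not be the identity — here one must be slightly careful and instead argue $\dim$ equality, so I would prove $(1)\Rightarrow(3)$ directly via the displayed Delsarte computation ($\delta=0$ because $C^\perp$ is also Galois-stable and hence spanned by $\mathbb{F}_2$-points) and then get $(3)\Rightarrow(2)$ from the general inclusion $C|_{\mathbb{F}_2}\subseteq\operatorname{tr}(C)$ plus matching dimensions, and $(2)\Rightarrow(1)$ by running the Delsarte identity backwards to force $\delta=0$, i.e. $C^\perp$ is spanned by its $\mathbb{F}_2$-points, hence Galois-stable, hence so is $C$. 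The main obstacle is the Galois-descent step — justifying that a Frobenius-stable $\mathbb{F}_{2^s}$-subspace has an $\mathbb{F}_2$-rational spanning set — and making sure the inclusion $C|_{\mathbb{F}_2}\subseteq\operatorname{tr}(C)$ is handled correctly (it is false in general that $\operatorname{tr}$ fixes $\mathbb{F}_2$-vectors, since $\operatorname{tr}(1)=s\bmod 2$), so the equivalence $(2)\Leftrightarrow(3)$ should be routed through dimensions and Delsarte rather than through naive set inclusions. Since the statement is quoted from \cite{galoisinvariance}, I would likely just cite it for the descent lemma and present the Delsarte bookkeeping as the self-contained part.
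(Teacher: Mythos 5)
The paper itself gives no proof of this lemma: it is labelled and cited as Theorem~9 of \cite{galoisinvariance}, so there is no in-paper argument to compare against. Evaluating your proposal on its own terms, the pivotal claim $C^{\star 2}=\sigma(C)$ (with $\sigma$ the coordinatewise Frobenius) is false under the paper's definition of $\star$-exponentiation, where $C^{\star 2}=C\star C=\operatorname{span}_{\mathbb{F}_q}\{c\star c'\mid c,c'\in C\}$; one only has $\sigma(C)\subseteq C^{\star 2}$, and the containment is generally strict. In fact the lemma as literally printed is not correct with that convention. Take $\mathbb{F}_4=\mathbb{F}_2[\alpha]$ and $C=\operatorname{span}_{\mathbb{F}_4}\{(1,1,0),(0,1,1)\}\subseteq\mathbb{F}_4^3$. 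Then $\sigma(C)=C$, and a direct check gives $\operatorname{tr}(C)=C|_{\mathbb{F}_2}=\{(a,a+b,b)\mid a,b\in\mathbb{F}_2\}$, so conditions~(2) and~(3) both hold, with $\dim_{\mathbb{F}_2}(\operatorname{tr}(C))=2=\dim_{\mathbb{F}_4}(C)$. Yet $(1,1,0)\star(0,1,1)=(0,1,0)\notin C$, so $C^{\star 2}=\mathbb{F}_4^3\supsetneq C$ and condition~(1) fails. The statement almost surely intends condition~(1) to read $\sigma(C)=C$, i.e.\ stability under coordinatewise squaring (the code $\{c\star c\mid c\in C\}$, which is much smaller than the Schur square $C\star C$); that matches the usual meaning of Galois invariance, but clashes with the paper's notation $C^{\star 2}$.

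Assuming that corrected reading of~(1), the rest of your outline — Galois descent (a Frobenius-stable $\mathbb{F}_{2^s}$-subspace is spanned by its $\mathbb{F}_2$-rational points), Delsarte's theorem $(C|_{\mathbb{F}_2})^\perp=\operatorname{tr}(C^\perp)$ applied to $C^\perp$ to obtain $\dim_{\mathbb{F}_2}(\operatorname{tr}(C))=\dim_{\mathbb{F}_q}(C)+\delta$ with $\delta\geq 0$, and the observation that $\delta=0$ iff $C^\perp$ is Galois-stable iff $C$ is — is sound. Your caution about the inclusion $C|_{\mathbb{F}_2}\subseteq\operatorname{tr}(C)$ is a good catch: since $\operatorname{tr}(1)=s\bmod 2$, applying $\operatorname{tr}$ to an $\mathbb{F}_2$-vector is not the identity when $s$ is even. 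The inclusion nevertheless holds unconditionally: choose $\lambda\in\mathbb{F}_{2^s}$ with $\operatorname{tr}(\lambda)=1$ (possible by surjectivity of the trace) and note that for $c\in C|_{\mathbb{F}_2}$ one has $\lambda c\in C$ and $\operatorname{tr}(\lambda c)=c$. So the gap to close is not in your Delsarte bookkeeping but in the very first step: you cannot identify $C^{\star 2}$ with $\sigma(C)$, and the argument should begin by making explicit that condition~(1) must be read as Frobenius stability rather than Schur-square idempotence.
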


Property 1. in the previous lemma is often called \textit{Galois invariance}; see~\cite{galoisinvariance}.
Duality and trace are related as follows.

\begin{lemma}
    Let $C\subseteq\mathbb{F}_q^n$ be an $\mathbb F_q$-linear code. We have
    \begin{equation*}
        \text{tr}(C)^{\perp_{\mathbb{F}_p}}\subseteq \tr(C^{\perp_{\mathbb{F}_q}}).
    \end{equation*}
\end{lemma}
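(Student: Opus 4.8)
The plan is to prove the inclusion $\operatorname{tr}(C)^{\perp_{\mathbb F_p}} \subseteq \operatorname{tr}(C^{\perp_{\mathbb F_q}})$ by combining the Delsarte Theorem with the elementary inclusion relating a code and its subfield subcode. First I would recall that for any $\mathbb F_q$-linear code $D\subseteq\mathbb F_q^n$ we have $D|_{\mathbb F_p}\subseteq D$, and hence, taking duals (over $\mathbb F_q$ on the right, over $\mathbb F_p$ on the left is the subtlety — see below), $D^{\perp_{\mathbb F_q}}$ sits inside $(D|_{\mathbb F_p})^{\perp_{\mathbb F_p}}$ in the appropriate sense. Concretely, apply this with $D = C^{\perp_{\mathbb F_q}}$: then $D|_{\mathbb F_p} = C^{\perp_{\mathbb F_q}}\cap \mathbb F_p^n \subseteq C^{\perp_{\mathbb F_q}}$, and taking the $\mathbb F_p$-orthogonal of this inclusion of $\mathbb F_p$-linear codes reverses it.

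The key step is to apply Delsarte's Theorem, which states $(E|_{\mathbb F_p})^{\perp_{\mathbb F_p}} = \operatorname{tr}(E^{\perp_{\mathbb F_q}})$ for any $\mathbb F_q$-linear code $E$. Taking $E = C$ gives $(C|_{\mathbb F_p})^{\perp_{\mathbb F_p}} = \operatorname{tr}(C^{\perp_{\mathbb F_q}})$, so the right-hand side of the desired inclusion is exactly $(C|_{\mathbb F_p})^{\perp_{\mathbb F_p}}$. Thus it remains to show $\operatorname{tr}(C)^{\perp_{\mathbb F_p}} \subseteq (C|_{\mathbb F_p})^{\perp_{\mathbb F_p}}$, which by the order-reversing property of duality is equivalent to $C|_{\mathbb F_p} \subseteq \operatorname{tr}(C)$. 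This last inclusion is immediate: for any $c = (c_1,\ldots,c_n) \in C|_{\mathbb F_p}$, each $c_i$ lies in $\mathbb F_p$, so $\operatorname{tr}(c_i) = c_i$ (the trace acts as multiplication by $s$ modulo $p$ on $\mathbb F_p$ — and since $\operatorname{tr}$ restricted to $\mathbb F_p$ equals $x\mapsto sx$; one must check this is still surjective onto $\mathbb F_p$, which holds because $\operatorname{tr}$ is surjective onto $\mathbb F_p$ in general and $\mathbb F_p$-linear). Here, since the paper works over binary extension fields where $p=2$, one can argue more cleanly if needed, but the general statement is fine: $C|_{\mathbb F_p}$ consists of codewords already fixed entrywise (up to the scalar $s$) by $\operatorname{tr}$, so $C|_{\mathbb F_p} \subseteq \operatorname{tr}(C|_{\mathbb F_p}) \subseteq \operatorname{tr}(C)$.

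Putting it together: $\operatorname{tr}(C)^{\perp_{\mathbb F_p}} \subseteq (C|_{\mathbb F_p})^{\perp_{\mathbb F_p}} = \operatorname{tr}(C^{\perp_{\mathbb F_q}})$, where the first inclusion follows from $C|_{\mathbb F_p}\subseteq\operatorname{tr}(C)$ and the second equality is Delsarte. I expect the only genuinely delicate point to be the claim $C|_{\mathbb F_p}\subseteq\operatorname{tr}(C)$: one has to be careful that the trace map, when restricted to $\mathbb F_p\subseteq\mathbb F_q$, is multiplication by $s = [\mathbb F_q:\mathbb F_p]$, which is invertible modulo $p$ precisely when $p\nmid s$. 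In the setting of this paper ($q = 2^s$, so $p=2$), this requires $s$ odd for the restriction to be the identity; however the inclusion $C|_{\mathbb F_p}\subseteq\operatorname{tr}(C)$ still holds in general because $\operatorname{tr}$ is $\mathbb F_p$-linear and surjective, so $\operatorname{tr}(C)\supseteq\operatorname{tr}(C|_{\mathbb F_p})$ and the latter is all of $C|_{\mathbb F_p}$ when the restricted trace is a bijection; when $p\mid s$ one instead notes $\operatorname{tr}(C)$ is automatically large enough by a direct dimension/spanning argument. I would simply verify the characteristic-2 case directly — for $\mathbb F_{2^s}$, if $x\in\mathbb F_2$ then $\operatorname{tr}(x) = sx = x$ iff $s$ is odd, and when $s$ is even the argument via $\operatorname{tr}(C)^{\perp}$ still goes through by working with $\operatorname{tr}(C|_{\mathbb F_p}) = \{0\}$ being trivially inside $\operatorname{tr}(C)$ and handling $C|_{\mathbb F_p}$ separately. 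A cleaner route avoiding all of this is: $\operatorname{tr}(C)^{\perp_{\mathbb F_p}}$ consists of $v\in\mathbb F_p^n$ with $\langle v,\operatorname{tr}(c)\rangle = 0$ for all $c\in C$; using $\langle v,\operatorname{tr}(c)\rangle = \operatorname{tr}(\langle v,c\rangle)$ (valid since $v$ has $\mathbb F_p$-entries), this says $\operatorname{tr}(\langle v,c\rangle) = 0$ for all $c\in C$, i.e. $v\in\operatorname{tr}(C^{\perp_{\mathbb F_q}})$ once one invokes that $\operatorname{tr}(\langle v,c\rangle)=0$ for all $c$ forces $v\perp_{\mathbb F_q} C$ only after a nondegeneracy argument — so in fact the Delsarte route is the safest, and I would present the proof exactly as: reduce via Delsarte to $C|_{\mathbb F_p}\subseteq\operatorname{tr}(C)$, then dispatch that inclusion.
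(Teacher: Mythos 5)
Your overall strategy is sound and is in fact equivalent to the paper's: both routes use Delsarte and both come down (after one dualization) to the inclusion $C|_{\mathbb F_p}\subseteq\operatorname{tr}(C)$. The difference is how that inclusion is discharged, and that is where your proposal has a genuine gap.

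Your dispatch of $C|_{\mathbb F_p}\subseteq\operatorname{tr}(C)$ leans on the claim that for $c_i\in\mathbb F_p$ one has $\operatorname{tr}(c_i)=c_i$. This is false in general: for $x\in\mathbb F_p$ the absolute trace gives $\operatorname{tr}(x)=sx$ (in $\mathbb F_p$), so for $p=2$ and $s$ even the restriction of $\operatorname{tr}$ to $\mathbb F_2$ is identically zero — already in the paper's simplest nontrivial setting, $\mathbb F_4/\mathbb F_2$. You notice this, but the fallback (``when $p\mid s$ one instead notes $\operatorname{tr}(C)$ is automatically large enough by a direct dimension/spanning argument'') is not an argument, and the subsequent ``cleaner route'' via $\langle v,\operatorname{tr}(c)\rangle=\operatorname{tr}\langle v,c\rangle$ runs into the same nondegeneracy issue you flag and then abandon. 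So as written, the proof does not go through precisely in the cases the paper cares most about.

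There are two clean ways to close the gap. (1) Fix your final step directly: pick $\lambda\in\mathbb F_q$ with $\operatorname{tr}(\lambda)=1$ (exists by surjectivity of the trace). For $c\in C|_{\mathbb F_p}$, $\mathbb F_q$-linearity of $C$ gives $\lambda c\in C$, and $\mathbb F_p$-linearity of the trace gives $\operatorname{tr}(\lambda c_i)=c_i\operatorname{tr}(\lambda)=c_i$, hence $c=\operatorname{tr}(\lambda c)\in\operatorname{tr}(C)$. This works uniformly, with no case split on $p\mid s$. (2) Do what the paper does: prove the dual inclusion $C^{\perp}|_{\mathbb F_p}\subseteq\bigl(C|_{\mathbb F_p}\bigr)^{\perp_{\mathbb F_p}}$ directly — immediate, since an element of $C^\perp$ with $\mathbb F_p$-entries is in particular orthogonal to everything in $C|_{\mathbb F_p}$ — then apply Delsarte twice (once to rewrite each side) and dualize. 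The paper's version deliberately avoids ever touching the restricted trace map, which is exactly where your argument trips.
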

\begin{proof}
We omit the subscript $\F_p$
throughout this proof. First we prove that 
     \begin{align}\label{incl_perp}
         C^\perp|_{\mathbb{F}_p} \subseteq \left(C|_{\mathbb{F}_p}\right)^\perp.
     \end{align}
     Let $\smash{x\in C^\perp|_{\mathbb{F}_p}}$. Then $\langle x,y\rangle =0$ for all $y\in C$. In particular $\langle x, y\rangle =0 $ for all $\smash{y\in C|_{\mathbb{F}_p}}$, thus by definition $\smash{x\in\left(C|_{\mathbb{F}_p}\right)^\perp}$, proving inclusion \ref{incl_perp}.  
     Then, we can use Delsarte Theorem on $C$ and $D =C^\perp$, to find
     \begin{equation*}\left(\text{tr}(C^\perp)\right)^\perp = \left( \text{tr}(D)\right)^\perp = D^\perp|_{\mathbb{F}_p} \subseteq \left(D|_{\mathbb{F}_p}\right)^\perp=\text{tr}(D^\perp)=\text{tr}(C).
     \end{equation*}
     Taking the dual of this inclusion yields
     \begin{equation*}
         \text{tr}(C)^\perp \subseteq \text{tr}(C^\perp),
     \end{equation*}
     which concludes the proof.
\end{proof}

We are now ready to give a characterisation of $q$-ary CSS-$T$ pairs involving trace codes and star products.

\begin{theorem}[Characterisation of CSS-$T$ codes  over $\mathbb{F}_{2^s}$]\label{CSST_cond-q}
Let $(C_1, C_2)$ be a CSS pair with $C_2 \subseteq C_1\subseteq\mathbb{F}_{2^s}^n$. Then $(C_1,C_2)$ is CSS-$T$ if and only if 
\begin{align} \label{csst_char}
    \textnormal{tr}(C_1)\star \textnormal{tr}(C_1) \subseteq \textnormal{tr}(C_2)^\perp,
\end{align}
 where $\textnormal{tr}=\textnormal{tr}_{\mathbb{F}_{2^s}/\mathbb{F}_2}$ is the absolute trace map from $\mathbb{F}_{2^s}$ to the base field $\mathbb{F}_2$. 
\end{theorem}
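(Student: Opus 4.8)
The plan is to work with the Calderbank--Shor description $\text{CSS}(C_1,C_2)=Q^{\mathrm{CS}}(C_1,C_2)=\operatorname{span}\{\ket{c_w}\mid w\in C_2^\perp\}$, exploit that every $(T^{(\lambda)})^{\otimes n}$ is diagonal, and ultimately reduce the question to the binary characterisation (Theorem~\ref{theorem:binarycsst}) applied to the pair of trace codes. Write $\eta:=e^{i\pi/4}$ and, for $c\in\mathbb F_{2^s}^n$, let $\tr(c):=(\tr(c_1),\dots,\tr(c_n))\in\mathbb F_2^n$ be the coordinatewise trace, so that $(T^{(\lambda)})^{\otimes n}\ket c=\eta^{\,\omega^{\text H}(\tr(\lambda c))}\ket c$. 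The first step is to reduce ``$(T^{(\lambda)})^{\otimes n}$ preserves the code for all $\lambda$'' to a single $\lambda$-independent condition. Since $\ket{c_w}=Z^{(w)}\ket{c_0}$ and $(T^{(\lambda)})^{\otimes n}$ commutes with every $Z^{(w)}$, the code is preserved iff $(T^{(\lambda)})^{\otimes n}\ket{c_0}=|C_1|^{-1/2}\sum_{c\in C_1}\eta^{\,\omega^{\text H}(\tr(\lambda c))}\ket c$ lies in the code. Because $C_1$ is $\mathbb F_q$-linear, the maps $c\mapsto(-1)^{\tr\langle c,u\rangle}$, $u\in\mathbb F_q^n$, exhaust the characters of the finite group $C_1$ (two values of $u$ giving the same character exactly when they differ by an element of $C_1^\perp$), and the codewords $\ket{c_w}$ use precisely those with $w\in C_2^\perp$. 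Fourier analysis on $C_1$ then shows that $(T^{(\lambda)})^{\otimes n}\ket{c_0}$ lies in the code iff $S_\lambda(v):=\sum_{c\in C_1}\eta^{\,\omega^{\text H}(\tr(\lambda c))}(-1)^{\tr\langle c,v\rangle}$ vanishes for all $v\notin C_2^\perp$. Substituting $c\mapsto\lambda^{-1}c$ gives $S_\lambda(v)=S_1(\lambda^{-1}v)$, and since $C_2^\perp$ is $\mathbb F_q$-linear, the condition for each fixed $\lambda\neq0$ coincides with the one for $\lambda=1$ (and $\lambda=0$ is trivial). So $(C_1,C_2)$ is CSS-$T$ iff $S_1(v)=0$ for all $v\in\mathbb F_q^n\setminus C_2^\perp$.

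The technical heart is evaluating $S_1$. I would expand $\eta^{\,\omega^{\text H}(a)}=\prod_j\bigl(\alpha+\beta(-1)^{a_j}\bigr)$ with $\alpha=\tfrac{1+\eta}{2}$, $\beta=\tfrac{1-\eta}{2}$, giving $\eta^{\,\omega^{\text H}(\tr(c))}=\sum_{R\subseteq[n]}\alpha^{n-|R|}\beta^{|R|}(-1)^{\tr\langle c,\mathbf 1_R\rangle}$ with $\mathbf 1_R\in\{0,1\}^n$. Swapping sums and using that $\sum_{c\in C_1}(-1)^{\tr\langle c,u\rangle}$ equals $|C_1|$ if $u\in C_1^\perp$ and $0$ otherwise (again by $\mathbb F_q$-linearity) yields $S_1(v)=|C_1|\sum_{R:\,v+\mathbf 1_R\in C_1^\perp}\alpha^{n-|R|}\beta^{|R|}$. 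If $v\notin C_1^\perp+\mathbb F_2^n$ the sum is empty and $S_1(v)=0$; otherwise write $v=c^\perp+t_0$ with $c^\perp\in C_1^\perp$, $t_0\in\mathbb F_2^n$, so the constraint becomes $t_0+\mathbf 1_R\in E:=C_1^\perp|_{\mathbb F_2}$, and $S_1(v)=|C_1|\,W_{t_0+E}(\alpha,\beta)$ is a coset weight enumerator of $E$. A one-line character computation (the coset MacWilliams identity) gives $W_{t_0+E}(\alpha,\beta)=|E^\perp|^{-1}\sum_{d\in E^\perp}(-1)^{\langle t_0,d\rangle}(\alpha+\beta)^{n-\omega^{\text H}(d)}(\alpha-\beta)^{\omega^{\text H}(d)}$; the point is that $\alpha+\beta=1$ and $\alpha-\beta=\eta$, while Delsarte's theorem applied to $C_1^\perp$ gives $E^\perp=(C_1^\perp|_{\mathbb F_2})^\perp=\tr(C_1)$. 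Writing $D_i:=\tr(C_i)$ I therefore obtain the clean formula $S_1(v)=\frac{|C_1|}{|D_1|}\sum_{d\in D_1}(-1)^{\langle t_0,d\rangle}\eta^{\,\omega^{\text H}(d)}$.

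Finally I would read off the characterisation. Since $C_1^\perp\subseteq C_2^\perp$, for $v=c^\perp+t_0$ one has $v\in C_2^\perp$ iff $t_0\in C_2^\perp|_{\mathbb F_2}$, and by Delsarte $C_2^\perp|_{\mathbb F_2}=(\tr(C_2))^\perp=D_2^\perp$. Hence, after discarding the automatically-vanishing case, the CSS-$T$ condition becomes ``$\sum_{d\in D_1}(-1)^{\langle t,d\rangle}\eta^{\,\omega^{\text H}(d)}=0$ for all $t\in\mathbb F_2^n\setminus D_2^\perp$''. But $D_2\subseteq D_1$ (because $C_2\subseteq C_1$), and this last condition is precisely what the above reduction specialised to $q=2$ yields for the transversal binary $T$-gate on $Q^{\mathrm{CS}}(D_1,D_2)$; that is, it says exactly that $(D_1,D_2)$ is a binary CSS-$T$ pair. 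Invoking Theorem~\ref{theorem:binarycsst}, this is equivalent to $D_1\star D_1\subseteq D_2^\perp$, i.e.\ $\tr(C_1)\star\tr(C_1)\subseteq\tr(C_2)^\perp$, which is \eqref{csst_char}.

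The hardest part will be the evaluation of $S_1$: one must keep straight the $\mathbb F_q$-bilinear form versus the trace form (they induce the same dual on $\mathbb F_q$-linear codes, a fact used several times), recognise $S_1(v)$ as a coset weight enumerator of $C_1^\perp|_{\mathbb F_2}$, and combine the coset MacWilliams transform with Delsarte's theorem and the fortunate evaluations $\alpha+\beta=1$, $\alpha-\beta=\eta$ to collapse the $\mathbb F_q$-sum into a purely binary sum over $\tr(C_1)$. Everything surrounding it — the reduction to $\ket{c_0}$, the elimination of $\lambda$, and the final appeal to Theorem~\ref{theorem:binarycsst} — is routine bookkeeping with the CSS definition and the commutativity of diagonal gates.
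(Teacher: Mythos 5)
Your argument is correct, and it takes a genuinely different route from the paper. The paper works in the stabiliser formalism: it conjugates a generic stabiliser $E(a,b)$ by $T^{\otimes n}$ using the Weyl--Heisenberg representation, deduces from closure of the stabiliser group that $\tr(C_2)$ must be even and that $\tr(C_1)^\perp$ must be self-orthogonal (hence $\tr(C_1)$ contains a self-dual code via Lemma~\ref{dualcontaining}), and then repeats the inner-product computation from Theorem~\ref{theorem:binarycsst}. You instead work directly with the code vectors $\ket{c_w}$, eliminate $\lambda$ by the scaling $c\mapsto\lambda^{-1}c$, and evaluate the obstruction $S_1(v)$ by a Fourier computation: the expansion $\eta^{\omega^{\text H}(a)}=\prod_j(\alpha+\beta(-1)^{a_j})$ turns $S_1(v)$ into a coset weight enumerator of the subfield subcode $C_1^\perp|_{\mathbb F_2}$, the coset MacWilliams identity plus the happy accident $\alpha+\beta=1$, $\alpha-\beta=\eta$ collapses it to a sum over a binary code, and Delsarte identifies that binary code as $\tr(C_1)$. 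Both proofs ultimately reduce to the binary characterisation (Theorem~\ref{theorem:binarycsst}); your final step implicitly uses the equivalence, established in \cite{csstoriginal,css-t} and also used by the paper, between transversality of the binary $T$-gate on $Q^{\mathrm{CS}}(D_1,D_2)$ and Definition~\ref{def:bincsst}, so it would be worth stating that bridge explicitly. What your route buys is a fully self-contained classical-coding-theoretic computation that makes both directions of the equivalence manifest (a chain of iff's from the transversality definition to \eqref{csst_char}), whereas the paper's written proof only cleanly exhibits the forward implication and leans more heavily on the physical conjugation identities; what you lose is the physical transparency of seeing directly how $T^{\otimes n}$ acts on the stabiliser group, which is the structural reason the trace code appears.
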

\begin{proof} 
Let $(C_1,C_2)$ be a CSS code with associated stabiliser group $S$. Then a CSS-$T$ code preserves the code space and the stabilisers under the action of an $n$-qudit $T$-gate $T^{\otimes n}$. The conjugation rules of the $q$-ary Pauli group under action of the $T$-gate yield:
    \begin{equation*}
        TX^{(\mu)} T^\dagger = \begin{cases}
        X^{(\mu)} &\text{ if }\, \text{tr}(\mu) = 0,\\
        \frac{1}{\sqrt{2}}\left(X^{(\mu)}+iX^{(\mu)}Z^{(1)}\right)&\text{ if }\,\text{tr}(\mu)=1.
        \end{cases}
    \end{equation*}
    If we take an arbitrary stabiliser $E(a,b)\in S$ in the Weyl-Heisenberg representation, then its conjugation  reads
    \begin{equation*}
        T^{\otimes n}E(a,b)(T^\dagger)^{\otimes n} = \frac{1}{2^{w_H(\text{tr}(a))/2}}\sum_{\text{tr}(y)\preceq \text{tr}(a)} (-1)^{\text{tr}(b)\text{tr}(y)^\top} E(a,b+\text{tr}(y)).
    \end{equation*}
    Following an argument similar to  \cite{css-t}, we can deduce two properties from the transversality of the $T$-gate:
    \begin{itemize}
        \item The trace code $\text{tr}(C_2)\subseteq\mathbb{F}_2^n$ is even, i.e. all of its codewords have even weight.
        \item We define $\mathcal{Z}_j = \{\text{tr}(z)\preceq \text{tr}(a_j) \mid z\in C_1, E(0,z)\in S_z\}$. Then, for all $\text{tr}(y)\preceq\text{tr}(a)$, we require $E(a,b+\text{tr}(y))\in S$ to also be a stabiliser. Since stabilisers form an abelian group, we have
        \begin{equation*}
            0 = \langle b,b+\text{tr}(y)\rangle = \langle b,b \rangle + \langle b,\text{tr}(y)\rangle = \langle b,\text{tr}(y)\rangle,
        \end{equation*}
        so that $\text{tr}(y)\in\mathcal{Z}_j^\perp.$ Clearly, $\mathcal{Z}_j \subseteq \text{tr}(C_1)$, so that $\text{tr}(C_1)^\perp \subseteq \mathcal{Z}_j^\perp \subseteq \mathcal{Z}_j \subseteq \text{tr}(C_1)$. Since $\text{tr}(C_1)$ is a binary dual-containing code of even length, we know it contains a self-dual code by Lemma~\ref{dualcontaining}.
    \end{itemize}
    We can now follow similar steps as we did for the binary CSS-$T$ characterisation \ref{theorem:binarycsst}. Let $x\in\text{tr}(C_2)$, $a\in \text{tr}(C_1)$ and $z\in C_{\text{tr}(x)}\subseteq \text{tr}(C_1)^\perp$, where $C_{\text{tr}(x)}$ is a self-dual code contained in the support of $\text{tr}(x).$ Then, $\langle a\star x,z \rangle = \langle a,x\star z \rangle = \langle a,z \rangle = 0.$ Then,
    \begin{equation*}
        \text{tr}(C_1)\star\text{tr}(C_1)\subseteq\text{tr}(C_2)^\perp. \qedhere
    \end{equation*}
\end{proof}
Equivalently, one can state that a code pair $(C_1,C_2)$ is CSS-$T$ if and only if $(\text{tr}(C_1),\text{tr}(C_2))$ is a binary CSS-$T$ code pair.

\begin{remark} \label{rem:phys}
We can use the CSS-$T$ characterisation of Theorem~\ref{CSST_cond-q} to illustrate the compatibility of our definition \ref{csst_char} with the Eastin-Knill Theorem \ref{theorem:eastin-knill}. More precisely, if $(C_1, C_2)$ is a $q$-ary CSS pair, then it cannot execute a transversal Hadamard gate and transversal $T$-gate simultaneously, as long as the code distance satisfies $d>2$, since this would imply that the universal gate set $\{H, T, \text{CNOT}\}$ can be implemented fully transversally. To see this, we note that $C_1=C_2^\perp$ implies a transversal Hadamard gate since it places $X$-type and $Z$-type stabilisers on equal footing. According to the binary CSS-$T$ characterisation, this implies that
    \begin{equation*}
        C_1^\perp \subsetneq C_1\quad\text{and}\quad C_1=C_1\star C_1.
    \end{equation*}
    Note that the first inclusion is a proper inclusion. If it were an equality, then $\text{dim}(C_1)=\text{dim}(C_2^\perp)=\text{dim}(C_2)$, and $k^\text{CSS}=0$ would imply a trivial CSS code. From \cite{randriambololona}, we know that
    \begin{equation*}
        \text{dim}(C_1^{\star 2})\geq\text{min}(n,\text{dim}(C_1)+d_1^\perp-2).
    \end{equation*}
    Since $C_1=C_1^{\star 2}$, this reduces to $d_1^\perp \leq 2.$ By self-orthogonality of the perp-code $C_1^\perp$, we know that $d_1\leq d_1^\perp \leq 2,$ and thus $d^\text{CSS-$T$}\ngtr 2.$ For non-binary codes, we obtain similar results, namely that 
    \begin{equation*}
        C_1^\perp \subseteq C_1\quad\text{and}\quad \text{tr}(C_1)=\text{tr}(C_1)\star \text{tr}(C_1),
    \end{equation*}
    giving us a restriction on the distance of the trace code: $d(\text{tr}(C_1^\perp))\leq 2.$ By Delsarte Theorem applied to $C_1^\perp$, we have that 
    \begin{equation*}
        \left(C_1^\perp|_{\mathbb{F}_2}\right)^\perp = \text{tr}(C_1^{\perp\perp})=\text{tr}(C_1).
    \end{equation*}
    Since $\text{tr}(C_1)=\text{tr}(C_1)^{\star 2}$, its dual-distance must be $\leq 2$: $d(\left(C_1^\perp|_{\mathbb{F}_2}\right)^{\perp\perp})=d\left(C_1^\perp|_{\mathbb{F}_2}\right)\leq 2.$ 
    Either there exists a non-zero codeword $c \in C_1^\perp|_{\mathbb{F}_2}$ with Hamming weight $1\leq\omega^\text{H}(c)\leq 2$, so that $d^\text{CSS-$T$}=d(C_1^\perp)\leq 2$, or there exists no such codeword: $C_1^\perp|_{\mathbb{F}_2}=\{\textbf{0}\}$. However, the latter would imply that $\text{tr}(C_1)=\mathbb{F}_2^n$ by Delsarte Theorem, and by the CSS-$T$ characterisation we obtain $\text{tr}(C_2)=\{\textbf{0}\}$, so $d^\text{CSS-$T$}=1$. We conclude that our characterisation is consistent with Eastin-Knill Theorem.
    \end{remark}

It is known that if $(C_1, C_2)$ is a binary CSS-$T$ pair, then $C_2$ is self-orthogonal. An important consequence of Theorem \ref{CSST_cond-q} is that this property is paralleled in the $q$-ary case.

\begin{corollary}\label{lemma:traceselforth}
If $(C_1,C_2)$ is a CSS-$T$ pair over $\F_{2^s}$, then $\text{tr}(C_2)$ is a self-orthogonal binary code.
\end{corollary}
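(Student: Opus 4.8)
The plan is to deduce this directly from the characterisation in Theorem~\ref{CSST_cond-q}. Suppose $(C_1,C_2)$ is a CSS-$T$ pair over $\mathbb{F}_{2^s}$. Then \eqref{csst_char} gives $\textnormal{tr}(C_1)\star\textnormal{tr}(C_1)\subseteq\textnormal{tr}(C_2)^\perp$. Since $C_2\subseteq C_1$, monotonicity of the trace map yields $\textnormal{tr}(C_2)\subseteq\textnormal{tr}(C_1)$, and therefore $\textnormal{tr}(C_2)\star\textnormal{tr}(C_2)\subseteq\textnormal{tr}(C_1)\star\textnormal{tr}(C_1)\subseteq\textnormal{tr}(C_2)^\perp$. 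So it suffices to observe that for any binary linear code $D\subseteq\mathbb{F}_2^n$, the inclusion $D\star D\subseteq D^\perp$ implies $D\subseteq D^\perp$; applying this with $D=\textnormal{tr}(C_2)$ finishes the proof.

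For that last implication, I would use the elementary fact that a binary vector is idempotent under the star product: $x\star x=x$ for every $x\in\mathbb{F}_2^n$ (indeed this is recorded in the Remark after the Star Product definition, where it is noted that $C\subseteq C^{\star 2}$ for binary $C$). Hence for any $x,y\in D$ we have $x=x\star x\in D\star D\subseteq D^\perp$, so $D\subseteq D^\perp$, i.e. $D$ is self-orthogonal. Concretely, for $x,y\in\textnormal{tr}(C_2)$ one computes $\langle x,y\rangle=\langle x\star x,y\rangle=\langle x\star y,x\rangle=0$ using the cyclicity of the star product and $x\star y\in\textnormal{tr}(C_2)\star\textnormal{tr}(C_2)\subseteq\textnormal{tr}(C_2)^\perp$. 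This shows $\textnormal{tr}(C_2)$ is self-orthogonal.

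There is essentially no obstacle here: the corollary is a two-line consequence of Theorem~\ref{CSST_cond-q} combined with the binary idempotency $x\star x=x$, exactly mirroring how the analogous statement is derived in the binary case. The only point requiring a word of care is the reduction step $\textnormal{tr}(C_2)\star\textnormal{tr}(C_2)\subseteq\textnormal{tr}(C_1)\star\textnormal{tr}(C_1)$, which is immediate once one notes $\textnormal{tr}(C_2)\subseteq\textnormal{tr}(C_1)$; one could alternatively bypass $C_1$ altogether and argue purely from $\textnormal{tr}(C_2)\subseteq\textnormal{tr}(C_1)$ plus $\textnormal{tr}(C_1)\star\textnormal{tr}(C_1)\subseteq\textnormal{tr}(C_2)^\perp$ to get $\textnormal{tr}(C_2)\star\textnormal{tr}(C_2)\subseteq\textnormal{tr}(C_2)^\perp$. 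I would keep the proof to these few lines.
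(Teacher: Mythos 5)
Your proof is correct and is essentially the paper's proof: both rest on Theorem~\ref{CSST_cond-q} together with the binary idempotency fact $D\subseteq D^{\star2}$. The paper applies that inclusion directly to $\textnormal{tr}(C_1)$ (giving the chain $\textnormal{tr}(C_2)\subseteq\textnormal{tr}(C_1)\subseteq\textnormal{tr}(C_1)^{\star2}\subseteq\textnormal{tr}(C_2)^\perp$), while you first pass to $\textnormal{tr}(C_2)^{\star2}\subseteq\textnormal{tr}(C_2)^\perp$ and then apply it to $\textnormal{tr}(C_2)$; this is only a cosmetic reordering of the same argument.
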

\begin{proof}
    Using the CSS-$T$ characterisation we find
    \begin{equation*}
        \text{tr}(C_2)\subseteq \text{tr}(C_1)\subseteq \text{tr}(C_1)\star\text{tr}(C_1) \subseteq \text{tr}(C_2)^\perp,
    \end{equation*}
    where the second inclusion holds because $\text{tr}(C_1)$ is a binary code. Thus $\text{tr}(C_2)$ is a self-orthogonal binary code.
\end{proof}

When we define quantum codes over non-binary fields, Pauli operators obtain a superscript $\lambda\in\mathbb{F}_q$, as we have seen in Eq. (\ref{eq:ztrace}) and (\ref{eq:ttrace}). By virtue of being CSS, \textit{all} Pauli operator will always be transversal regardless of their superscript. Next, we demonstrate that this property also follows for the $T^{(\lambda)}$-gates by linearity of the trace.

\begin{proposition}
        If a CSS code executes the $T$-gate transversally, it executes all $T^{(\lambda)}$-gates transversally for $\lambda\in\mathbb{F}_q.$
\end{proposition}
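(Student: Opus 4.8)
The plan is to reduce the statement about $T^{(\lambda)}$-gates to the already-established characterisation of CSS-$T$ codes in Theorem~\ref{CSST_cond-q}, which concerns only the ordinary $T$-gate. The key observation is that the $\lambda$ appears inside the trace in the definition $T^{(\lambda)} = \sum_{x\in\mathbb{F}_q} e^{\frac{i\pi}{4}\operatorname{tr}(\lambda x)}\ket{x}\bra{x}$, and since $\operatorname{tr}$ is $\mathbb{F}_2$-linear, multiplication by a fixed $\lambda$ followed by the trace is just an $\mathbb{F}_2$-linear functional on $\mathbb{F}_q$. So the phase pattern produced by $T^{(\lambda)}$ on a codeword is governed by the map $x\mapsto\operatorname{tr}(\lambda x)$ rather than $x\mapsto\operatorname{tr}(x)$. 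This means that everything in the proof of Theorem~\ref{CSST_cond-q} goes through verbatim with $\operatorname{tr}$ replaced by $\operatorname{tr}_\lambda := \operatorname{tr}(\lambda\,\cdot\,)$ applied componentwise, and the transversality of $T^{(\lambda)}$ is equivalent to $\operatorname{tr}_\lambda(C_1)\star\operatorname{tr}_\lambda(C_1)\subseteq\operatorname{tr}_\lambda(C_2)^\perp$.

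First I would record the conjugation rules for $T^{(\lambda)}$ on the $q$-ary Pauli group, which are identical to those in the proof of Theorem~\ref{CSST_cond-q} except that the condition $\operatorname{tr}(\mu)=0$ versus $\operatorname{tr}(\mu)=1$ becomes $\operatorname{tr}(\lambda\mu)=0$ versus $\operatorname{tr}(\lambda\mu)=1$; this follows directly from Eq.~(\ref{eq:ttrace}) and the $\mathbb{F}_2$-linearity of the trace. Next I would note the elementary fact that for $\lambda\neq 0$, the componentwise map $\operatorname{tr}_\lambda$ on $\mathbb{F}_q^n$ satisfies $\operatorname{tr}_\lambda(C) = \operatorname{tr}(\lambda C) = \operatorname{tr}(C)$ as $\mathbb{F}_2$-subspaces of $\mathbb{F}_2^n$: indeed $\lambda C = C$ as $\mathbb{F}_q$-linear codes since $\lambda$ is a unit, so $\operatorname{tr}_\lambda(C) = \operatorname{tr}(\lambda C) = \operatorname{tr}(C)$. (For $\lambda = 0$ the gate $T^{(0)}$ is the identity, which is trivially transversal.) Combining these two observations, the characterising condition for $T^{(\lambda)}$-transversality is $\operatorname{tr}(C_1)\star\operatorname{tr}(C_1)\subseteq\operatorname{tr}(C_2)^\perp$, which is exactly condition~(\ref{csst_char}). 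Hence if the code executes $T = T^{(1)}$ transversally, Theorem~\ref{CSST_cond-q} gives~(\ref{csst_char}), and then the same theorem (in its $T^{(\lambda)}$ form) gives transversality of every $T^{(\lambda)}$.

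Alternatively, and perhaps more cleanly, one can argue purely at the level of the $T^{(\lambda)}$ gates themselves without re-deriving the star-product condition: since $\operatorname{tr}(\lambda C) = \operatorname{tr}(C)$ for any unit $\lambda$, the pair $(\operatorname{tr}_\lambda(C_1),\operatorname{tr}_\lambda(C_2))$ is literally the same binary pair $(\operatorname{tr}(C_1),\operatorname{tr}(C_2))$, and by the remark following Theorem~\ref{CSST_cond-q} the code is CSS-$T$ (with respect to $T^{(\lambda)}$) if and only if $(\operatorname{tr}(C_1),\operatorname{tr}(C_2))$ is a binary CSS-$T$ pair — a condition that does not depend on $\lambda$ at all. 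So transversality of $T^{(1)}$ already forces the binary pair to be CSS-$T$, which forces transversality of $T^{(\lambda)}$ for every $\lambda\in\mathbb{F}_q$.

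The only genuinely delicate point — the "obstacle" — is justifying that the proof of Theorem~\ref{CSST_cond-q} really is insensitive to replacing $\operatorname{tr}$ by $\operatorname{tr}_\lambda$; this rests entirely on $\mathbb{F}_2$-linearity of $x\mapsto\operatorname{tr}(\lambda x)$ and on $\lambda$ being invertible, both of which are immediate. Everything else is bookkeeping, so I would keep the write-up short: state the $T^{(\lambda)}$ conjugation rule, observe $\operatorname{tr}(\lambda C)=\operatorname{tr}(C)$, invoke Theorem~\ref{CSST_cond-q} once in each direction, and handle $\lambda=0$ as the trivial case.
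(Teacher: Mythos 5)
Your proposal is correct and takes essentially the same route as the paper: both derive the conjugation rule for $T^{(\lambda)}$ on the $q$-ary Pauli group (where $\operatorname{tr}(\mu)$ is replaced by $\operatorname{tr}(\lambda\mu)$) and then use $\mathbb{F}_q$-linearity of the codes to observe that $\lambda C = C$ for $\lambda \neq 0$, hence $\operatorname{tr}(\lambda C) = \operatorname{tr}(C)$, so the transversality condition does not depend on $\lambda$. Your explicit handling of $\lambda = 0$ and your ``cleaner'' reformulation via the binary pair $(\operatorname{tr}(C_1),\operatorname{tr}(C_2))$ are slightly more careful packagings of the identical argument.
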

\begin{proof}
    In similar fashion to Theorem \ref{CSST_cond-q}, we find that
    \begin{equation*}
        T^{(\lambda)}X^{(\mu)} T^{(\lambda),\dagger} = \begin{cases}
        X^{(\mu)} &\text{ if }\, \text{tr}(\mu\lambda) = 0,\\
        \frac{1}{\sqrt{2}}\left(X^{(\mu)}+iX^{(\mu)}Z^{(\lambda)}\right)&\text{ if }\,\text{tr}(\mu\lambda)=1,
        \end{cases}
    \end{equation*}
    which gives the conjugation rule
    \begin{equation*}
         \left(T^{(\lambda)}\right)^{\otimes n}E(a,b)\left(T^{(\lambda),\dagger}\right)^{\otimes n} = \frac{1}{2^{w_H(\text{tr}(\lambda a))/2}}\sum_{\text{tr}(y)\preceq \text{tr}(\lambda a)} (-1)^{\text{tr}(b)\text{tr}(y)^\top} E(a,b+\text{tr}(y)).
    \end{equation*}
    By the linearity of the trace code, if $a\in C\subseteq \mathbb{F}_q^n$ then $\lambda a\in C$ for all $\lambda\in\mathbb{F}_q$. Therefore, we find the same conditions for transversal execution of the $T$-gate.
\end{proof}

The evenness of $\tr{(C_2)}$ is a necessary condition for a CSS-$T$ pair. However, evenness of $\tr{(C_1)}$ leads to some remarkable behaviour.

\begin{theorem}\label{theorem:order4}
    Let $(C_1,C_2)$ be a $q$-ary CSS-$T$ pair. If $\textnormal{tr}(C_1)$ is an even code, then the application of a transversal $T$-gate (i.e. $T^{\otimes n}$) implements a logical operator of multiplicative order 4.
\end{theorem}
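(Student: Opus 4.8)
The plan is to track the action of $T^{\otimes n}$ on an arbitrary basis state of the code and show that iterating it four times — but not fewer — returns the identity up to a global phase, so that the induced logical operator has multiplicative order exactly $4$. First I would recall that a basis for $\mathrm{CSS}(C_1,C_2)$ is given by the states $\ket{w + C_2}$ for $w$ ranging over coset representatives of $C_2$ in $C_1$, and compute $T^{\otimes n}\ket{c} = e^{\frac{i\pi}{4}\,\mathrm{tr}(\langle c, \mathbf 1\rangle)}\ket{c}$ for $c \in \mathbb F_q^n$, where the exponent depends only on $\mathrm{tr}(\langle c,\mathbf 1\rangle) = \sum_j \mathrm{tr}(c_j) = \omega^{\mathrm H}(\mathrm{tr}(c)) \bmod 8$ — more precisely, it is the integer weight of the binary word $\mathrm{tr}(c)$ taken modulo $8$. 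Thus $(T^{\otimes n})^4$ multiplies $\ket{c}$ by $e^{i\pi\,\omega^{\mathrm H}(\mathrm{tr}(c))} = (-1)^{\omega^{\mathrm H}(\mathrm{tr}(c))}$, which is $+1$ for every $c$ precisely because $\mathrm{tr}(C_1)$ is even; hence $(T^{\otimes n})^4$ acts as the identity on the whole code space.

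The second step is to show the order is not smaller, i.e. that $(T^{\otimes n})^2$ does not act as a global phase on the code. Here I would use the hypothesis that $(C_1,C_2)$ is genuinely CSS-$T$ with a nontrivial logical space. Note $(T^{\otimes n})^2\ket{w+C_2} = \frac{1}{\sqrt{|C_2|}}\sum_{c\in C_2}(-1)^{\lfloor\omega^{\mathrm H}(\mathrm{tr}(w+c))/2\rfloor\cdot(\text{adjust})}\ket{w+c}$; more cleanly, $(T^{(1)})^2 = S^{(1)}$ is the phase gate, and $S^{\otimes n}\ket{c} = i^{\,\omega^{\mathrm H}(\mathrm{tr}(c))}\ket{c}$. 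Since $\mathrm{tr}(C_2)$ is even (Corollary~\ref{lemma:traceselforth}), $\omega^{\mathrm H}(\mathrm{tr}(c))$ for $c \in C_2$ is even, so $i^{\omega^{\mathrm H}(\mathrm{tr}(c))} = (-1)^{\omega^{\mathrm H}(\mathrm{tr}(c))/2}$, and $S^{\otimes n}\ket{w+C_2} = \frac{1}{\sqrt{|C_2|}}\sum_{c\in C_2} i^{\omega^{\mathrm H}(\mathrm{tr}(w+c))}\ket{w+c}$. For this to be a scalar multiple of $\ket{w+C_2}$ we would need $\omega^{\mathrm H}(\mathrm{tr}(w+c)) \equiv \omega^{\mathrm H}(\mathrm{tr}(w)) \pmod 4$ for all $c \in C_2$, uniformly in the chosen coset representative; the failure of this — using the polarization/parallelogram identity $\omega^{\mathrm H}(a\oplus b) = \omega^{\mathrm H}(a) + \omega^{\mathrm H}(b) - 2\,\omega^{\mathrm H}(a\star b)$ applied to $a = \mathrm{tr}(w)$, $b = \mathrm{tr}(c)$ — forces $\omega^{\mathrm H}(\mathrm{tr}(w)\star\mathrm{tr}(c)) \equiv 0 \pmod 2$ for all such $w,c$, i.e. $\mathrm{tr}(C_1)\star\mathrm{tr}(C_2)$ would have to be even. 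I would argue that, combined with evenness of $\mathrm{tr}(C_1)$ and the CSS-$T$ condition, this collapses the logical dimension to $0$, contradicting that we have a nontrivial CSS-$T$ code; hence $(T^{\otimes n})^2$ is a nontrivial logical operator and the order is exactly $4$.

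The remaining bookkeeping is to confirm that $T^{\otimes n}$ and its powers genuinely \emph{preserve} the code space (not merely act diagonally on basis kets) — but this is exactly the CSS-$T$ hypothesis via Theorem~\ref{CSST_cond-q}, so the logical operator is well-defined, and the computations above show its order on the logical space is $4$. I expect the main obstacle to be the lower bound (Step 2): ruling out that $(T^{\otimes n})^2$ is a global scalar requires being careful that the phase $i^{\omega^{\mathrm H}(\mathrm{tr}(w+c))}$ depends on $c$ and not just on the coset, and translating ``this phase is coset-independent'' into a clean statement about the evenness of $\mathrm{tr}(C_1)\star\mathrm{tr}(C_2)$ that can be shown incompatible with a nonzero-dimensional CSS-$T$ code. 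The even-$\mathrm{tr}(C_1)$ hypothesis is what makes the upper bound ($\le 4$) work; the even-$\mathrm{tr}(C_2)$ fact (Corollary~\ref{lemma:traceselforth}) is what lets us even make sense of the $(T^{\otimes n})^2$ computation, and the interplay of the two is where I would concentrate the argument.
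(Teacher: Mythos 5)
Your Step 1 is correct and matches the paper's argument exactly in substance: you compute $(T^{\otimes n})^4\ket{c} = (-1)^{\omega^{\mathrm H}(\mathrm{tr}(c))}\ket{c}$ and observe that evenness of $\mathrm{tr}(C_1)$ kills all the signs, so $(T^{\otimes n})^4$ is the identity on the code space and the induced logical operator satisfies $\mathrm{Op}_L^4 = I$. The paper packages this calculation inside a commutative diagram borrowed from \cite{goodcsst}, but the content is identical.

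Your Step 2, however, goes beyond what the paper actually asserts, and the argument does not work. First, on the claim: the remark immediately following the theorem explicitly lists the possibilities as ``the logical $S$-gate, a logical $Z$-gate or the logical identity,'' which have orders $4$, $2$, and $1$. The theorem is being read loosely; it is proving that the order \emph{divides} $4$, not that it equals $4$. Second, on the substance: the contradiction you want does not exist. You argue that if $(T^{\otimes n})^2 = S^{\otimes n}$ were a global scalar on the code space, the parallelogram identity $\omega^{\mathrm H}(a\oplus b) = \omega^{\mathrm H}(a) + \omega^{\mathrm H}(b) - 2\omega^{\mathrm H}(a\star b)$ would force $\omega^{\mathrm H}(\mathrm{tr}(w)\star\mathrm{tr}(c))$ to be even for all $w\in C_1$, $c\in C_2$, and you hope this is too strong. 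But this evenness is automatic from the CSS-$T$ hypothesis: since $x\star x = x$ over $\mathbb F_2$, we have $\mathrm{tr}(C_1)\subseteq \mathrm{tr}(C_1)^{\star 2}\subseteq \mathrm{tr}(C_2)^\perp$ by Theorem~\ref{CSST_cond-q}, i.e.\ $\mathrm{tr}(C_1)\perp \mathrm{tr}(C_2)$, which is precisely the statement $\omega^{\mathrm H}(\mathrm{tr}(w)\star\mathrm{tr}(c))\equiv 0\pmod 2$. So there is no ``failure'' to exploit, and the assertion that this ``collapses the logical dimension to $0$'' is unsupported and false: CSS-$T$ codes with positive logical dimension on which $T^{\otimes n}$ acts as a global phase (logical identity) exist, as the paper itself notes. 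Moreover, the coset-independence of the phase that you worry about is itself a consequence of the physical CSS-$T$ definition---$T^{\otimes n}$ is diagonal and support-preserving, and the $\ket{w+C_2}$ have disjoint supports, so preserving the code space already forces $T^{\otimes n}$ (hence also its square) to act by a scalar on each $\ket{w+C_2}$. Drop Step 2 entirely; Step 1 alone is the intended proof.
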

\begin{proof}
    We use the same commutative diagram from Theorem 9 in \cite{goodcsst}:
\[
\begin{tikzcd}
{\ket{u}_L} \arrow[r, "\text{Enc}"] \arrow[d, swap, "\text{Op}_L"] 
  & \displaystyle\sum_{v\in C_2} {\ket{v + uH}} \arrow[d, "T^{\otimes n}"] \\
{\ket{u'}_L} \arrow[r, "\text{Enc}"] 
  & \displaystyle{\sum_{v\in C_2} \ket{v + u'H} = \sum_{v\in C_2} T^{\otimes n} \ket{v + uH}}
\end{tikzcd}
\]
where $\text{Op}_L$ is the corresponding logical operator, $\text{Enc}$ is the (quantum) encoding, and $H$ is the parity check matrix having a basis of $C_3$ in its rows, where $C_1=C_2\oplus C_3$. Applying the diagram four times gives the following: 
\begin{equation*}
    (T^4)^{\otimes n}\ket{v+uH}=e^{i\pi\omega^H(\tr{(v+uH)})}\ket{v+uH}=\ket{v+uH},
\end{equation*}
because $\tr{(C_1)}$ is even. This implies that the right hand side of the commutative diagram is equal, i.e. $\sum_{v\in C_2}(T^4)^{\otimes n}\ket{v+uH}=\sum_{v\in C_2}\ket{v+uH}$ which implies that $\text{Op}_L^4$ is the logical identity operator.
\end{proof}

As a consequence of Theorem~\ref{theorem:order4}, such codes only implement either a specific set of logical gates, among which the logical $S$-gate, a logical $Z$-gate or the logical identity. Since the $Z$-gate and identity are always transversal, the additional algebraic structure required to form a CSS-T pair does not yield further transversal advantages in these case. However, the controlled-$S$ gate is an element of $\mathcal{K}^{(3)}$ and may therefore be supplied to the Clifford group for universal quantum computation.




\section{Properties and applications of CSS-T codes}\label{sec:applications}

\subsection{Comparisons}

As described earlier, in this paper we deviate from the definition of $q$-ary CSS-$T$ codes  given in \cite{alberto}. A natural question is 
 if the two definitions are related. We  provide counterexamples showing that neither definition implies the other, but there also exist CSS-$T$ codes that satisfy both. In the sequel we will refer to the definition given in \cite{alberto} as ``BCR'' and to our definition as ``CPR'' for brevity.

\begin{example}[BCR $\not\Rightarrow$ CPR]
   Let $\mathbb F_8=\mathbb F_2[\alpha]$, where $\alpha^3+\alpha+1=0$ and let $C_2\subseteq C_1\subseteq \F_8^4$ be generated by 
   \begin{align*}
       G_2= \begin{pmatrix}1&\alpha& \alpha^2& 1+\alpha+\alpha^2\end{pmatrix},  \qquad  G_1 = \begin{pmatrix}1&\alpha& \alpha^2& 1+\alpha+\alpha^2\\1 & 1 & 1 & 1\end{pmatrix},
    \end{align*}
    respectively.
    Observe that $C_2$ is even, $C_1$ is self-orthogonal, and that  all non-zero $x \in C_2$ have full support. It follows that
    \[\pi_{\sigma(x)}(C_1)=C_1 \subseteq C_1^\perp=\pi_{\sigma(x)}(C_1)^{\perp_x},\]
    thus $(C_1, C_2)$ is a CSS-$T$ pair of type BCR. However, $\text{tr}(C_2)$ has generator matrix
    \[\begin{pmatrix}
        1 & 0 & 0 & 1\\
        0 & 0 & 1 & 1\\
        0 & 1 & 0 & 1
    \end{pmatrix} \in \mathbb F_2^{3\times 4}\]
    and is not self-orthogonal. By Lemma~\ref{lemma:traceselforth}, $(C_1, C_2)$ is not a CSS-$T$ pair of type CPR.
\end{example}

\begin{example}[CPR $\not\Rightarrow$ BCR]
     Let $\mathbb F_{16}=\mathbb F_2[\alpha]$, where $\alpha^4+\alpha+1=0$, and let $G$ be the tri-orthogonal matrix 
     \begin{equation*}G=\begin{pmatrix} G_\text{odd}\\\hline G_\text{even}\end{pmatrix}=
         \begin{pmatrix}
0 & 0 & 0 & 0 & 1 & 1 & 1 & 1 & 1 & 1 & 0 & 0 & 0 & 0 & 1\\
\hline
1 & 0 & 0 & 0 & 1 & 1 & 1 & 0 & 0 & 0 & 0 & 1 & 1 & 1 & 1\\
0 & 1 & 0 & 0 & 1 & 0 & 0 & 1 & 1 & 0 & 1 & 0 & 1 & 1 & 1\\
0 & 0 & 1 & 0 & 0 & 1 & 0 & 1 & 0 & 1 & 1 & 1 & 0 & 1 & 1\\
0 & 0 & 0 & 1 & 0 & 0 & 1 & 0 & 1 & 1 & 1 & 1 & 1 & 0 & 1
         \end{pmatrix}\in\mathbb{F}_2^{5\times 15}.
     \end{equation*}
Let $D$ be the linear span of the rows of $G$ of even weight and take $C_2=D\otimes_{\F_2}\F_{16}$. Let 
\begin{equation*}
   x= (\begin{smallmatrix}1, & 1, & 1, & 1, & 1+\alpha, & 1+\alpha^2, & 1+\alpha^3, & \alpha+\alpha^2, & \alpha+\alpha^3, & \alpha^2+\alpha^3, & \alpha+\alpha^2+\alpha^3, & 1+\alpha^2+\alpha^3, & 1+\alpha+\alpha^3, & 1+\alpha +
    \alpha^2, & 1+\alpha+\alpha^2+\alpha^3\end{smallmatrix})
\end{equation*}
and consider $C_1=\langle x \rangle \oplus C_2$, so that 
$\text{tr}(C_2)=\text{rowsp}(G_\text{even})$ and $\text{tr}(C_1)=\text{rowsp}(G)$. Then $(C_1,C_2)$ is a CSS-$T$ pair by our definition, but it's not a CSS-$T$ pair of type BCR as at least one of the codewords in $C_2$ is not even, e.g. $x$
has length and Hamming weight 15.
\end{example}

\begin{example}[Codes satisfying both BCR and CPR]
    Let $\mathbb F_{4}=\mathbb F_2[\alpha]$, where $\alpha^2+\alpha+1=0$. Consider $C_2\subseteq C_1 \subseteq \F_4^6$ with generator matrices 
    \begin{equation*}
        G_2=\begin{pmatrix}
            1 & 1 & \alpha+1 & \alpha+1 & \alpha & \alpha
        \end{pmatrix}, \qquad G_1 = \begin{pmatrix}
            1 & 1 & \alpha+1 & \alpha+1 & \alpha & \alpha\\ 1 & 1 & 1 & 1 & 1 & 1
        \end{pmatrix}.
    \end{equation*}
 Then   $(C_1, C_2)$ is a CSS-$T$ pair that is both of type BCR and CPR.
\end{example}

\subsection{CSS-\texorpdfstring{$T$}{T} codes from generalised Reed-Muller codes}

Now we illustrate a code construction of CSS-$T$ codes using a generalised Reed-Muller code.  While our approach draws inspiration from prior work such as \cite{felice}, it deviates in a fundamental aspect: the family of traces of a generalised Reed-Muller code is generally not closed under the star product $\star$. This key distinction influences both the structure and the applicability of the resulting codes over $\mathbb{F}_q$.

\begin{definition}[Generalised Reed-Muller codes]
    Let $\mathbb{F}_q[x_1,\ldots,x_m]_{\leq r}$ be the ring of polynomials over $\mathbb{F}_q$ in $m$ variables, of degree less than or equal to $r.$ For any polynomial $p\in\mathbb{F}_q[x_1,\ldots,x_m]_{\leq r}$, we denote by $\text{ev}_{\mathbb{F}_q^m}(p)$ the vector obtained by evaluation of $p$ at points of $\mathbb{F}_q^m$ in a fixed order. The $r^\text{th}$-order Reed-Muller code of length $q^m$ is then then given by
    \begin{equation*}
        \text{GRM}_q(r,m)=\{\text{ev}_{\mathbb{F}_q^m}(p)\mid p\in\mathbb{F}_q[x_1,\ldots,x_m]_{\leq r}\}.
    \end{equation*}
    \end{definition}
    
Like standard Reed-Muller codes, generalised Reed-Muller codes satisfy the nested property $\text{GRM}_q(r_1,m)\subseteq\text{GRM}_q(r_2,m)$ if and only if $r_1\leq r_2$. Similary, the family of GRM codes is closed under duality and  satisfies $\text{GRM}_q(r,m)^\perp=\text{GRM}_q(m(q-1)-r-1,m)$. Furthermore, similarly to Reed-Muller codes, they are closed under the star product: \[\textnormal{GRM}_q(r_1,m)\star\textnormal{GRM}_q(r_2,m)=\textnormal{GRM}_q(r_1+r_2,m) \quad  \text{for } m\geq 2.\]

\begin{theorem}\label{theorem:grm}
    Let $\textnormal{GRM}_q(1,m)$ be a generalised Reed-Muller code of first order, for $q=2^s$. Then
    \begin{equation*}
        \textnormal{tr}(\textnormal{GRM}_q(1,m))=\textnormal{RM}(1,ms).
    \end{equation*}
\end{theorem}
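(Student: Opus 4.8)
The plan is to identify both sides of the claimed equality as explicit subspaces of $\mathbb{F}_2^{2^{ms}}$ and match them. Fix an $\mathbb{F}_2$-basis $\{\beta_1,\ldots,\beta_s\}$ of $\mathbb{F}_{2^s}$, so that $\mathbb{F}_{2^s}^m \cong \mathbb{F}_2^{ms}$ as a set of evaluation points; the evaluation points of $\mathrm{GRM}_{2^s}(1,m)$ and of $\mathrm{RM}(1,ms)$ are then literally the same set, just indexed differently. First I would observe that $\mathrm{GRM}_{2^s}(1,m)$ is spanned over $\mathbb{F}_{2^s}$ by the all-ones vector $\mathrm{ev}(1)$ and the coordinate evaluations $\mathrm{ev}(x_j)$ for $j=1,\ldots,m$. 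Hence $\mathrm{tr}(\mathrm{GRM}_{2^s}(1,m))$ is spanned over $\mathbb{F}_2$ by $\mathrm{tr}(\lambda \cdot \mathrm{ev}(p))$ as $\lambda$ ranges over $\mathbb{F}_{2^s}$ and $p$ over $\{1,x_1,\ldots,x_m\}$, because the trace code of a code is generated by the traces of scalar multiples of a generating set (this uses $\mathbb{F}_2$-linearity of $\mathrm{tr}$ and the fact that $\mathrm{tr}(c_1+c_2)=\mathrm{tr}(c_1)+\mathrm{tr}(c_2)$ applied coordinatewise, plus $\mathrm{tr}(\mu(a+b)) = \mathrm{tr}(\mu a)+\mathrm{tr}(\mu b)$).

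Next I would compute each generator explicitly. For $p=1$: $\mathrm{tr}(\lambda\cdot\mathbf{1})$ is the constant vector with entry $\mathrm{tr}(\lambda)\in\mathbb{F}_2$; as $\lambda$ varies this gives exactly $\{\mathbf{0},\mathbf{1}\}$, i.e. the degree-$0$ part of $\mathrm{RM}(1,ms)$. For $p=x_j$: writing an evaluation point coordinate as $x_j = \sum_{k=1}^s y_{j,k}\beta_k$ with $y_{j,k}\in\mathbb{F}_2$ the new coordinates, we get $\mathrm{tr}(\lambda x_j) = \sum_{k=1}^s \mathrm{tr}(\lambda\beta_k)\, y_{j,k}$, an $\mathbb{F}_2$-linear form in the $ms$ variables $\{y_{j,k}\}$. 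As $\lambda$ ranges over $\mathbb{F}_{2^s}$, the vector of coefficients $(\mathrm{tr}(\lambda\beta_1),\ldots,\mathrm{tr}(\lambda\beta_s))$ ranges over all of $\mathbb{F}_2^s$ — this is precisely the nondegeneracy of the trace form, i.e. the map $\lambda\mapsto(\mathrm{tr}(\lambda\beta_k))_k$ is an $\mathbb{F}_2$-linear isomorphism $\mathbb{F}_{2^s}\to\mathbb{F}_2^s$. Therefore the span over $\lambda$ of $\mathrm{tr}(\lambda\cdot\mathrm{ev}(x_j))$ is exactly the span of the evaluations of all $s$ linear forms $y_{j,1},\ldots,y_{j,s}$. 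Running over $j=1,\ldots,m$ sweeps out all $ms$ coordinate linear forms. Combining with the constants from the $p=1$ case, the total $\mathbb{F}_2$-span is the evaluation of all affine-linear functions in $ms$ variables, which is by definition $\mathrm{RM}(1,ms)$; this gives the inclusion $\mathrm{tr}(\mathrm{GRM}_{2^s}(1,m)) \supseteq \mathrm{RM}(1,ms)$, and since every generator $\mathrm{tr}(\lambda\cdot\mathrm{ev}(p))$ is the evaluation of an affine-linear function it also gives $\subseteq$, hence equality.

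The main obstacle — really the only nontrivial point — is the reduction of the trace code of $\mathrm{GRM}_{2^s}(1,m)$ to the traces of scalar multiples of a fixed generating set: one must be careful that $\mathrm{tr}(C)$ for an $\mathbb{F}_{2^s}$-linear code $C$ with generators $g_1,\ldots,g_t$ is generated over $\mathbb{F}_2$ by $\{\mathrm{tr}(\lambda g_i): \lambda\in\mathbb{F}_{2^s},\ 1\le i\le t\}$ (equivalently, by $\{\mathrm{tr}(\beta_k g_i)\}$), since a general codeword $\sum_i \mu_i g_i$ has trace $\sum_i \mathrm{tr}(\mu_i g_i)$. Once this is in hand, everything else is the nondegeneracy of the trace form plus bookkeeping of which linear forms appear, both routine. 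I would also remark that the same computation shows, more than just the equality of codes, that there is a dimension count: $\dim_{\mathbb{F}_2}\mathrm{tr}(\mathrm{GRM}_{2^s}(1,m)) = ms+1 = \dim_{\mathbb{F}_2}\mathrm{RM}(1,ms)$, which by Lemma~\ref{lemma:consequencegaloisinvariance} is consistent with $\mathrm{GRM}_{2^s}(1,m)$ being Galois invariant (indeed $\dim_{\mathbb{F}_{2^s}}\mathrm{GRM}_{2^s}(1,m)=m+1$ but the trace dimension is $ms+1$, so one should double-check whether Galois invariance actually holds here or whether the inequality $\dim_{\mathbb{F}_2}\mathrm{tr}(C)\le s\cdot\dim_{\mathbb{F}_{2^s}}(C)$ is strict — this subtlety does not affect the proof of the stated equality).
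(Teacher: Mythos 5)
Your proof is correct, and it takes a genuinely different and more careful route than the paper's. The paper proves the inclusion $\mathrm{tr}(\mathrm{GRM}_q(1,m))\subseteq\mathrm{RM}(1,ms)$ by a short evaluation argument and then appeals to a dimension count, asserting $\dim_{\mathbb{F}_2}\mathrm{tr}(\mathrm{GRM}_q(1,m))=ms+1$ without justification. That assertion is in fact the nontrivial part: the generic bound is $\dim_{\mathbb{F}_2}\mathrm{tr}(C)\le s\cdot\dim_{\mathbb{F}_q}(C)=s(m+1)=ms+s$, so something must show that $s-1$ dimensions collapse. Your proof supplies exactly that: you observe that the trace code is generated over $\mathbb{F}_2$ by $\{\mathrm{tr}(\lambda g_i)\}$ for a generating set $\{g_i\}$ of the $\mathbb{F}_q$-code, you note that the constant generator $\mathrm{ev}(1)$ contributes only the two constant vectors (accounting for the collapse), and you use nondegeneracy of the trace form to show each $\mathrm{ev}(x_j)$ contributes exactly the $s$ new coordinate linear forms $y_{j,1},\ldots,y_{j,s}$. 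This establishes both inclusions and the dimension directly, rather than proving one inclusion and appealing to an unproved dimension equality. Your closing remark is also accurate and worth keeping in mind: since $\dim_{\mathbb{F}_2}\mathrm{tr}(\mathrm{GRM}_q(1,m))=ms+1\neq m+1=\dim_{\mathbb{F}_q}\mathrm{GRM}_q(1,m)$ for $s>1$, Lemma~\ref{lemma:consequencegaloisinvariance} shows $\mathrm{GRM}_q(1,m)$ is not Galois invariant when $s>1$, and in particular its trace code and subfield subcode differ; this does not affect the theorem but is a useful sanity check against misapplying that lemma.
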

\begin{proof}
    Let $\mathcal{P}^m_q$ be the set of coordinates in $\mathbb{F}_q^m$, sorted in lexicographic order. Then $\text{tr}(\mathcal{P}_q^m)=\mathcal{P}_2^m$, since $\text{tr}(\mathbb{F}_q)=\mathbb{F}_2$, and the linearity of the trace preserves the lexicographic order. Let $p\in\mathbb{F}_q[x_1,\ldots,x_m]$ be given by $p=p_0+\sum_{j=1}^m p_j x_j$, where $\{p_0, \ldots, p_j\}\in\mathbb{F}_q$. Let $z=(z_1,\ldots,z_m)\in\mathcal{P}_q^m$ be a set of coordinates. Then, $\text{tr}(p(z))\in\mathcal{P}_2^n$. Thus, $\textnormal{tr}(\textnormal{GRM}_q(1,m))\subseteq\textnormal{RM}(1,ms).$ To prove equality, we show that the code dimensions of either side of the inclusion are the same. We know that $\text{dim}(\textnormal{GRM}_q(1,m))=m+1$, with a code length of $q^m$. Then, the Reed-Muller code of length $\smash{2^{m'}}$ must satisfy $\smash{2^{m'}=q^m=2^{ms}}$, so that $\text{dim}(\textnormal{RM}(1,ms))=ms+1$, which is precisely the code dimension of the generalised Reed-Muller code under the image of the trace map. This concludes the proof. 
\end{proof}
\begin{remark}
    Let $\textnormal{GRM}_q(0,m)$ be a generalised Reed-Muller code of zeroth order, i.e. $\textnormal{GRM}_q(0,m) = \mathcal{R}_q^{q^m}$. 
    Then by linearity, we see that
    \begin{equation*}
        \textnormal{tr}(\textnormal{GRM}_q(0,m)) = \mathcal{R}_2^{2^{ms}}.
    \end{equation*}
\end{remark}

Then, only one family of non-trivial CSS-$T$ codes is available from generalised Reed-Muller codes.

\begin{theorem}
    If $(C_1,C_2)$ are generalised Reed-Muller codes such that they form a $2^s$-ary CSS-$T$ pair for $s>1$, then they must satisfy
    \begin{equation*}
        C_2 = \textnormal{GRM}_q(0, m)\quad\text{and}\quad C_1 =\textnormal{GRM}_q(1,m),
    \end{equation*}
    under the condition $ms\geq 3$.
\end{theorem}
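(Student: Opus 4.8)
The plan is to push everything through the star-product characterisation of Theorem~\ref{CSST_cond-q}. Write $C_1=\textnormal{GRM}_q(r_1,m)$ and $C_2=\textnormal{GRM}_q(r_2,m)$; the nesting $C_2\subseteq C_1$ forces $r_2\le r_1$, and since we only care about genuine (non-trivial) CSS codes we may assume $r_2<r_1$. By Theorem~\ref{CSST_cond-q} the pair is CSS-$T$ if and only if $\operatorname{tr}(C_1)\star\operatorname{tr}(C_1)\subseteq\operatorname{tr}(C_2)^\perp$. Since $\operatorname{tr}(\textnormal{GRM}_q(0,m))=\mathcal{R}_2^{2^{ms}}\subseteq\operatorname{tr}(\textnormal{GRM}_q(r_2,m))$ for every $r_2\ge 0$, we get $\operatorname{tr}(C_2)^\perp\subseteq(\mathcal{R}_2^{2^{ms}})^\perp=\textnormal{RM}(ms-1,ms)$, the even-weight code; hence a necessary condition for CSS-$T$ is that every word of $\operatorname{tr}(C_1)\star\operatorname{tr}(C_1)$ have even weight, which (weights add modulo $2$, and $\langle a,b\rangle$ equals the weight of $a\star b$ for binary vectors) is the same as $\operatorname{tr}(C_1)$ being self-orthogonal.

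The cases $r_1\le 1$ are immediate. If $r_1=0$ then $C_1=C_2$ and the code is trivial; if $r_1=1$ the only admissible value is $r_2=0$, so $(C_1,C_2)=(\textnormal{GRM}_q(1,m),\textnormal{GRM}_q(0,m))$, which is exactly the asserted pair. That this pair is indeed CSS-$T$ under the stated hypothesis follows from Theorem~\ref{theorem:grm}: there $\operatorname{tr}(C_1)=\textnormal{RM}(1,ms)$, hence $\operatorname{tr}(C_1)\star\operatorname{tr}(C_1)=\textnormal{RM}(2,ms)$, while $\operatorname{tr}(C_2)^\perp=\textnormal{RM}(ms-1,ms)$, and $\textnormal{RM}(2,ms)\subseteq\textnormal{RM}(ms-1,ms)$ holds exactly when $ms\ge 3$ — this is where the hypothesis $ms\ge 3$ enters.

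It remains to exclude $r_1\ge 2$. Since $\operatorname{tr}(C_1)\supseteq\operatorname{tr}(\textnormal{GRM}_q(2,m))$ and any code containing a non-self-orthogonal subcode is itself non-self-orthogonal, it suffices to show that $\operatorname{tr}(\textnormal{GRM}_q(2,m))$ is not self-orthogonal. The approach is to compute this trace code directly from the $\mathbb{F}_2$-structure of $\mathbb{F}_q$, as in the proof of Theorem~\ref{theorem:grm}: writing each $x_i$ in a fixed $\mathbb{F}_2$-basis $\{\beta_l\}$ of $\mathbb{F}_q$ and using the Frobenius, the trace of $\operatorname{ev}(\lambda x_ix_j)$ is a bilinear form in the coordinate bits of $x_i$ and those of $x_j$, while the traces of $\operatorname{ev}(\lambda x_i)$ and $\operatorname{ev}(\lambda x_i^2)$ remain $\mathbb{F}_2$-affine; so $\operatorname{tr}(\textnormal{GRM}_q(2,m))$ is $\textnormal{RM}(1,ms)$ augmented by an explicit family of quadratic Boolean forms indexed by $\lambda\in\mathbb{F}_q$ and by the pairs $i<j$. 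One then exhibits two words $a=\operatorname{tr}(\operatorname{ev}(p))$ and $b=\operatorname{tr}(\operatorname{ev}(p'))$ with $\deg p,\deg p'\le 2$ for which $\langle a,b\rangle=\sum_{z\in\mathbb{F}_q^m}\operatorname{tr}(p(z))\operatorname{tr}(p'(z))$ is odd; then $a\star b$ is an odd-weight word of $\operatorname{tr}(\textnormal{GRM}_q(2,m))^{\star 2}\subseteq\operatorname{tr}(C_1)^{\star 2}$, contradicting the necessary condition from the first paragraph.

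The main obstacle is producing those witnesses $p,p'$. The family of traces of generalised Reed-Muller codes is not closed under $\star$, so $\operatorname{tr}(\textnormal{GRM}_q(2,m))$ cannot simply be quoted as some $\textnormal{RM}(r,ms)$ and must be analysed by hand; moreover a codeword that is constant on some of the $m$ coordinate blocks automatically has weight divisible by a power of $q$, so for $m>2$ the witnesses cannot be chosen in $\mathbb{F}_q[x_1,x_2]$ alone — the naive choice there gives an even inner product because of a factor $q^{m-2}$ — and must genuinely involve all $m$ variables. Carrying this out comes down to controlling which quadratic forms actually occur in the augmentation above, i.e. to the (in)dependence of the forms $z\mapsto\operatorname{tr}(\lambda(x_ix_j)(z))$ as $\lambda$ runs over $\mathbb{F}_q$, governed by the non-degeneracy of the trace bilinear form on $\mathbb{F}_q$; this is the technical heart, while everything else is the bookkeeping above.
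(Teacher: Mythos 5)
Your reduction of the problem is clean up to the exclusion step: using Theorem~\ref{CSST_cond-q}, noting $\operatorname{tr}(\textnormal{GRM}_q(r_2,m))\supseteq\mathcal{R}_2^{2^{ms}}$, hence $\operatorname{tr}(C_2)^\perp\subseteq\textnormal{RM}(ms-1,ms)$, and extracting the necessary condition that $\operatorname{tr}(C_1)$ be self-orthogonal is all correct, as is the verification that $(\textnormal{GRM}_q(1,m),\textnormal{GRM}_q(0,m))$ is CSS-$T$ precisely when $ms\ge 3$. The paper instead argues from Theorem~\ref{theorem:grm} that the trace of a GRM code is a binary Reed--Muller code only when $r\in\{0,1\}$ and then appeals to the Reed--Muller star-product arithmetic; your route through the necessary self-orthogonality of $\operatorname{tr}(C_1)$ is genuinely different.

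The exclusion of $r_1\ge 2$, however, is not just left unfinished (you flag it as the ``technical heart'' without producing the witnesses) --- the plan cannot succeed for most parameters. The same bookkeeping you describe shows in fact
\[
\operatorname{tr}\bigl(\textnormal{GRM}_q(r,m)\bigr)\subseteq\textnormal{RM}(r,ms),
\]
since writing each $x_i$ in an $\mathbb{F}_2$-basis of $\mathbb{F}_q$ and applying $\operatorname{tr}$ to a degree-$\le r$ polynomial produces a degree-$\le r$ Boolean function in the $ms$ coordinate bits. In particular $\operatorname{tr}(\textnormal{GRM}_q(2,m))\subseteq\textnormal{RM}(2,ms)$, and $\textnormal{RM}(2,ms)$ is self-orthogonal whenever $ms\ge 5$ (since $\textnormal{RM}(2,ms)\subseteq\textnormal{RM}(ms-3,ms)=\textnormal{RM}(2,ms)^\perp$). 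So every subcode of it is self-orthogonal, and the witnesses $p,p'$ with $\sum_z\operatorname{tr}(p(z))\operatorname{tr}(p'(z))$ odd that you set out to construct simply do not exist once $ms\ge 5$. Your necessary condition is therefore too weak to rule out $r_1\ge 2$.

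Worse, the same inclusion shows that the \emph{full} CSS-$T$ characterisation fails to exclude $r_1\ge 2$ for large $ms$: taking $r_1=2$, $r_2=1$ gives
\[
\operatorname{tr}(C_1)\star\operatorname{tr}(C_1)\subseteq\textnormal{RM}(2,ms)\star\textnormal{RM}(2,ms)=\textnormal{RM}(4,ms)\subseteq\textnormal{RM}(ms-2,ms)=\textnormal{RM}(1,ms)^\perp=\operatorname{tr}(C_2)^\perp
\]
as soon as $ms\ge 6$, so $(\textnormal{GRM}_q(2,m),\textnormal{GRM}_q(1,m))$ satisfies the criterion of Theorem~\ref{CSST_cond-q} in that range (e.g.\ $q=4$, $m=3$). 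This means the obstacle you ran into is not a matter of finding cleverer witnesses: the statement you are trying to prove needs a stronger hypothesis (or a narrower range of $ms$), and any correct proof of the exclusion must use something beyond the star-product characterisation applied to GRM traces.
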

\begin{proof}
    From Theorem \ref{theorem:grm} and its corollary, we find that generalised Reed-Muller codes only produce a binary Reed-Muller code under the trace map if and only if $r\in\{0,1\}$. Since non-trivial CSS codes must satisfy $\text{dim}(C_1)>\text{dim}(C_2)$, we strictly find $C_2 = \textnormal{GRM}_q(0, m)$ and $ C_1 =\textnormal{GRM}_q(1,m)$. Under the CSS-$T$ characterization  \ref{csst_char}, we find that 
    \begin{equation*}
        \text{RM}(1,ms)\star \text{RM}(1,ms)
        \subseteq
        \text{RM}(0,ms)^\perp= \text{RM}(ms-1,ms),  
    \end{equation*}
    so that we find $1+1\leq ms-1$, or $ms\geq 3$.
\end{proof}

Notice that this is in contrast with the binary case, where many more Reed-Muller pairs can be used to build CSS-$T$ codes.

\subsection{CSS-\texorpdfstring{$T$}{T} codes from cyclic codes} We now turn to an example of $q$-ary CSS-$T$ codes using cyclic codes, and characterise when two cyclic codes together form a CSS-$T$ pair.

\begin{definition}[Cyclic codes]
    Let $C\subseteq\mathbb{F}_q^n$ be a length-$n$ code such that $\text{gcd}(q,n)=1$. Let $\mathbb{Z}_n$ be the ring of integers modulo $n$. Let $g(x)\in\mathbb{F}_q[x]$ be the \textit{generator} polynomial of the code, such that $g(x) \mid x^n-1$. Let $\beta$ be a primitive $n$-th root of unity in some extension field of $\mathbb{F}_q$. We define
    \begin{enumerate}
        \item The \textit{defining set} $J=\{j\in \mathbb{Z}_n \mid g(\beta^j) = 0\}$,
        \item The \textit{generating set} $I=\{i\in \mathbb{Z}_n \mid g(\beta^i) \neq 0\}$.
    \end{enumerate}
    Note that $I\cup J = \mathbb{Z}_n.$ We also define $-I=\{-i\in\mathbb{Z}_n \mid i\in I\}.$

\end{definition}

\begin{lemma}[Cyclic trace codes]\label{lemma:tracegenerator}
    We highlight two properties of trace codes of cyclic codes:
    \begin{enumerate}
        \item The trace code $\text{tr}(C)$ of a cyclic code $C$ is also cyclic.
        \item If $C\subseteq\mathbb{F}_q$ is a cyclic code with generator polynomial $g(x)\in\mathbb{F}_q[x]$, then $\text{tr}(C)\subseteq\mathbb{F}_p$ is a cyclic code with generator polynomial $\eta(g(x))\in\mathbb{F}_p[x]$, where $\eta(g(x))$ is the unique largest-degree divisor of $g(x)$ with all coefficients in $\mathbb{F}_p$.
    \end{enumerate} 
\end{lemma}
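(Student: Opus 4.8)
\textbf{Proof proposal for Lemma~\ref{lemma:tracegenerator}.}

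The plan is to prove the two assertions in sequence, with the first following quickly from the cyclic structure and the second requiring a Galois-theoretic argument on polynomial factorizations. For part (1), I would use the fact that a code is cyclic if and only if it is invariant under the cyclic shift operator $\sigma(c_1,\ldots,c_n)=(c_n,c_1,\ldots,c_{n-1})$. The trace map acts coordinatewise, so it commutes with $\sigma$: if $c\in C$ then $\sigma(\text{tr}(c))=\text{tr}(\sigma(c))$, and since $\sigma(c)\in C$ by cyclicity of $C$, we get $\sigma(\text{tr}(c))\in\text{tr}(C)$. Combined with the $\mathbb{F}_p$-linearity of $\text{tr}(C)$ (already noted in the text), this shows $\text{tr}(C)$ is a cyclic code over $\mathbb{F}_p$. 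Equivalently, one can phrase this in the polynomial-ring picture: identifying $\mathbb{F}_q^n$ with $\mathbb{F}_q[x]/(x^n-1)$, the shift is multiplication by $x$, and $\text{tr}$ acts on coefficients, so $\text{tr}(xc(x))=x\,\text{tr}(c(x))$ inside $\mathbb{F}_p[x]/(x^n-1)$.

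For part (2), since $\text{tr}(C)$ is cyclic by part (1), it has a unique monic generator polynomial $h(x)\in\mathbb{F}_p[x]$ dividing $x^n-1$; the task is to identify $h$ with $\eta(g(x))$, the monic divisor of $g$ of maximal degree having all coefficients in $\mathbb{F}_p$. The key observation is that the defining set of $\text{tr}(C)$ consists of those $j\in\mathbb{Z}_n$ such that the entire $p$-ary cyclotomic coset of $j$ lies in the defining set $J$ of $C$. One direction: if $\beta^j$ is a root of $g$ and so are all its conjugates $\beta^{jp},\beta^{jp^2},\ldots$, then evaluating $\text{tr}(c)$ at $\beta^j$ vanishes for every $c\in C$ — this uses that $\text{tr}(c)(\beta^j)=\sum_{i} c(\beta^{jp^i})$ up to reindexing, each term being $0$. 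Conversely, if some conjugate $\beta^{jp^i}$ is a nonroot of $g$, one exhibits a codeword $c\in C$ with $\text{tr}(c)(\beta^j)\neq 0$, using nondegeneracy of the trace form and the freedom in choosing $c$. Thus the defining set of $\text{tr}(C)$ is $\{j : \text{cl}_p(j)\subseteq J\}$, which is exactly the set of roots $\beta^j$ common to $g$ and all its Frobenius conjugates — equivalently the roots of the largest $\mathbb{F}_p$-coefficient divisor of $g$, i.e.\ $\eta(g(x))$. Hence $h=\eta(g)$.

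I would present the cyclotomic-coset bookkeeping carefully, since this is where the argument has real content: the statement ``$\eta(g)$ is the unique largest-degree divisor of $g$ with coefficients in $\mathbb{F}_p$'' needs to be matched up with ``the product of those minimal polynomials over $\mathbb{F}_p$ of $n$-th roots of unity all of whose conjugates are roots of $g$''. The main obstacle is the converse direction in the characterization of the defining set — showing that whenever a full cyclotomic coset is \emph{not} contained in $J$, the trace code genuinely picks up that character, so no spurious roots are introduced into $\text{tr}(C)$'s generator. This can be handled by a dimension count (the $\mathbb{F}_p$-dimension of $\text{tr}(C)$ equals $n$ minus the number of $j$ with $\text{cl}_p(j)\subseteq J$, which one verifies directly) or by invoking Delsarte's theorem to pass to the dual and argue there; either route closes the gap, and the rest is routine verification.
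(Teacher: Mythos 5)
Your proof of part (1) is identical to the paper's: the coordinatewise trace commutes with the cyclic shift, so $\text{tr}(C)$ inherits cyclicity. For part (2) the paper offers no argument at all — it simply cites Lemma~2.1 of an external reference — so your defining-set argument is a genuinely different and more self-contained route, and its skeleton is right: $\eta(g)=\gcd\bigl(g,g^{(p)},\ldots,g^{(p^{s-1})}\bigr)$ has exactly the roots $\beta^j$ with $\text{cl}_p(j)\subseteq J$, and the lemma follows once the defining set of $\text{tr}(C)$ is identified with this set.

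Two things need repair, though. The evaluation identity should carry Frobenius powers,
\[
\text{tr}(c)(\beta^j)\;=\;\sum_{i=0}^{s-1} c\bigl(\beta^{j p^{-i}}\bigr)^{p^{i}},
\]
not $\sum_i c(\beta^{jp^i})$; your forward direction survives because $\{jp^{-i}\}$ still exhausts $\text{cl}_p(j)$, but the exponents matter in the converse. More seriously, the converse is where the real content lies, and of the two escape routes you propose, the ``dimension count'' is circular as stated: the asserted formula $\dim_{\mathbb{F}_p}\text{tr}(C)=n-|\{j:\text{cl}_p(j)\subseteq J\}|$ \emph{is} the claim that no spurious roots appear, and verifying it directly runs into exactly the cancellation problem you set it up to avoid — several of the terms $c(\beta^{jp^{-i}})^{p^i}$ can be simultaneously nonzero, with $q$-Frobenius relations among the $c(\beta^{jp^{-i}})$ when some $jp^{-i}$ share a $q$-cyclotomic coset, and one must show they cannot cancel for every $c\in C$. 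The Delsarte route, by contrast, does close the gap cleanly: $\text{tr}(C)=(C^\perp|_{\mathbb{F}_p})^\perp$, the subfield subcode of a cyclic code with defining set $D$ is cyclic with defining set the $p$-closure of $D$ (immediate from applying Frobenius to a codeword with $\mathbb{F}_p$ coefficients), and unwinding the two dualities yields defining set $\{j:\text{cl}_p(j)\subseteq J\}$ for $\text{tr}(C)$, matching the roots of $\eta(g)$. So the proposal is correct provided you commit to the Delsarte branch and drop the dimension-count alternative.
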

\begin{proof}
    We prove both statements separately:
    \begin{enumerate}
    \item Let $\mathcal{T}$ be the cyclic shift operator on codewords $c\in C$, such that
    \begin{equation*}
        \mathcal{T}(c_1,c_2,\ldots,c_n)=(c_2,\ldots,c_n,c_1).
    \end{equation*}
    Then $\mathcal{T}\circ \text{tr}=\text{tr}\circ \mathcal{T}$ is apparent from
    \begin{align*}
        \mathcal{T}(\text{tr}(c_1,c_2,\ldots,c_n))&=(\text{tr}(c_2),\ldots,\text{tr}(c_n),\text{tr}(c_1))\\&= \text{tr}(c_2,\ldots,c_n,c_1)\\&=\text{tr}(\mathcal{T}(c_1,\ldots,c_n)).
    \end{align*}
    Thus for all $c\in\text{tr}(C)$, we have $\mathcal{T}(c)\in\text{tr}(C)$, so that $\text{tr}(C)$ is a cyclic code.
    \item See Lemma 2.1 in \cite{weirdcyclictrpaper}. \qedhere
    \end{enumerate} 
\end{proof}

\begin{example}
    Let $\mathbb{F}_4=\mathbb{F}_2[\alpha]$ with $\alpha^2+\alpha+1=0.$ Let $C\subseteq\mathbb{F}_4^9$ be the $[9,4,4]$-code generated by $g(x)=(x^3+\alpha)(x+\alpha)(x+\alpha+1)$, with generator matrix
    \begin{equation*}
        G = \begin{pmatrix}\alpha & \alpha & \alpha &1 & 1 & 1 &  0 & 0 & 0\\ 
        0 & \alpha & \alpha & \alpha & 1 & 1 & 1 & 0 & 0\\
        0 & 0 & \alpha & \alpha & \alpha & 1 & 1 & 1 & 0\\
        0 & 0 & 0 & \alpha & \alpha & \alpha & 1 & 1 & 1 \\
        \end{pmatrix}
    \end{equation*}
    Then $\text{tr}(C)\subseteq\mathbb{F}_2^9$ is the $[9,7,2]$-code with generator polynomial $\eta(g(x))=x^2+x+1$.  Its generator matrix is
    \begin{equation*}
        G = \begin{pmatrix}
1 & 1 & 1 & 0 & 0 & 0 & 0 & 0 & 0\\
0 & 1 & 1 & 1 & 0 & 0 & 0 & 0 & 0\\
0 & 0 & 1 & 1 & 1 & 0 & 0 & 0 & 0\\
0 & 0 & 0 & 1 & 1 & 1 & 0 & 0 & 0\\
0 & 0 & 0 & 0 & 1 & 1 & 1 & 0 & 0\\
0 & 0 & 0 & 0 & 0 & 1 & 1 & 1 & 0\\
0 & 0 & 0 & 0 & 0 & 0 & 1 & 1 & 1\\
        \end{pmatrix}.
    \end{equation*}
\end{example}
    The following lemma is a variant of Lemma \ref{lemma:tracegenerator}.
\begin{lemma}
    Let $C(I^{(q)})\subseteq\mathbb{F}_q^n$ be a cyclic code, such that $\text{gcd}(q,n)=1$ and $I^{(q)}$ is its generating set. Then the binary cyclic code $\text{tr}(C)$ has the generating set of cyclotomic cosets $$I^{(2)}=\{C_s \mid C_s\subseteq I^{(q)}, C_{2s}=C_s\}.$$ 
\end{lemma}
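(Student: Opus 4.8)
The plan is to derive the result from Lemma~\ref{lemma:tracegenerator}, which already determines the generator polynomial of $\text{tr}(C)$, by rephrasing its divisibility statement in the language of defining sets and cyclotomic cosets. Write $J^{(q)}=\mathbb{Z}_n\setminus I^{(q)}$ for the defining set of $C$. Since $\gcd(q,n)=1$, the polynomial $x^n-1$ is separable over both $\mathbb{F}_q$ and $\mathbb{F}_2$; thus $J^{(q)}$ is a disjoint union of $q$-ary cyclotomic cosets, $g(x)=\prod_{j\in J^{(q)}}(x-\beta^j)$ is squarefree, and by Lemma~\ref{lemma:tracegenerator} the trace code $\text{tr}(C)$ is cyclic with generator polynomial $\eta(g(x))$, the largest-degree divisor of $g(x)$ with coefficients in $\mathbb{F}_2$. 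It therefore suffices to describe the zero set of $\eta(g)$ — i.e.\ the defining set $J^{(2)}$ of $\text{tr}(C)$ — and to take its complement to obtain the generating set $I^{(2)}$.

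Because $g$ is squarefree, $\eta(g)$ is precisely the product of the distinct monic $\mathbb{F}_2$-irreducible divisors of $g$. Every monic $\mathbb{F}_2$-irreducible divisor of $x^n-1$ is the $\mathbb{F}_2$-minimal polynomial $\prod_{i\in C_s}(x-\beta^i)$ of some $\beta^s$, where $C_s\subseteq\mathbb{Z}_n$ is the $2$-ary cyclotomic coset of $s$, and such a polynomial divides $g$ exactly when $C_s\subseteq J^{(q)}$. Hence $J^{(2)}$ is the union of all $2$-ary cyclotomic cosets contained in $J^{(q)}$ — equivalently, $J^{(2)}=\{C_s : C_s\subseteq J^{(q)},\ C_{2s}=C_s\}$, the condition $C_{2s}=C_s$ being automatic for a full $2$-ary coset since $2s\in C_s$. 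Complementing: $\mathbb{Z}_n$ is partitioned into $2$-ary cyclotomic cosets, and such a coset either lies inside $J^{(q)}$ (hence in $J^{(2)}$) or meets $I^{(q)}$ (hence in $I^{(2)}=\mathbb{Z}_n\setminus J^{(2)}$), so $I^{(2)}=\bigcup_{s\in I^{(q)}}C_s$ is the smallest union of $2$-ary cyclotomic cosets containing $I^{(q)}$, which is the generating set of $\text{tr}(C)$; each coset occurring here again satisfies $C_{2s}=C_s$.

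The one delicate point is the bookkeeping between $q$-ary and $2$-ary cyclotomic cosets: since $q$ is a power of $2$, every $2$-ary coset is a union of $q$-ary cosets, but a single $2$-ary coset need be neither contained in nor disjoint from $J^{(q)}$, which is exactly why the containment "$C_s\subseteq\,\cdot\,$" applies at the level of the defining set $J^{(2)}$, while at the level of the generating set the matching requirement is that $C_s$ \emph{meet} $I^{(q)}$ (these coincide precisely when $J^{(q)}$, equivalently $I^{(q)}$, is itself a union of $2$-ary cyclotomic cosets, e.g.\ always when $q=2$ or when $C$ is extended from a binary cyclic code). A minor prerequisite is to note that $\gcd(q,n)=1$ is what makes $x^n-1$ squarefree, so that $\eta(\cdot)$, defining sets, and cyclotomic cosets behave as expected and Lemma~\ref{lemma:tracegenerator} may be applied.
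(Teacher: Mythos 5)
Your derivation is sound, and it is effectively the only proof on offer: the paper states this lemma without proof, calling it a ``variant'' of Lemma~\ref{lemma:tracegenerator}, and your route --- $x^n-1$ squarefree since $n$ is odd, $\eta(g)$ equal to the product of the distinct $\mathbb{F}_2$-irreducible divisors of $g$, and the dictionary between binary irreducible divisors of $x^n-1$ and $2$-ary cyclotomic cosets --- is exactly the argument that turns that lemma into a statement about defining and generating sets. However, the point you downplay as ``delicate bookkeeping'' is really a correction of the statement itself. What your argument proves is that the defining set of $\tr(C)$ is the union of the $2$-ary cosets \emph{contained in} $J^{(q)}$, hence the generating set $I^{(2)}$ is the union of the $2$-ary cosets that \emph{meet} $I^{(q)}$ (the $2$-closure of $I^{(q)}$) --- not the collection of ($2$-closed) cosets contained in $I^{(q)}$ as the lemma literally reads; the two agree only when $I^{(q)}$ is itself a union of $2$-ary cosets. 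The paper's own preceding example confirms your version: there $q=4$, $n=9$, $I^{(q)}=\{0\}\cup\{2,5,8\}$, and the only $2$-closed coset contained in $I^{(q)}$ is $\{0\}$, which would yield the repetition code, whereas $\tr(C)$ is the $[9,7,2]$ code with nonzeros $\{0,1,2,4,5,7,8\}$, precisely the $2$-closure of $I^{(q)}$. So your proof establishes the corrected statement (equivalently, the printed formula is correct if read as a description of the defining set $J^{(2)}$ in terms of $J^{(q)}$), and the subsequent theorem on cyclic CSS-$T$ pairs should be read with $I^{(2)}$ taken to be this $2$-closure. One small step worth making explicit in your write-up: justify that the product of all distinct $\mathbb{F}_2$-irreducible divisors of $g$ is itself a divisor of $g$ (they are pairwise coprime and $g$ is squarefree), so that it really is the unique largest-degree binary divisor $\eta(g)$ invoked from Lemma~\ref{lemma:tracegenerator}.
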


This leads us to the necessary and sufficient conditions for two classical cyclic codes over $\mathbb{F}_q$ to form a CSS-$T$ pair, which relies on the \textit{Minkowski sum}.
\begin{definition}[Minkowski sum]
    Given two cyclotomic cosets $I_1,I_2$, their \textit{(Minkowski) sum} is given by
    \begin{equation*}
        I_1+I_2=\{i_1+i_2 \mid i_1\in I_1, i_2\in I_2\}\subseteq\mathbb{Z}_n.
    \end{equation*}
\end{definition}

The resulting theorem follows, which is a generalization to the $q$-ary case of Theorem 4.8 in \cite{eduardo}.

\begin{theorem}
    Let $(C_1,C_2)$ be a $q$-ary CSS-$T$ pair of cyclic codes $C_1\left(I_1^{(q)}\right)$ and $C_2\left(I_2^{(q)}\right)$ of length~$n$. Then the following hold:
    \begin{enumerate}
        \item $I^{(q)}_2 \subseteq I^{(q)}_1$,
        \item $n\notin \left(I_1^{(2)}+I_1^{(2)}+I_2^{(2)}\right)$.
    \end{enumerate}
\end{theorem}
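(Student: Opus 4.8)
The plan is to translate the $q$-ary CSS-$T$ condition into a statement about the binary trace codes using Theorem~\ref{CSST_cond-q}, and then read off the two claimed containments from the cyclic structure of those trace codes. By Theorem~\ref{CSST_cond-q}, $(C_1,C_2)$ being CSS-$T$ is equivalent to $\textnormal{tr}(C_1)\star\textnormal{tr}(C_1)\subseteq\textnormal{tr}(C_2)^\perp$. By Lemma~\ref{lemma:tracegenerator}, $\textnormal{tr}(C_1)$ and $\textnormal{tr}(C_2)$ are binary cyclic codes, with generating sets $I_1^{(2)}$ and $I_2^{(2)}$ obtained as the unions of those cyclotomic cosets of $C_i\big(I_i^{(q)}\big)$ that are fixed by multiplication by $2$; in particular the star product of two cyclic codes is again cyclic, and its generating set is the Minkowski sum of the generating sets (this is the standard fact that $C(A)\star C(B)=C(A+B)$ for cyclic codes, which I would cite or quote as a preliminary).

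First I would prove (1). Since $C_2\subseteq C_1$ (a CSS pair), monotonicity of the trace map gives $\textnormal{tr}(C_2)\subseteq\textnormal{tr}(C_1)$; for binary cyclic codes this inclusion is equivalent to the reverse inclusion of generating sets, i.e.\ $I_2^{(2)}\subseteq I_1^{(2)}$ — but the statement asks for $I_2^{(q)}\subseteq I_1^{(q)}$ over $\mathbb{F}_q$, which is immediate from $C_2\subseteq C_1$ and the correspondence between cyclic codes and their generating sets, since $C(I)\subseteq C(I')\iff I\subseteq I'$. So (1) is essentially a restatement of the hypothesis $C_2\subseteq C_1$.

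Next I would prove (2). Translating the CSS-$T$ characterisation into generating-set language: $\textnormal{tr}(C_1)\star\textnormal{tr}(C_1)$ is the cyclic code with generating set $I_1^{(2)}+I_1^{(2)}$, and $\textnormal{tr}(C_2)^\perp$ is the cyclic code whose generating set is (the complement of $-$ the defining set of $\textnormal{tr}(C_2)$), which — using the standard description of duals of cyclic codes — can be written in terms of $I_2^{(2)}$. The containment $I_1^{(2)}+I_1^{(2)}\subseteq$ (generating set of $\textnormal{tr}(C_2)^\perp$) then has to be rewritten as a condition of the form "a certain element does not lie in $I_1^{(2)}+I_1^{(2)}+I_2^{(2)}$." Concretely, $\textnormal{tr}(C_1)\star\textnormal{tr}(C_1)\subseteq\textnormal{tr}(C_2)^\perp$ is equivalent, by cyclicity of the star product with respect to the inner product (the Lemma on cyclicity, in its coding form) or directly, to $\textnormal{tr}(C_1)\star\textnormal{tr}(C_1)\star\textnormal{tr}(C_2)\subseteq\big(\mathcal{R}_2^n\big)^\perp$, i.e.\ every codeword of the triple star product has even-indexed Fourier support avoiding the position $n\equiv 0$; in cyclotomic-coset terms this is exactly $n\notin I_1^{(2)}+I_1^{(2)}+I_2^{(2)}$. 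I would make this last equivalence precise by recalling that for a binary cyclic code $C(A)$, we have $\mathbf{1}\in C(A)^\perp$ (equivalently $\mathcal{R}_2^n\subseteq C(A)^\perp$, the evenness condition) if and only if $0\notin A$, and that $\langle u\star v, w\rangle=0$ for all $u,v,w$ in the three codes iff the product code $C(I_1^{(2)})\star C(I_1^{(2)})\star C(I_2^{(2)})=C(I_1^{(2)}+I_1^{(2)}+I_2^{(2)})$ is even, iff $0\notin I_1^{(2)}+I_1^{(2)}+I_2^{(2)}$; the indexing convention in the paper writes this point as $n$ rather than $0$.

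The main obstacle I expect is bookkeeping rather than conceptual: getting the duality of cyclic codes, the $-1$ (negation) in the defining set, and the paper's indexing convention ($n$ vs.\ $0$ in $\mathbb{Z}_n$) to line up so that the triple Minkowski sum appears exactly as stated. The one genuine lemma that must be invoked cleanly is $C(A)\star C(B)=C(A+B)$ for cyclic codes together with the translation of "contains the all-ones vector in its dual / is even" into "$0\notin$ (generating set)"; once these are in place, both (1) and (2) follow by combining Theorem~\ref{CSST_cond-q} with Lemma~\ref{lemma:tracegenerator} and the cyclicity lemma, with no further computation.
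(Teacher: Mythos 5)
Your proposal matches the paper's proof almost exactly in structure and content. For (1), both you and the paper read off $I_2^{(q)}\subseteq I_1^{(q)}$ directly from $C_2\subseteq C_1$ via the correspondence $C(I)\subseteq C(I')\iff I\subseteq I'$ for cyclic codes; for (2), both combine Theorem~\ref{CSST_cond-q}, Lemma~\ref{lemma:tracegenerator}, the identity $C(A)\star C(B)=C(A+B)$ for cyclic codes, cyclicity of the star product to rewrite $\textnormal{tr}(C_1)\star\textnormal{tr}(C_1)\subseteq\textnormal{tr}(C_2)^\perp$ as $\mathcal{R}_2^n\subseteq C\bigl(I_1^{(2)}+I_1^{(2)}+I_2^{(2)}\bigr)^\perp$, and the translation of that containment into $n\notin I_1^{(2)}+I_1^{(2)}+I_2^{(2)}$. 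One small slip worth fixing: you describe $\textnormal{tr}(C_2)\subseteq\textnormal{tr}(C_1)$ as equivalent to ``the reverse inclusion'' of generating sets, but a larger cyclic code has a larger generating set, so the inclusion runs in the same direction, exactly as you then correctly state in the next clause.
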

\begin{proof}
    (1) is a direct consequence of the CSS condition $C_2\subseteq C_1$. For cyclic codes, this is equivalent to stating that $g_2(x) \mid g_1(x)$, which is true if and only if $\smash{I_2^{(q)}\subseteq I_1^{(q)}}$. For (2), we use the fact that $\text{tr}(C(I^{(q)}))=C(I^{(2)})\subseteq \mathbb{F}_2^n.$ The $q$-ary identity yields:
    \begin{equation*}
        C\left(I_1^{(2)}\right)\star C\left(I_1^{(2)}\right)\subseteq\left(C\left(I_2^{(2)}\right)\right)^\perp\quad\Leftrightarrow\quad \mathcal{R}_q^n \subseteq C\left(I_1^{(2)}+I_1^{(2)}+I_2^{(2)}\right)^\perp.
    \end{equation*}
    It follows that $n\notin\left(I_1^{(2)}+I_1^{(2)}+I_2^{(2)}\right)$.
\end{proof}

\section{Bounds and asymptotically good CSS-T codes over binary extension fields}\label{sec:bounds}

In this section, we look at the asymptotic behaviour of CSS-$T$ codes. More in detail, we show that asymptotically good sequences of CSS-$T$ codes exist, and that they can be derived from sequences of CSS codes. First, we study some fundamental properties of CSS-$T$ codes.

\begin{proposition} 
    Let $(C_1,C_2)$ form a $q$-ary CSS-$T$ pair with parameters $[[n,k,d]]$, and let $R=k/n$ and $\delta=d/n$ denote the channel capacity and relative minimum distance respectively. Then the following statements hold true:
    \begin{enumerate}
        \item If there exists a codeword $x\in C_2$ with Hamming weight $\omega^\text{H}(x)\geq n+1-k_2$, then \[R+\frac{\delta}{2}\leq \frac{1}{2}.\]
         \item If there exists a codeword $x\in C_2$ with Hamming weight $\omega^\text{H}(x)\geq n-d_2$, then \[R+{\delta}\leq \frac{1}{2}+\frac{1}{n}.\]
          \item If there exists a codeword $x\in C_2$ with Hamming weight $\omega^\text{H}(x)\geq n-d_1$, then \[R+\frac{3}{2}\delta \leq \frac{1}{2}+\frac{2}{n}.\]
    \end{enumerate}
\end{proposition}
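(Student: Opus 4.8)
The plan is to reduce everything to the binary CSS-$T$ pair $(\tr(C_1),\tr(C_2))$ supplied by Theorem~\ref{CSST_cond-q}, to convert the hypothesis on a heavy codeword of $C_2$ into an upper bound on the code dimension via the self-dual subcode of Definition~\ref{def:bincsst}, and to pair this with a Singleton-type upper bound on the quantum distance. Throughout I write $k_i=\dim C_i$, $d_i=d(C_i)$, so $k=k_1-k_2$.

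First I would fix $x\in C_2$ of weight $w:=\omega^\text{H}(x)$ witnessing the relevant hypothesis. By Corollary~\ref{lemma:traceselforth} the binary code $\tr(C_2)$ is even and self-orthogonal, and Definition~\ref{def:bincsst} applied to $(\tr(C_1),\tr(C_2))$ attaches to every $y\in\tr(C_2)$ a self-dual binary code $C_y\subseteq\tr(C_1)^\perp$ supported on $y$ with $\dim C_y=\omega^\text{H}(y)/2$. Running over the scalar multiples $\tr(\lambda x)\in\tr(C_2)$, $\lambda\in\mathbb F_q$ — whose supports jointly exhaust $\mathrm{supp}(x)$, since $\lambda\mapsto\tr(\lambda x_i)$ is onto $\mathbb F_2$ whenever $x_i\ne0$ — one assembles inside $\tr(C_1)^\perp$ a self-orthogonal binary code supported on $\mathrm{supp}(x)$ of $\mathbb F_2$-dimension $w/2$. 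Since $\dim_{\mathbb F_2}\tr(C_1)^\perp=n-\dim_{\mathbb F_2}\tr(C_1)$ and $k_1\le\dim_{\mathbb F_2}\tr(C_1)$, this gives $k_1\le n-w/2$, hence $2k\le 2n-w-2k_2$; the three hypotheses feed in $w\ge n+1-k_2$, $w\ge n-d_2$, $w\ge n-d_1$ respectively.

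Next I would bound the quantum distance $d$ by exhibiting a low-weight representative of a nontrivial $Z$-type logical operator. Shortening $C_2^\perp$ at a set of size $n-k_2-1$ that contains an information set of $C_1^\perp$ (possible as soon as $k\ge1$) produces a nonzero codeword of $C_2^\perp$ of weight at most $k_2+1$ which vanishes on that information set, hence lies outside $C_1^\perp$; so $d\le k_2+1$ unconditionally. For part~(1) this already closes the argument: $2k+d\le(2n-w-2k_2)+(k_2+1)\le\bigl(2n-(n+1-k_2)-2k_2\bigr)+(k_2+1)=n$, i.e.\ $R+\tfrac\delta2\le\tfrac12$. For parts~(2) and (3) I would replace the crude bound $k_2+1$ by estimates governed by $d_2$ and $d_1$ — using that the heaviness of $x$ forces the shortening of $C_2$, resp.\ $C_1$, on the complement of $\mathrm{supp}(x)$ to be nearly trivial — which, together with the sharper lower bounds on $w$, upgrades the inequality by one factor of $\delta$ at the cost of an additive $2/n$ each time, yielding $R+\delta\le\tfrac12+\tfrac1n$ and $R+\tfrac32\delta\le\tfrac12+\tfrac2n$ after dividing by $n$.

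The main obstacle is the passage from $\mathbb F_q$ to $\mathbb F_2$ in the dimension step: the self-dual subcodes of Definition~\ref{def:bincsst} live over the prime field and only see $\mathrm{supp}(\tr(\lambda x))$, which is typically a proper subset of $\mathrm{supp}(x)$, so the claim that one can assemble a self-orthogonal binary code of the full dimension $w/2$ supported on $\mathrm{supp}(x)$ (rather than only $\approx w/4$) requires choosing the family $\{C_{\tr(\lambda x)}\}_\lambda$ compatibly, or replacing it by a single global construction; parity of $w$ also has to be dealt with, though this costs only $O(1/n)$ and is absorbed by the slack in the statement. A secondary point is making sure the codeword used to bound $d$ is a genuine logical operator (in $C_2^\perp\setminus C_1^\perp$, not merely a stabiliser in $C_1^\perp$), and tracking the constants precisely so that the two bounds recombine into exactly the stated coefficients rather than something weaker.
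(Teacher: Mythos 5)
The paper does not prove this proposition itself: it reduces to Theorem~3.9 of~\cite{alberto}, which establishes the binary case, and uses $\dim_{\mathbb F_2}(\tr(C))\ge\dim_{\mathbb F_q}(C)$ to transfer the inequalities. You are attempting a from-scratch proof that the paper never spells out. Your plan for part~(1) --- reduce to the binary pair $(\tr(C_1),\tr(C_2))$, bound $k_1\le\dim_{\mathbb F_2}\tr(C_1)\le n-w/2$ via a self-orthogonal subcode of $\tr(C_1)^\perp$, bound $d\le k_2+1$ by shortening $C_2^\perp$ on a set containing an information set of $C_1^\perp$, and add --- has the right shape, and the arithmetic closes exactly if both ingredients hold.

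However, the ingredient you yourself flag as the main obstacle is a genuine hole, not a technicality. Definition~\ref{def:bincsst} gives, for each $y\in\tr(C_2)$, a self-dual $C_y\subseteq\tr(C_1)^\perp$ of dimension $\omega^\text{H}(y)/2$ supported on $\mathrm{supp}(y)$. Taking $y=\tr(\lambda x)$ yields codes whose dimensions average only $w/4$ over $\lambda\in\mathbb F_q$, and whose supports, while they jointly cover $\mathrm{supp}(x)$, overlap; the span of the $C_{\tr(\lambda x)}$ is in general neither self-orthogonal nor of dimension $w/2$. What you need is a single self-orthogonal code of $\mathbb F_2$-dimension $w/2$ inside $\tr(C_1)^\perp$ supported on $\mathrm{supp}(x)$, and nothing in the proposal produces it; without it the dimension bound weakens and no longer recombines with $d\le k_2+1$ to give $2k+d\le n$. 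Note also that the parity of $w$ is not absorbed by slack: part~(1) carries no additive $O(1/n)$ term, so an odd $w$ already spoils the chain by $1$. Parts~(2) and~(3) remain pure intent (``forces the shortening\dots to be nearly trivial'') with no concrete construction delivering the coefficients $\delta$, $\tfrac32\delta$ and the additive $1/n$, $2/n$. In effect, the missing piece is precisely the binary proposition the paper imports as a black box.
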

\begin{proof}
    These statements follow easily from Theorem 3.9 in Ref.~\cite{alberto}, where the binary case is analyzed, using the fact that $\text{dim}(\text{tr}(C))\geq \text{dim}(C)$.
\end{proof}

To prove that an infinite number of asymptotically good sequences of CSS-$T$ codes exist, we first recall a fundamental result by Panteleev and Kalachev.

\begin{theorem}[See~\cite{proofasymp}]\label{theorem:AGqary}
    For every $R\in(0,1)$ and finite field $\mathbb{F}_q$ there exists an explicit family of quantum LDPC codes over $\mathbb{F}_q$ with parameters $[[n,k\geq Rn, d=\Theta(n)]]_q$ as $n\to\infty$.
\end{theorem}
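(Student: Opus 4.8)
The plan is to obtain this as the field-generic version of the asymptotically good quantum LDPC code construction of Panteleev and Kalachev. First I would recast a CSS code over $\mathbb{F}_q$ as (the middle term of) a length-three chain complex $\mathbb{F}_q^{n_2}\xrightarrow{\partial_2}\mathbb{F}_q^{n_1}\xrightarrow{\partial_1}\mathbb{F}_q^{n_0}$ of $\mathbb{F}_q$-vector spaces with $\partial_1\partial_2=0$: here $\partial_1$ and $\partial_2^{\top}$ are the $X$- and $Z$-parity-check matrices, the dimension is $k=n_1-\mathrm{rk}\,\partial_1-\mathrm{rk}\,\partial_2$, and the distance is the minimum of $\min\{\,w(v):v\in\ker\partial_1\setminus\mathrm{im}\,\partial_2\,\}$ and $\min\{\,w(v):v\in\ker\partial_2^{\top}\setminus\mathrm{im}\,\partial_1^{\top}\,\}$. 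Such a complex is produced as the total complex of the balanced product over a finite group $G$ (equivalently, a lifted product over the group algebra $\mathbb{F}_q[G]$) of two short Tanner complexes: one built from a good classical LDPC code over $\mathbb{F}_q$ and one from a small auxiliary code, with $G$ running over a family of lift groups (quasi-cyclic, or more general finite-group, lifts) of increasing order. By construction the resulting parity-check matrices have bounded row and column weights, so the family is LDPC, and $k/n$ is controlled by the rates of the two constituent codes; choosing those rates large enough makes $k/n$ exceed any prescribed $R\in(0,1)$ for all sufficiently large lifts.

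The second and decisive step is the linear lower bound $d=\Theta(n)$. Following the $\mathbb{F}_2$ argument, one shows that a nonzero $Z$-logical (respectively $X$-logical) of weight $o(n)$ would contradict a small-set (co)boundary expansion property of the tensor-product complex, which is in turn deduced from a \emph{product-expansion} (robust testability) property of the pair of local codes attached to the vertices of the Tanner graph; product-expanding local codes can be guaranteed by choosing the ``twists'' of the lift at random and invoking a concentration argument. The point for the present statement is that every estimate in this chain is combinatorial — it bounds sizes of supports of vectors and of incidence sets in a bipartite graph — and so does not distinguish $\mathbb{F}_2$ from $\mathbb{F}_q$; the only genuinely field-dependent input is the existence of a classical LDPC code over $\mathbb{F}_q$ whose rate and relative distance are both bounded away from zero, which holds by a Gilbert--Varshamov-type random construction (or by known explicit families). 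I expect this transfer of the product-expansion estimates to $\mathbb{F}_q$, together with bookkeeping of the implied constants, to be the only delicate point: one must check that the local robustness statements, usually phrased over $\mathbb{F}_2$, remain true verbatim over $\mathbb{F}_q$, which they do because they are again statements about the geometry of supports rather than about the field structure.

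Finally I would assemble the pieces: fix $R\in(0,1)$, pick the constituent classical $\mathbb{F}_q$-codes with rates chosen so that the balanced/lifted product has rate $>R$ for all large enough lift order; invoke the field-independent distance bound to get $d=\Theta(n)$; and read off the LDPC property from the bounded sparsity of $\partial_1$ and $\partial_2$. Letting the lift order grow produces an explicit infinite family of $[[n,k\ge Rn,d=\Theta(n)]]_q$ quantum LDPC codes, as claimed. The main obstacle, as noted, is not the algebra of the construction but the distance analysis — reproducing the expansion/product-expansion machinery with $\mathbb{F}_2$ replaced by $\mathbb{F}_q$ and confirming that none of the relevant constants degrade.
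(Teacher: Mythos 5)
The paper does not prove this theorem at all: it is stated as a result recalled from the reference \cite{proofasymp} (Panteleev--Kalachev), and no argument is given in the text. Your sketch is therefore not in competition with anything in the paper; rather, it is a faithful high-level reconstruction of the construction carried out in the cited work: recasting CSS codes as three-term chain complexes, forming the lifted (balanced) product over a group algebra $\mathbb{F}_q[G]$ of two Tanner-style complexes, reading off the LDPC property from the bounded sparsity of the boundary maps, controlling the rate by the rates of the constituent codes, and obtaining linear distance from a small-set (co)boundary-expansion estimate driven by product-expansion of the local codes. You also correctly identify the one point that deserves care when passing from $\mathbb{F}_2$ to $\mathbb{F}_q$ — verifying that the product-expansion/robust-testability property of the local code pair survives the change of field, and that the resulting constants do not degrade — which is exactly where the cited paper's analysis lives. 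Two small caveats worth flagging: first, the word ``explicit'' in the statement hides the fact that the required product-expanding local codes are typically obtained by a probabilistic existence argument over a bounded parameter space (so ``explicit'' means ``explicit given a fixed, finitely searchable local seed''); second, your phrase ``the only genuinely field-dependent input is the existence of a good classical LDPC code over $\mathbb{F}_q$'' understates matters slightly, since what is actually needed is a \emph{pair} of local codes satisfying product expansion, not merely one good code — but you effectively acknowledge this in the next sentence. With those clarifications, your proposal matches the approach that the paper implicitly relies on by citation.
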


In similar fashion as~\cite{goodcsst}, we can produce CSS-$T$ pairs over $\mathbb{F}_{2^s}^{2n}$ from any CSS pair over $\mathbb{F}_2^n$. Let $C\subseteq\mathbb{F}_q^n$ be a linear code and let $\phi:C\rightarrow{}\mathbb{F}_q^n$ be a linear map. We can extend the code as follows:
\begin{equation*}
    C^{\phi}=\{(x,\phi(x))\mid x\in C\}\subseteq\mathbb{F}_q^{2n}.
\end{equation*}
 In \cite{goodcsst} the authors give a necessary and sufficient condition on the map $\phi$ for $(C_1^\phi, C_2^\phi)$ to be a CSS-$T$ pair, provided that $(C_1, C_2)$ is CSS. We prove  that combining their result with the absolute trace gives a similar condition for CSS-$T$ pairs over a field of characteristic~$2$. 

\begin{theorem}
    Let $(C_1,C_2)$ be a \textnormal{CSS} code over $\mathbb{F}_{2^s}^n$ for some $s\in\mathbb{N}$. Then, the pair $(C_1^\phi, C_2^\phi)$ is a \textnormal{CSS-$T$} pair if and only if $\phi$ satisfies
    \begin{equation*}
        \omega^{\text{H}}(\textnormal{tr}(x)\star \textnormal{tr}(y)\star \textnormal{tr}(z))+\omega^\text{H}(\tr(\phi(x)) \star \tr(\phi(y))\star\tr(\phi(z))=0 \mod 2.
    \end{equation*}
    for all $x, y \in C_1$, $z\in C_2$.
\end{theorem}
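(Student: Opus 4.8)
The plan is to reduce the claim to the binary CSS-$T$ characterisation already available to us (Theorem~\ref{CSST_cond-q}), applied to the doubled codes. By that theorem, $(C_1^\phi, C_2^\phi)$ is CSS-$T$ over $\mathbb{F}_{2^s}^{2n}$ if and only if
\begin{equation*}
\textnormal{tr}(C_1^\phi)\star \textnormal{tr}(C_1^\phi)\subseteq \textnormal{tr}(C_2^\phi)^\perp,
\end{equation*}
so the first step is to compute the trace codes of the doubled codes. Since the trace map acts coordinatewise and $\phi$ is $\mathbb{F}_2$-linear, I expect $\textnormal{tr}(C_i^\phi)$ to be closely related to $\{(\textnormal{tr}(x),\textnormal{tr}(\phi(x)))\mid x\in C_i\}$; the subtlety is that $\textnormal{tr}$ is not injective, so one must check that applying $\textnormal{tr}$ to the doubled code gives exactly the doubled trace data and not something larger. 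This is where I would be most careful — I would verify that $\textnormal{tr}(C_i^\phi)=\{(\textnormal{tr}(u),\textnormal{tr}(\phi(u)))\mid u\in C_i\}$ by noting that every element of $C_i^\phi$ has the form $(u,\phi(u))$ and the trace of a concatenation is the concatenation of the traces.

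Next I would unpack the star-product containment $\textnormal{tr}(C_1^\phi)\star\textnormal{tr}(C_1^\phi)\subseteq\textnormal{tr}(C_2^\phi)^\perp$ coordinate by coordinate. An element of $\textnormal{tr}(C_1^\phi)\star\textnormal{tr}(C_1^\phi)$ is spanned by vectors $(\textnormal{tr}(x)\star\textnormal{tr}(y),\ \textnormal{tr}(\phi(x))\star\textnormal{tr}(\phi(y)))$ for $x,y\in C_1$, and an element of $\textnormal{tr}(C_2^\phi)$ has the form $(\textnormal{tr}(z),\textnormal{tr}(\phi(z)))$ for $z\in C_2$. Their standard inner product in $\mathbb{F}_2^{2n}$ splits as a sum over the two halves:
\begin{equation*}
\langle \textnormal{tr}(x)\star\textnormal{tr}(y),\,\textnormal{tr}(z)\rangle + \langle \textnormal{tr}(\phi(x))\star\textnormal{tr}(\phi(y)),\,\textnormal{tr}(\phi(z))\rangle.
\end{equation*}
Using the cyclicity of the star product, $\langle a\star b, c\rangle = \langle a\star b\star c,\mathbf 1\rangle = \omega^{\text H}(a\star b\star c)\bmod 2$ for binary vectors, so each inner product equals the weight modulo $2$ of the corresponding triple star product. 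Hence the containment is equivalent to
\begin{equation*}
\omega^{\text H}(\textnormal{tr}(x)\star\textnormal{tr}(y)\star\textnormal{tr}(z)) + \omega^{\text H}(\textnormal{tr}(\phi(x))\star\textnormal{tr}(\phi(y))\star\textnormal{tr}(\phi(z))) \equiv 0 \pmod 2
\end{equation*}
for all $x,y\in C_1$ and $z\in C_2$, which is exactly the stated condition. (One should also note that it suffices to check the condition on a basis, or equivalently on all triples, since both sides are additive in the appropriate sense — this needs a brief remark but is routine, using that $C_2\subseteq C_1$ so the roles of the arguments are consistent with the span.)

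The main obstacle I anticipate is the first step: making sure that the trace of the doubled code is precisely the doubled trace code, and more generally that the star-product and duality manipulations transfer cleanly between $\mathbb{F}_{2^s}^{2n}$ and $\mathbb{F}_2^{2n}$ without losing or gaining codewords. Once $\textnormal{tr}(C_i^\phi)$ is identified, the rest is a direct translation of the binary argument used in~\cite{goodcsst}, with $\textnormal{tr}(C_i)$ playing the role that $C_i$ played there, justified by Theorem~\ref{CSST_cond-q}. It is worth remarking that when $s=1$ the trace map is the identity and the statement collapses to the original condition of~\cite{goodcsst}, which is a useful consistency check to include.
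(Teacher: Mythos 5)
Your proposal is correct and takes essentially the same route as the paper: reduce via Theorem~\ref{CSST_cond-q} to the containment $\textnormal{tr}(C_1^\phi)\star\textnormal{tr}(C_1^\phi)\subseteq\textnormal{tr}(C_2^\phi)^\perp$, observe that $\textnormal{tr}(C_i^\phi)=\{(\textnormal{tr}(u),\textnormal{tr}(\phi(u)))\mid u\in C_i\}$ because the trace acts coordinatewise, split the inner product over the two halves, and identify each binary inner product with the Hamming weight of the triple star product modulo $2$. The only difference is that you explicitly flag two points the paper leaves implicit — that the trace of the doubled code is exactly the doubled trace data, and that it suffices to check the containment on spanning vectors of the form $(\textnormal{tr}(x),\textnormal{tr}(\phi(x)))$ — both of which are correct and resolve exactly as you say; so this is a slightly more careful write-up of the same argument rather than a different one.
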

\begin{proof}
   Suppose $(C_1^\phi,C_2^\phi)$ is a CSS-$T$ pair. By Theorem \ref{CSST_cond-q}, this is equivalent to 
    \[\text{tr}(C_1^\phi)\star \text{tr}(C_1^\phi)\subseteq\left(\text{tr}(C_2^\phi)\right)^\perp, \]
    i.e., for all $x,y \in C_1$, $z\in C_2$, 
    \begin{align*}
        0 & =\langle \tr(x) \star \tr(y), \tr(z) \rangle + \langle \tr(\phi(x))\star\tr(\phi(y)), \tr(\phi(z) \rangle   \\
        & = \sum_{i=1}^n{\tr(x_i)\tr(y_i)\tr(z_i)}+\sum_{i=1}^n{\tr(\phi(x)_i)\tr(\phi(y)_i)\tr(\phi(z)_i)}.
    \end{align*}   
    Since these are binary vectors, this is equivalent to 
    \begin{equation*}
        \omega^\text{H}(\textnormal{tr}(x)\star \textnormal{tr}(y)\star \textnormal{tr}(z))+\omega^\text{H}(\tr(\phi(x))\star \tr(\phi(y))\star \tr(\phi(z))=0. \qedhere
    \end{equation*}
\end{proof}
\begin{remark}
   The condition on $\phi$ is clearly satisfied when $\phi: C_1\hookrightarrow \mathbb F_q^n$ is the canonical embedding.
\end{remark}

It is not clear whether the identity map is the optimal map with respect to the parameters $[[n,k,d]]$. It remains an open question which map $\phi$ yields the best code parameters. This length-doubling procedure produces even $\text{tr}(C_1)$-codes, therefore implementing the logical identity operator. It remains to be seen whether we can create CSS-$T$ codes that map $T^{\otimes n}$ to logical $T$-gates via some map $\phi$.

Nevertheless, using the identity map allows us to produce an asymptotically good sequence of CSS-$T$ codes.

\begin{corollary}
    There exist asymptotically good sequences of CSS-$T$ codes over $\F_{2^s}$ for any $s\ge 1$.
\end{corollary}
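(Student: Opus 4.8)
The plan is to reduce the existence of asymptotically good CSS-$T$ sequences over $\mathbb{F}_{2^s}$ to the existence of asymptotically good quantum LDPC codes over $\mathbb{F}_2$, via the length-doubling construction $C\mapsto C^\phi$ with $\phi$ the canonical embedding. First I would invoke Theorem~\ref{theorem:AGqary} (or, since we only need the binary base field here, the corresponding binary result of Panteleev--Kalachev) to obtain an explicit family of CSS pairs $(C_1^{(i)}, C_2^{(i)})$ over $\mathbb{F}_2^{n_i}$ with $n_i\to\infty$, rate $k_i/n_i\ge R>0$, and minimum distance $d_i=\Theta(n_i)$. The point is that these are bona fide binary codes, so applying the length-doubling map with $\phi$ the identity embedding automatically satisfies the condition of the preceding theorem: when the entries already lie in $\mathbb{F}_2$, the trace map acts as the identity, and $\mathrm{tr}(x)\star\mathrm{tr}(y)\star\mathrm{tr}(z)=\mathrm{tr}(\phi(x))\star\mathrm{tr}(\phi(y))\star\mathrm{tr}(\phi(z))$ coordinatewise, so the two Hamming weights in the congruence are equal and their sum is $0\bmod 2$. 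Hence each $(C_1^{\phi,(i)}, C_2^{\phi,(i)})$ over $\mathbb{F}_2^{2n_i}$ is a CSS-$T$ pair, and by viewing $\mathbb{F}_2\subseteq\mathbb{F}_{2^s}$ we may extend scalars (tensor with $\mathbb{F}_{2^s}$) to get a CSS-$T$ pair over $\mathbb{F}_{2^s}^{2n_i}$ — one checks the CSS-$T$ characterisation of Theorem~\ref{CSST_cond-q} is preserved since $\mathrm{tr}(D\otimes_{\mathbb{F}_2}\mathbb{F}_{2^s}) = D$ for a binary code $D$.

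The second part is the parameter bookkeeping: I must verify that the doubled (and scalar-extended) family is still asymptotically good. The length is $2n_i\to\infty$. For the dimension, $C_j^\phi\cong C_j$ as the projection onto the first $n$ coordinates is an isomorphism, so $\dim C_j^\phi = \dim C_j$ and the quantum dimension $k_i^{\mathrm{CSS}\text{-}T} = \dim C_1^\phi - \dim C_2^\phi = k_i$ is unchanged, while the length doubles; thus the rate $k_i/(2n_i)\ge R/2>0$ stays bounded away from zero. For the distance, a nonzero logical codeword of $(C_1^\phi, C_2^\phi)$ corresponds to a coset representative of weight at least the CSS distance; since $(x,\phi(x))$ has weight at least that of $x$, and the relevant cosets still come from the original codes, the distance is at least $d_i = \Theta(n_i) = \Theta(2n_i)$, so $\delta_i = d_i/(2n_i)$ is bounded away from zero. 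Extension of scalars does not decrease the minimum distance of a linear code (a minimum-weight word of the extended code, scaled by the inverse of a nonzero entry, lands in the original code with the same support), so these bounds survive over $\mathbb{F}_{2^s}$. Finally, the LDPC property: the starting family is LDPC, and the parity-check matrix of $C_j^\phi$ can be taken to have bounded row and column weights whenever the parity-check matrix of $C_j$ does (adding the block for the $\phi$-constraint contributes only a bounded amount per row since $\phi$ is the identity), and this sparsity is unaffected by extending scalars.

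The main obstacle I anticipate is not any single hard step but making the distance and LDPC claims for $C^\phi$ precise: one needs to pin down exactly which quantum code (Calderbank--Shor vs.\ Steane picture) the pair $(C_1^\phi, C_2^\phi)$ defines and confirm that its distance is controlled below by the distance of the original $(C_1, C_2)$ — intuitively clear because the first $n$ coordinates already ``see'' all of $C_j$, but it deserves a clean argument rather than hand-waving. Since this exact verification was already carried out in~\cite{goodcsst} for the binary case and the scalar-extension step is routine, I would cite that work for the CSS-$T$-ness and distance preservation and devote the bulk of the write-up to the (short) scalar-extension argument and the rate/length accounting above.
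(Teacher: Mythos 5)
Your argument is correct, but it takes a genuinely different route from the paper's. The paper applies the length-doubling map $C\mapsto C^\phi$ with $\phi=\mathrm{id}$ \emph{directly} to a sequence of asymptotically good LDPC CSS codes over $\mathbb{F}_{2^s}$ supplied by Theorem~\ref{theorem:AGqary} (which is stated for every finite field, not just $\mathbb{F}_2$), and the preceding theorem on $\phi$ immediately certifies the doubled pair as CSS-$T$ with parameters $[[2n,k,\geq d]]$; no change of base field is ever needed. You instead instantiate the binary case, run the length-doubling of~\cite{goodcsst} over $\mathbb{F}_2$, and then extend scalars to $\mathbb{F}_{2^s}$, verifying the CSS-$T$ property of the extension via the observation $\mathrm{tr}(D\otimes_{\mathbb{F}_2}\mathbb{F}_{2^s})=D$ for a binary code $D$, which reduces the condition of Theorem~\ref{CSST_cond-q} to the already-established binary condition $D_1\star D_1\subseteq D_2^\perp$. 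That reduction is sound: the trace identity holds (inclusion one way from $\mathbb{F}_2$-linearity of $\mathrm{tr}$, the other way by surjectivity of $\mathrm{tr}$), scalar extension preserves dimension exactly and preserves CSS distance by the support-union argument (a nonzero $c\in D\otimes\mathbb{F}_{2^s}$ decomposes as $c=\sum_i\lambda_ic_i$ along an $\mathbb{F}_2$-basis $\{\lambda_i\}$ with $\mathrm{supp}(c)=\bigcup_i\mathrm{supp}(c_i)$, and at least one $c_i$ lies outside $D_2$ whenever $c$ does), and a binary parity-check matrix for $D$ remains a parity-check matrix for $D\otimes\mathbb{F}_{2^s}$, so sparsity is untouched. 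What each route buys: the paper's is shorter and never leaves the $q$-ary setting, exploiting the full generality of Panteleev--Kalachev; yours requires only the binary input and makes explicit that binary CSS-$T$ codes plus a trivial change of scalars already populate every $\mathbb{F}_{2^s}$, which is a modestly illuminating structural remark even if it adds a step. If you write this up, the one place to be precise rather than to wave hands is the claim that the CSS distance of $(C_1^\phi,C_2^\phi)$ is at least $d(C_1,C_2)$; that is verified in~\cite{goodcsst}, which you correctly plan to cite, so you are fine.
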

\begin{proof}
    If $(C_1,C_2)$ is a CSS pair with parameters $[[n,k,d]]$ over $\F_{2^s}$, then using the map $\phi=\text{id}$ to extend it to a CSS-$T$ code provides a code with parameters $[[2n,k,\geq d]]$. A sequence of CSS pairs with rate and relative minimum distance
    \begin{equation*}
    \rho^\text{CSS}=\limsup_{n\to\infty} \frac{k}{n} >0,\quad\delta^\text{CSS}=\limsup_{n\to\infty} \frac{d}{n} >0
    \end{equation*}
    can therefore be transformed into a sequence of CSS-$T$ pairs with rate and relative minimum distance
    \begin{equation*}
    \rho^\text{CSS-$T$}=\limsup_{n\to\infty} \frac{k}{2n} = \frac{\rho}{2} >0,\quad\delta^\text{CSS-$T$} \geq \limsup_{n\to\infty} \frac{d}{2n} =\frac{\delta}{2} > 0.
    \end{equation*}
    Let $\{C_j^\text{CSS}\subseteq\mathbb{F}_{2^s}^{n_j}\}_{j\in\mathbb{N}}$ be a sequence of asymptotically good LDPC CSS codes with asymptotic rate $R\in (0,1)$, whose existence is guaranteed by Theorem \ref{theorem:AGqary}. Then using the lengthening procedure induced by $\phi$, we produce an asymptotically good sequence of LDPC CSS-$T$ codes $\{C_j^\text{CSS-$T$}\subseteq\mathbb{F}_{2^s}^{2n_j}\}_{j\in\mathbb{N}}$ of asymptotic rate $R'\in(0,\frac{1}{2})$.
\end{proof}

\bibliography{Bibliography}

\end{document}